\author{Anirban Majumdar}{Independent Researcher \and \url{https://anirban11.github.io}}{majumdaranirban963@gmail.com}{https://orcid.org/0000-0003-4793-1892}{}
\author{Sayan Mukherjee}{Univ Rennes, Inria, CNRS, IRISA, France \and \url{https://mukherjee-sayan.github.io}}{sayan.mukherjee@irisa.fr}{https://orcid.org/0000-0001-6473-3172}{}
\author{Jean-François Raskin}{Université Libre de Bruxelles, Belgium \and \url{https://verif.ulb.ac.be/jfr/}}{jean-francois.raskin@ulb.be}{https://orcid.org/0000-0002-3673-1097}{}
\authorrunning{Anirban Majumdar, Sayan Mukherjee and Jean-François Raskin} 
\keywords{Automata learning, Passive learning, Event-recording Automata} 
\title{Learning Event-recording Automata Passively}
\begin{document}
\maketitle

\begin{abstract}
This paper presents a state-merging algorithm for learning timed languages definable by Event-Recording Automata (ERA) using positive and negative samples in the form of symbolic timed words.  Our algorithm, $\leap$ (Learning Event-recording Automata Passively), constructs a possibly nondeterministic ERA from such samples based on merging techniques. We prove that determining whether two ERA states can be merged while preserving sample consistency is an $\np$-complete problem, and address this with a practical SMT-based solution. Our implementation demonstrates the algorithm's effectiveness through examples. We also show that every ERA-definable language can be inferred using our algorithm with a suitable sample.
\end{abstract}


\section{Introduction}
\label{sec:intro}

Formal modeling is essential in fields like requirements engineering and computer-aided verification, where it supports the analysis of complex systems. However, creating formal models is labor-intensive and prone to error. Learning algorithms can streamline this process by deriving models from execution scenario descriptions~\cite{DBLP:journals/cacm/Vaandrager17}. When the set of possible behaviors of a system aligns with a regular language, it can be represented by a finite state automaton, and passive and active learning algorithms and efficient implementations are available to derive the minimal Deterministic Finite Automaton (DFA) describing this language. In contrast, learning timed languages remains underdeveloped.
In this paper, we contribute to this line of research by providing a passive learning algorithm, and an implementation, that transforms sets of positive and negative \emph{symbolic timed words} into \emph{Event-Recording Automata} (ERA, for short)~\cite{AFH99}. 

The class of Event-Recording Automata (ERA) is a subclass of Timed Automata (TA)~\cite{AD94}, where each clock $x_{\sigma}$ is linked to an event $\sigma \in \Sigma$ and records the time since its last occurrence. ERA offer key advantages over TA: they are determinizable and complementable, making them suitable for specification in verification tasks, as inclusion checking is decidable for ERA but not for TA. Additionally, clocks in ERA are directly tied to events and reset automatically with each event occurrence, enhancing {\em interpretability} compared to TA and making ERA appealing for learning applications where trust is important. Further, it is a folklore result that ERA combined with homomorphism is equally expressive as TA. This means, every system that can be modelled as a timed automaton can also be modelled as an ERA, by increasing the alphabet. 

Before detailing our contributions, let us consider an example that illustrates how formal specifications, in the form of ERA, can be learned from {\em symbolic} scenarios, as advocated in~\cite{holbrook1990scenario}, and expressed here as {\em symbolic timed words}.

\medskip
\noindent\textbf{Learning ERA from symbolic timed words.}
Let us consider the task for a \emph{requirement engineer} (RE) to write a formal model in the form of an ERA, for the following informal requirement: {\it whenever a (button) \press\ is followed by another (button) \press\ within $1$ time unit, then an \alarm\ must happen immediately.}

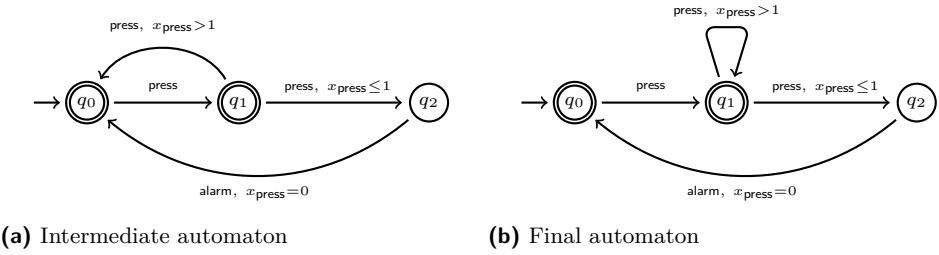
\begin{figure}[t]
    \centering
    \begin{subfigure}{.45\textwidth}
    \centering
        \begin{tikzpicture}
        \everymath{\scriptstyle}
        \begin{scope}[every node/.style={circle, draw, inner sep=2pt,
            minimum size = 3mm, outer sep=3pt, thick}] 
            \node [double] (0) at (0,0) {$q_0$}; 
            \node [double] (1) at (2,0) {$q_1$};
            \node [] (2) at (4.5,0) {$q_2$};
        \end{scope}
        \begin{scope}[->, thick]
            \draw (-0.7,0) to (0); 
            \draw (0) to (1);
            \draw (1) to (2);
            \draw [bend right = 60] (1) to (0);
            \draw [bend left = 40] (2) to (0);
        \end{scope}
        \node at (1,0.2) {\scriptsize $\press$};
        \node at (1,1) {\scriptsize $\press,~x_\press > 1$};
        \node at (3.3,0.2) {\scriptsize $\press,~x_\press \leq 1$};
        \node at (2.2,-1.2) {\scriptsize $\alarm,~x_\press = 0$};
    \end{tikzpicture}
    \caption{Intermediate automaton}
    \label{fig:motivation-1}
    \end{subfigure}
    \begin{subfigure}{.45\textwidth}
    \centering
        \begin{tikzpicture}
        \everymath{\scriptstyle}
        \begin{scope}[every node/.style={circle, draw, inner sep=2pt,
            minimum size = 3mm, outer sep=3pt, thick}] 
            \node [double] (0) at (0,0) {$q_0$}; 
            \node [double] (1) at (2,0) {$q_1$};
            \node [] (2) at (4.5,0) {$q_2$};
        \end{scope}
        \begin{scope}[->, thick]
            \draw (-0.7,0) to (0); 
            \draw (0) to (1);
            \draw (1) to (2);
            \draw [rounded corners] (1) to (1.7,1) to (2.3, 1) to (1);
            \draw [bend left = 40] (2) to (0);
        \end{scope}
        \node at (1,0.2) {\scriptsize $\press$};
        \node at (2,1.2) {\scriptsize $\press,~x_\press > 1$};
        \node at (3.3,0.2) {\scriptsize $\press,~x_\press \leq 1$};
        \node at (2.2,-1.2) {\scriptsize $\alarm,~x_\press = 0$};
    \end{tikzpicture}
    \caption{Final automaton}
    \label{fig:motivation-2}
    \end{subfigure}
    \caption{Motivation for choosing symbolic words as specification}
    \label{fig:motivation}
\end{figure}

We demonstrate how the RE can use our learning algorithm to obtain the corresponding ERA. The RE must provide positive and negative examples in the form of scenarios. Instead of requiring explicit timed words, i.e., $(\sigma_1,t_1)(\sigma_2,t_2)\dots(\sigma_n,t_n)$, which are sequences of events (e.g., {\sf press} or {\sf alarm}) with timestamps $t_i \in \mathbb{R}_{\geq 0}$, we allow more abstract examples in the form of \emph{symbolic timed words}. These are sequences of the form $(\sigma_1,g_1)(\sigma_2,g_2)\dots(\sigma_n,g_n)$, where $\sigma_i$ are events and $g_i$ are conjunctions of constraints of the form $x_{\sigma_i} \in I$. Here, $x_{\sigma_i}$ is the event-recording clock associated with $\sigma_i$, and $I$ is an interval with non-negative integer bounds. Symbolic timed words provide higher-level information, which is natural for requirements elicitation.

Consider a positive example: a \press\ occurs, followed by another \press\ within 1 time unit, and then an \alarm\ happens immediately. This translates into the symbolic timed word $(\press,\top)(\press, x_{\press} \leq 1)(\alarm, x_{\press} = 0)$.
A second example is: if a \press\ occurs followed by another \press\ after 1 time unit, and nothing happens afterwards, it is acceptable. This is represented as $(\press, \top)(\press, x_{\press} > 1)$. Additionally, trivial scenarios include: if nothing happens, it is acceptable (i.e., $\varepsilon$ is a positive example), and if a \press\ occurs followed by nothing, that is also fine (i.e., $(\press, \top)$ is a positive example).

The RE must also provide some {\em undesired} scenarios as negative samples to avoid the learning algorithm producing a one-state universal automaton that allows all scenarios. A first negative example could be: a \press\ followed by another \press\ after 1 time unit and then an \alarm\ happens. This is specified as $(\press,\top)(\press, x_{\press} > 1)(\alarm, x_{\press} \geq 0)$. Another negative example is a \press\ followed by another \press\ within 1 time unit and no alarm afterward, formalized by $(\press, \top)(\press, x_{\press} \leq 1)$. Also, an \alarm\ should not occur immediately after the first \press, which can be specified by $(\press, \top)(\alarm, x_{\press} \geq 0)$.

Using these samples, our algorithm will generate the ERA in Figure~\ref{fig:motivation-1}. By inspecting this ERA, the RE may discover that the model allows a scenario where a \press\ is followed by another \press\ after $1$ time unit, then followed by a \press\ immediately, which is not desired. This can be provided as a negative word $(\press, \top)(\press, x_\press > 1)(\press, x_\press = 0)$. The algorithm will then compute the ERA in Figure~\ref{fig:motivation-2}, which correctly models the target specification.

This example demonstrates that using symbolic timed words, and not plain timed words, is natural in the process of {\em requirements elicitation} and that our learning algorithm effectively helps the RE generalize natural scenarios that are easy to formulate.

\medskip
\noindent\textbf{Context and contributions.} 
This paper focuses on passive learning from positive and negative examples, a concept pioneered by Gold~\cite{DBLP:journals/iandc/Gold67,DBLP:journals/iandc/Gold78}, who defined conditions under which a language from a given class can be identified based on sample data. Building on Gold's insights, the {\em Regular Positive and Negative Inference} (RPNI) algorithm~\cite{oncina1992} emerged as a key technique for identifying regular languages from examples, using state-merging techniques. RPNI starts with positive and negative examples, treating each positive prefix as a distinct state, then progressively merges states that are compatible with the samples. For regular languages, this approach ensures the result is a DFA consistent with the examples and converges to the correct language given enough data. Passive learning has applications in synthesis, such as serving as a subroutine in frameworks like SyGuS~\cite{SyGus} or allowing users to provide guiding examples to simplify synthesis outputs, as suggested in~\cite{DBLP:conf/tacas/BalachanderFR23}.
\smallskip

Our main technical contributions are threefold. First, we propose an algorithm to infer Event-Recording Automata (ERA) compatible with specific sets of positive and negative symbolic timed words (Section~\ref{sec:algo}), providing a new algorithm for passive learning for timed systems from high level symbolic examples, and not from plain timed words. However, this requires addressing a computational challenge: 
as a second contribution, we establish that determining whether a symbolic timed word intersects the language of an ERA, even if deterministic, is an $\np$-complete problem (Section~\ref{sec:complexity}). This problem is essential to our state-merging algorithm for deciding permissible state merges, and we address this complexity issue using a reduction to SMT. Our third contribution proves that {\leap} is \emph{complete}: for every ERA-recognizable language, there exist finite characteristic sets of positive and negative examples such that our algorithm returns an ERA accurately representing the language (Section~\ref{sec:leap-completeness-sdera}). We achieve this via a novel language-theoretic approach, showing that the final automaton’s language aligns with the target language, a method potentially extendable to other models without unique minimal automata. We have implemented {\leap} in a \textsc{Python} prototype, demonstrating its effectiveness on various examples (Section~\ref{sec:experiments}). Definitions and notations appear in Section~\ref{sec:prelims}.

\medskip
\noindent\textbf{Related works.} 
Several methods have been proposed for inferring ERA in different frameworks. Works such as~\cite{GJL10,GJP06,LADSL11} introduce {\em active} learning algorithms for ERA-recognizable languages, while a recent contribution~\cite{ATVA2024} adapts a separability-based algorithm to infer ERA with a minimal number of control states. Other efforts target more expressive models: for example, \cite{HJM20} infers automata from an extended ERA class, while \cite{Waga23,DTA-hscc} learn the full class of Deterministic Timed Automata. Additional algorithms target subclasses of automata, such as One-clock Timed Automata~\cite{octa,VWW09}, Real-time Automata~\cite{AWZZZ21}, and Mealy Machines with Timers~\cite{VEB23,BGPSV24}. 
Being in the active learning framework, all these algorithms assume access to a teacher, who answers queries that progressively refine the target automaton. For instance, algorithms in~\cite{GJP06,HJM20} use state-merging with tree-based data structures that represent membership queries. They apply restricted merging criteria requiring that the tree is complete up to a certain depth to cover the target language. Also, all those techniques maintain trees and automata that are deterministic. In contrast, our approach addresses a fixed sample of words and allows more flexible merging, yielding potentially smaller non-deterministic automata (more details in Section~\ref{subsec:comparison-state-merging}).  Similarly, $L^{\#}$~\cite{Lsharp}, an active learning algorithm for regular languages, merges states based on apartness but relies on sufficient data about the target language, making it not directly applicable to passive learning. 

In the {\em passive} learning framework, fewer works exist~\cite{VWW10,PMM17,CLRT22}, where subclasses of TA are inferred from {\em timed words}.
Also, the models that are considered are different. Of these, \cite{VWW10,CLRT22} target Deterministic Real-time Automata, a weaker model than ERA, while \cite{PMM17,CLRT22} only consider positive data, with~\cite{VWW10} incorporating both positive and negative samples. These works model systems from logs, typically in the form of timed words, with \cite{TALL19} proposing a Genetic programming approach for TA.
An SMT-based algorithm for inferring TA has also been studied in~\cite{TAL22}.
While timed words are well-suited for learning from execution logs and extracting models from software treated as a black box, they are less appealing for our intended use case. Here, a requirements engineer (RE) aims to craft an ERA based on scenarios. Our approach uses symbolic words (positive and negative) as input, enabling the RE to provide a compact set of scenarios and construct smaller automata (see Section~\ref{sec:experiments} for empirical evidence of this).


\section{Preliminaries}
\label{sec:prelims}

\noindent\textbf{Timed words and timed languages.}
A \emph{timed word} over an alphabet $\Sigma$ is a finite sequence $(\sigma_1,t_1)(\sigma_2,t_2)\dots (\sigma_n,t_n)$ where each $\sigma_i \in \Sigma$ and $t_i \in \mathbb{R}_{\geq 0}$, for all $1 \leq i \leq n$, and for all $1 \leq i < j \leq n$, $t_i \leq t_j$ (time is monotonically increasing). We use $\mathsf{TW}(\Sigma)$ to denote the set of all timed words over the alphabet $\Sigma$.

A \emph{timed language} is a (possibly infinite) set of timed words.
Timed Automata (TA)~\cite{AD94} extend deterministic finite-state automata with \emph{clocks}. In what follows, we will 
use a subclass of TA, where clocks have a pre-assigned semantics and are not reset arbitrarily. This class is known as the class of Event-recording Automata (ERA)~\cite{AFH99}.
We now define the necessary vocabulary and notations for their definition. 
\medskip

\noindent\textbf{Constraints.} A \emph{clock} is a non-negative real valued variable, that is, a variable ranging over $\mathbb{R}_{\geq 0}$.
Let $K$ be a positive integer.
A \emph{$K$-atomic constraint} over a clock $x$, is an expression of the form $x = c$, $x \in (c,d)$ or $x > K$ where $c,d \in \mathbb{N} \cap [0,K]$.
We also consider a \emph{trivial} $K$-atomic constraint $\top$. 
A \emph{$K$-constraint} over a set of clocks $X$ is a conjunction of $K$-atomic constraints over clocks in $X$.

A \emph{$K$-elementary constraint} over a clock $x$, is an atomic constraint where the interval is restricted to unit intervals; more formally, it is an expression of the form $x = c$, $x \in (c,c+1)$ or $x > K$ where $c,c+1 \in \mathbb{N} \cap [0,K]$. 
A \emph{$K$-simple constraint} over $X$ is a conjunction of $K$-elementary constraints over clocks in $X$, 
where each variable $x \in X$ appears exactly in one conjunct. 
The definition of simple constraints also appear in earlier works~\cite{GJL10}.

For example,
    let $X = \{x_1, x_2, x_3\}$ be a set of clocks,
    and let $K = 2$.
     Then, $0 < x_1 < 2$ (that is, $x_1 \in (0,2)$) is a $K$-atomic constraint, but not a $K$-elementary constraint since the interval is not a unit interval; 
    $1 < x_1 < 2$ (that is, $x_1 \in (1,2)$) is an example of a $K$-elementary constraint;
     $x_1 > 2 \wedge 1 < x_3 < 2$ is a $K$-constraint, but not a $K$-simple constraint since the clock $x_2$ does not appear in this constraint; 
    $x_1 > 2 \wedge 1 < x_3 < 2 \wedge x_2 = 2$ is a $K$-simple constraint.
\medskip

\noindent\textbf{Valuation and satisfaction of constraints.}
A \emph{valuation} for a set of clocks $X$ is a function ${v \colon X \rightarrow \mathbb{R}_{\geq 0}}$.
A valuation $v$ for $X$ satisfies a $K$-atomic constraint $\psi$, written as $v \models \psi$, if: when every clock $x$ appearing in $\psi$ is replaced with its value $v(x)$, the expression evaluates to True.
Every valuation satisfies $\top$.

A valuation $v$ satisfies a $K$-constraint over $X$ if $v$ satisfies all its conjuncts.
Given a $K$-constraint $\psi$, $\sem{\psi}$ will denote the set of all valuations $v$ such that $v \models \psi$.
It is easy to verify that for any $K$-simple constraint $r$ and any $K$-elementary constraint $\psi$, either $\sem{r} \cap \sem{\psi}=\emptyset$ or $\sem{r} \subseteq \sem{\psi}$.
W.l.o.g., in this work, we will only consider $K$-constraints $g$ s.t. $\sem{g} \neq \emptyset$.

\begin{lemma}
    \label{lem:rws-dont-intersect}
    Let $K$ be a positive integer. Then
    for any two $K$-simple constraints $r_1$ and $r_2$, either $\sem{r_1} = \sem{r_2}$ or $\sem{r_1} \cap \sem{r_2} = \emptyset$.
\end{lemma}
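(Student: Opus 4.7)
The plan is to exploit the fact that $K$-elementary constraints on a single clock carve $\mathbb{R}_{\geq 0}$ into a fixed partition. Concretely, consider the family
\[
\mathcal{P}_K \;=\; \bigl\{\,\{0\},\,(0,1),\,\{1\},\,(1,2),\,\{2\},\,\dots,\,\{K\},\,(K,\infty)\,\bigr\},
\]
which is a partition of $\mathbb{R}_{\geq 0}$. By inspecting the three allowed forms ($x = c$, $x \in (c,c+1)$, $x > K$ with $c, c+1 \in \mathbb{N}\cap[0,K]$), I can verify that for every $K$-elementary constraint $\psi$ over a clock $x$, the set of values of $v(x)$ satisfying $\psi$ is exactly one of the cells of $\mathcal{P}_K$. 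Consequently, for any two $K$-elementary constraints $\psi_1, \psi_2$ over the same clock, their value sets are either identical (same cell) or disjoint (different cells of a partition). This is the single non-trivial observation; the rest is bookkeeping.

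Next, I would lift this per-clock dichotomy to $K$-simple constraints. By definition, each $K$-simple constraint $r_i$ can be written uniquely as $r_i = \bigwedge_{x \in X} \psi_{i,x}$, where $\psi_{i,x}$ is the unique $K$-elementary conjunct of $r_i$ involving clock $x$. A valuation $v$ lies in $\sem{r_i}$ iff $v(x)$ lies in the cell of $\mathcal{P}_K$ picked out by $\psi_{i,x}$, for every $x \in X$. Thus $\sem{r_i}$ is a ``box'' $\prod_{x \in X} C_{i,x}$ in $\mathbb{R}_{\geq 0}^X$, where $C_{i,x} \in \mathcal{P}_K$.

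I then finish with a case split on the pair $(r_1, r_2)$. If $C_{1,x} = C_{2,x}$ for every $x \in X$, then $\sem{r_1} = \sem{r_2}$. Otherwise, there is some clock $x^\star$ with $C_{1,x^\star} \neq C_{2,x^\star}$, and since distinct cells of $\mathcal{P}_K$ are disjoint, $C_{1,x^\star} \cap C_{2,x^\star} = \emptyset$. Any $v \in \sem{r_1} \cap \sem{r_2}$ would have to satisfy $v(x^\star) \in C_{1,x^\star} \cap C_{2,x^\star}$, which is impossible; hence $\sem{r_1} \cap \sem{r_2} = \emptyset$. The main (and only real) obstacle is the elementary verification that each form of $K$-elementary constraint corresponds to a cell of $\mathcal{P}_K$; everything downstream is a clean product-set argument.
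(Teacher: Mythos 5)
Your proof is correct and is exactly the partition-into-cells argument that the paper treats as immediate (it states this lemma without proof, right after the analogous observation for a simple constraint versus an elementary constraint). The verification that each of the three allowed forms $x=c$, $x\in(c,c+1)$, $x>K$ picks out one cell of $\mathcal{P}_K$, together with the product-set case split, is all that is needed.
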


Let $\Sigma=\{\sigma_1,\sigma_2,\dots,\sigma_k\}$ be a finite \emph{alphabet}. The set of all \emph{event-recording clocks} associated with $\Sigma$ is denoted by $X_{\Sigma}=\{ x_{\sigma} \mid \sigma \in \Sigma\}$. We denote by $\SC{\Sigma, K}$ the set of all $K$-constraints over the set of clocks $X_{\Sigma}$ and we use $\SimC{\Sigma, K}$ to denote the set of all $K$-simple constraints over the clocks in $X_{\Sigma}$. 
Since $K$-simple constraints are also $K$-constraints, we have that $\SimC{\Sigma, K} \subset \SC{\Sigma, K}$.
We will omit $K$ when not important, or if it is clear from context, and denote the above sets by $\SimC{\Sigma}$ and $\SC{\Sigma}$, respectively.
\medskip

\noindent\textbf{Event-recording Automata.}
A $K$-Event-recording Automaton ($K$-ERA) $\A$ is a tuple $(Q,q_{\mathsf{init}},\Sigma,E,F)$ where $Q$ is a (non-empty) finite set of states, $q_{\sf init} \in Q$ is the initial state, $\Sigma$ is a finite alphabet, $E \subseteq Q \times \Sigma \times \SC{\Sigma, K} \times Q$ is the set of transitions, and $F \subseteq Q$ is the set of accepting states. Each transition in $\A$ is a tuple $(q, \sigma, g, q')$, where $q, q' \in Q$, $\sigma \in \Sigma$ and $g \in \SC{\Sigma, K}$, $g$ is called the \emph{guard} of the transition.
$\A$ is called a $K$-\emph{deterministic}-ERA ($K$-DERA) if for every state $q \in Q$ and every letter $\sigma$, if there exist two transitions $(q, \sigma, g_1, q_1)$ and $(q, \sigma, g_2, q_2)$ then $\sem{g_1} \cap \sem{g_2} = \emptyset$.
Again, we will not mention $K$ unless necessary.

\begin{figure}[t]
    \centering
    \begin{tikzpicture}
        \begin{scope}[every node/.style={circle, draw, inner sep=2pt,
            minimum size = 3mm, outer sep=3pt, thick}] 
            \node [double] (0) at (0,0) {$q_0$}; 
            \node [double] (1) at (3,0) {$q_1$};
            \node [double] (2) at (6,0) {$q_2$};
        \end{scope}
        \begin{scope}[->, thick]
            \draw (-0.7,0) to (0); 
            \draw (0) to (1);
            \draw (1) to (2);
            \draw [bend right = 60] (2) to (1);
        \end{scope}
        \node at (1.5,0.2) {\scriptsize $a$};
        \node at (4.5,0.2) {\scriptsize $b,~x_a = 1$};
        \node at (4.5,1.2) {\scriptsize $a,~x_b \leq 1$};
    \end{tikzpicture}
    \caption{A deterministic ERA describing the timed language, where every $a$ is followed by a $b$ after exactly $1$ time unit and every $b$ is followed by an $a$ within $1$ time unit. Here and in the successive figures, we do not write the guard $\top$.} 
    \label{fig:dera}
\end{figure}
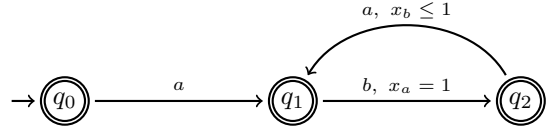

For the semantics, initially, all the clocks of an ERA start with the value $0$ and then they all increase with rate $1$. For every transition on a letter $\sigma$, once the transition is taken, its corresponding recording clock $x_\sigma$ gets reset to the value $0$. 

\medskip
\noindent\textbf{Clocked words.} \label{paragraph:clocked-words}
Given a timed word ${\sf tw}=(\sigma_1,t_1)(\sigma_2,t_2)\dots (\sigma_n,t_n)$ over $\Sigma$, we associate with it a \emph{clocked word} 
\label{timed-to-clocked-word}
$\mathsf{cw}(\mathsf{tw})=(\sigma_1,v_1)(\sigma_2,v_2)\dots (\sigma_n,v_n)$ where each $v_i : X_{\Sigma} \rightarrow \mathbb{R}_{\geq 0}$ maps each clock of $X_{\Sigma}$ to a real value as follows: $v_i(x_\sigma)=t_i-t_j$ where $j=\max\{ 1 \leq l < i \mid \sigma_l=\sigma \}$, with the convention that $\max(\emptyset)=0$ and $t_0 = 0$. 
In words, the value $v_i(x_\sigma)$ is the amount of time elapsed since the last occurrence of $\sigma$ in {\sf tw}; 
which is why we call the clocks $x_\sigma$ `recording' clocks. So, as mentioned earlier, even though not explicitly, each clock is implicitly reset immediately after every occurrence of its associated event.

\medskip
\noindent\textbf{The timed language of a $K$-ERA.}
A timed word $\mathsf{tw}=(\sigma_1,t_1)(\sigma_2,t_2)\dots (\sigma_n,t_n)$, with its clocked word 
${\sf cw}({\sf tw})=(\sigma_1,v_1)(\sigma_2,v_2)\dots (\sigma_n,v_n)$, is \emph{accepted} by $\A$ if there exists a sequence of states $q_0 q_1 \dots q_n$ of $\A$ such that $q_0=q_{\sf init}$, $q_n \in F$, and for all $1 \leq i \leq n$, there exists $e=(q_{i-1},\sigma,\psi,q_i) \in E$ such that $\sigma_i=\sigma$, and $v_i \models \psi$. 
The set of all timed words accepted by $\A$ is called the \emph{timed language} of $\A$, and will be denoted by $\ltw{\A}$. 

For example, the automaton $\Aa$ depicted in Figure~\ref{fig:dera}, accepts all the timed words where 
every $a$ is followed by a $b$ after exactly $1$ time-unit and every $b$ is followed by an $a$ within $1$ time unit.
In particular, $\A$ accepts $(a,2.3)(b,3.3)(a,3.4)$, but rejects $(a,2.3)(b,3.4)(a,3.4)$.

A timed language $L$ is $K$-ERA-definable if there exists a $K$-ERA $\Aa$ such that $\ltw{\Aa} = L$. Lemma~\ref{thm:era-equals-dera} implies that a timed language $L$ is $K$-ERA-definable iff it is also $K$-DERA-definable.

\begin{lemma}[\cite{AFH99}]
    \label{thm:era-equals-dera}
    For every $K$-ERA $\Aa$, there exists a $K$-DERA $\Aa'$ such that the two automata have the same timed language, $\ie$ $\ltw{\Aa} = \ltw{\Aa'}$. 
\end{lemma}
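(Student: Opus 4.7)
The plan is to give a subset construction adapted to the event-recording setting, analogous to the powerset construction for NFAs but made to work despite the presence of clock constraints on transitions. The central observation is the one already singled out in Lemma~\ref{lem:rws-dont-intersect}: any two $K$-simple constraints have either equal or disjoint semantics, and any non-empty $K$-constraint $g$ can be expressed as a disjoint union of (semantics of) finitely many $K$-simple constraints. This gives us a finite partition of clock-valuation space that refines every guard of $\Aa$ simultaneously.

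Starting from $\Aa = (Q, q_{\sf init}, \Sigma, E, F)$, I would define $\Aa' = (Q', q'_{\sf init}, \Sigma, E', F')$ with $Q' = 2^Q$, $q'_{\sf init} = \{q_{\sf init}\}$, and $F' = \{S \subseteq Q \mid S \cap F \neq \emptyset\}$. For the transitions, for every $S \in Q'$, every $\sigma \in \Sigma$, and every $K$-simple constraint $r \in \SimC{\Sigma, K}$, I would add a single transition $(S, \sigma, r, S_{\sigma,r})$ where
\[
S_{\sigma,r} = \{\, q' \in Q \mid \exists\, q \in S,\ \exists\, (q,\sigma,g,q') \in E \text{ with } \sem{r} \subseteq \sem{g}\,\}.
\]
Because simple constraints induce a finite partition of valuation space, the out-transitions from $S$ on letter $\sigma$ have pairwise disjoint guards, so $\Aa'$ is a $K$-DERA by construction.

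To establish $\ltw{\Aa'} = \ltw{\Aa}$, I would show by induction on the length of a timed word ${\sf tw}=(\sigma_1,t_1)\dots(\sigma_n,t_n)$, with clocked word $(\sigma_1,v_1)\dots(\sigma_n,v_n)$, that the unique run of $\Aa'$ on ${\sf tw}$ reaches the subset
\[
S_n = \{\, q \in Q \mid \Aa \text{ has a run on } {\sf tw} \text{ ending in } q\,\}.
\]
The key step of the induction uses the fact that, given the valuation $v_i$, there is exactly one simple constraint $r_i$ (up to semantic equality) with $v_i \models r_i$; moreover for any guard $g$ in $\Aa$ we have $v_i \models g$ iff $\sem{r_i} \subseteq \sem{g}$, by Lemma~\ref{lem:rws-dont-intersect} and the observation that $g$ decomposes into simple constraints. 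This makes the transition taken by $\Aa'$ on $(\sigma_i,v_i)$ aggregate precisely the states reachable by $\Aa$. Acceptance then follows because $S_n \cap F \neq \emptyset$ iff $\Aa$ accepts ${\sf tw}$, and $S_n \cap F \neq \emptyset$ iff $S_n \in F'$.

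The main obstacle I anticipate is not the subset construction itself, which is routine, but justifying carefully the decomposition of arbitrary $K$-constraints into $K$-simple constraints: every recording clock must appear in every simple constraint, so one has to enumerate the finite set of unit-interval/equality/$>\!K$ choices for each clock in $X_\Sigma$ and verify that the resulting finite family partitions valuation space and refines every guard. Once this combinatorial lemma is in hand, both determinism and language-preservation fall out directly from the construction above.
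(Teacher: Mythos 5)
Your proposal is correct and follows essentially the same route as the cited source [AFH99] (the paper itself does not reprove this lemma): a subset construction that works because the clock valuations of an event-recording automaton are determined by the input timed word alone, combined with a refinement of all guards into the finite partition given by $K$-simple constraints to make the resulting guards pairwise disjoint. The combinatorial step you flag — that every non-empty $K$-constraint is a disjoint union of $K$-simple constraints that either refine or avoid each atomic guard — is exactly the observation the paper records after Lemma~\ref{lem:rws-dont-intersect}, so no genuinely new idea is missing.
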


\smallskip
\noindent\textbf{Symbolic timed words} over ($\Sigma, K$) are finite sequences $(\sigma_1,g_1)(\sigma_2,g_2)\dots (\sigma_n,g_n)$ where each $\sigma_i \in \Sigma$ and $g_i \in \SC{\Sigma, K}$, for all $1 \leq i \leq n$.
Similarly, a \emph{region word} over ($\Sigma$, K) is a finite sequence $(\sigma_1,r_1)(\sigma_2,r_2)\dots (\sigma_n,r_n)$ where each $\sigma_i \in \Sigma$ and $r_i \in \SimC{\Sigma, K}$ \footnote{since simple constraints are similar (but not equal) to the standard notion of `regions' as defined in~\cite{AD94}, we refer to this kind of symbolic timed words as region words}, for all $1 \leq i \leq n$. 
We will use $\SW{\Sigma, K}$ and $\RW{\Sigma, K}$ to denote the sets of all symbolic timed words and all region words over an alphabet ($\Sigma, K$), respectively. 

We are now equipped to define when a timed word ${\sf tw}$ is compatible with a symbolic timed word $\sw$. Let  $\tw=(\sigma_1,t_1)(\sigma_2,t_2)\dots (\sigma_n,t_n)$ be a timed word and $\sw=(\sigma'_1,g_1)(\sigma'_2,g_2)\dots (\sigma'_m,g_m)$ be a symbolic timed word. We say that $\tw$ is compatible with $\sw$, noted $\tw \models \sw$ if we have that: $(i)$ $n=m$, \emph{i.e.}, both words have the same length, $(ii)$ $\sigma_i=\sigma'_i$ and $(iii)$ $v_i \models g_i$, for all $i$, $1 \leq i \leq n$. We use $\sem{\sw}$ to denote the set of all timed words that are compatible with $\sw$, that is, $\sem{\sw}=\{ \tw \in {\sf TW}(\Sigma) \mid \tw \models \sw\}$. We say that a symbolic timed word $\sw$ is \emph{consistent} if $\sem{\sw} \neq \emptyset$, and \emph{inconsistent} otherwise.


\section{A passive learning algorithm for ERA}
\label{sec:algo}

In this section, we propose a \emph{state-merging algorithm} -- we call it \leap\ (\textsf{L}earning \textsf{E}vent-recording \textsf{A}utomata \textsf{P}assively) -- that can be seen as an adaptation of the classical state-merging algorithm RPNI~\cite{oncina1992} for regular inference, into the timed setting.
Typically, in a  passive learning framework, a pair of sets of words $\S = (\splus, \sminus)$, called the \emph{sample set}, is given as input, and the objective is to construct an automaton $\A$ (an ERA in our case) that is \emph{consistent} with $\S$. Here \emph{consistency} means, $\A$ accepts all the words in $\splus$ and rejects all the words in $\sminus$. It is important to note here that no two words, one from $\splus$ and one from $\sminus$, should have a non-empty intersection.
Here we assume that the alphabet $\Sigma$ of the target ERA is
given as an input, and that, the sample sets contain symbolic timed words over the alphabet $\Sigma$.

\subsection{Order between symbolic words} \label{order-symbolic-words}
The set of all symbolic timed words can be ordered using a \emph{total order} `$\order$'. This order will be used later in this section for merging states in \leap, and also in Section~\ref{sec:leap-completeness-sdera} to prove the completeness result. 
We now define this order relation.

Let $(\Sigma,<)$ be a total order over the alphabet $\Sigma$. This order induces the following order on the clocks $X_{\Sigma}$: $x_{\sigma_1} < x_{\sigma_2}$ if and only if $\sigma_1 < \sigma_2$, for all $\sigma_1,\sigma_2 \in {\Sigma}$. Let $v_1$ and $v_2$ be two valuations for the clocks in $X_{\Sigma}$. We can order those two valuations using the following lexicographic order: $v_1 < v_2$ if and only if $\exists x_{\sigma_i} \in X_{\Sigma}$ s.t. $v_1(x_{\sigma_i}) < v_2(x_{\sigma_i})$, and for all $x_{\sigma_j} < x_{\sigma_i}$, $v_1(x_{\sigma_j})=v_2(x_{\sigma_j})$. 
Now, let us note that for two simple constraints $r_1$ and $r_2$, we have that either $\sem{r_1}=\sem{r_2}$ or $\sem{r_1} \cap \sem{r_2}=\emptyset$. From this observation, we can define a total order between elementary constraints (over say $x$), $r_1 \order r_2$ if for every $v_1 \in \sem{r_1}$ and every $v_2 \in \sem{r_2}$, $v_1(x) < v_2(x)$. We can extend this order to an order between simple constraints: $r_1 < r_2$ if there exists a clock $x_{\sigma_i}$ such that $r_1 {\downarrow_{i}} \order r_2{\downarrow_i}$ --where $r{\downarrow_i}$ denotes the projection of the simple constraint $r$ on the clock $x_{\sigma_i}$-- and for every $x_j < x_i$, $r_1{\downarrow_j} = r_2{\downarrow_j}$.
We can now extend this order to an order between two constraints
$g_1$ and $g_2$, in the following manner: $g_1 \order g_2$ if either (i) $\sem{g_1} \subset \sem{g_2}$, or (ii) there exists two simple constraints $r_1 \order r_2$ where $\sem{r_1} \subseteq (\sem{g_1} \setminus \sem{g_2})$ and $\sem{r_2} \subseteq (\sem{g_2} \setminus \sem{g_1})$. One can verify that this relation defined between  constraints is, indeed, a total order.

We define an ordering between two \emph{symbolic letters} as follows: 
$(\sigma_1, g_1) \order (\sigma_2, g_2)$ if either (i) $\sigma_1 < \sigma_2$ or (ii) $\sigma_1 = \sigma_2$ and $g_1 \order g_2$; where $g_1, g_2$ are two constraints.
Finally, we can extend this order to an order between two symbolic words. Let $\ssw_1 = (\sigma_1^1, g_1^1) \cdots (\sigma_k^1, g_k^1)$ and $\ssw_2 = (\sigma_1^2, g_1^2) \cdots (\sigma_l^2, g_l^2)$ be two symbolic words. Now, $\ssw_1 \order \ssw_2$ if:
\begin{enumerate}
    \item $k < l$, that is, $\ssw_1$ is shorter than $\ssw_2$, or 
    \item $k = l$ and $\exists 1 \le i \le n$ such that $(\sigma_i^1, g_i^1) \order (\sigma_i^2, g_i^2)$ and $\forall 1 \le j < i$, $(\sigma_j^1, g_j^1) = (\sigma_j^2, g_j^2)$.
\end{enumerate}

\begin{algorithm}[t]
\caption{Pseudo-code for algorithm \leap}
 \label{alg:LEAP}
	\begin{algorithmic}
		\Require An alphabet $\Sigma$, and a \emph{consistent} sample set ($\splus, \sminus$) 
		\Ensure An ERA that accepts (resp. rejects) all timed words in $\splus$ (resp. $\sminus$)
	\end{algorithmic}
	\begin{algorithmic}[1]
	    \State ${\tree}$ $\leftarrow$ $\PT$($\splus$)
	    \State $\textsc {Red} = \{q_{\varepsilon}\}$
            \State $\textsc {Blue} = \{q_u \mid q_u$ is a successor of $q_\varepsilon$\} \hfill // always sorted wrt a total order
     \While {\textsc{Blue} is not empty} 
         \State pop $q_v$ where $v$ is the minimum element wrt $\order$ in $\{v' \mid q_{v'} \in \textsc{Blue}\}$ 
         \For {every node $q_u \in~$\textsc{Red}}
            \If {merge($q_u$, $q_v$) is possible}
                \State add the successors of $q_u$ ($\notin$ \textsc{Blue}) to \textsc{Blue} \hfill// keeping \textsc{Blue} sorted
                \State go to the next iteration of the \textbf{while} loop
            \EndIf
	    \EndFor
        \If {no merge was possible}
            \State add $q_v$ to \textsc{Red}
            \State add the successors of $q_v$ to \textsc{Blue} \hfill// keeping \textsc{Blue} sorted
        \EndIf
     \EndWhile
	\end{algorithmic}
\end{algorithm}

\medskip
\subsection{An overview of the algorithm \leap} 
\label{subsec:leap-algo}
\leap\ initially constructs a tree, called the \emph{prefix tree}, denoted $\PT$, from the positive set $\splus$ and then tries to merge nodes from the tree keeping the resulting automaton \emph{consistent} with the sample. 
During the procedure, we maintain a coloring of nodes in the tree -- red and blue -- where the red nodes represent the nodes that will be the states in the target automaton, and at each iteration of the algorithm, a blue node will either be merged with 
one of the red nodes, or be promoted to a red node. A pseudo-code of \leap\ is presented in Algorithm~\ref{alg:LEAP}.
The algorithm consists of two main steps:

\medskip
\noindent\textbf{In the first step} 
\label{def:PTA} \leap~constructs a prefix tree $\tree$ from $\splus$ (line $1$). We can formally define the prefix tree as an ERA $\PT(\splus) = (Q,  q_\varepsilon, \Sigma, E, F)$ where the set of states $Q = \{q_u\mid u \in \pref{\splus}\}$ is the set of prefixes of words in $\splus$ (denoted by $\pref{\splus}$), the initial state is the state corresponding to the empty string $\varepsilon$, the set of edges $E$ is defined as follows: for $(a,g) \in \Sigma \times \SC{\Sigma}$, and $u, u.(a,g)\in \pref{\splus}$, the transition $(q_u, a, g, q_{u.(a,g)}) \in E$; and finally, the set of accepting states $F = \{q_u \mid u \in \splus\}$ is the set of all words in $\splus$, \emph{i.e.}, the prefix tree accepts exactly the words in $\splus$.

Note that there may exist states from which there are more than one transition on the same letter $\sigma$ but with different guards $g, g'$ that intersect. As a consequence, a prefix tree can be non-deterministic. However, note that, when all words in the sample set are region words, the $\PT$ will be deterministic (cf. Lemma~\ref{lem:rws-dont-intersect}).
    
\medskip
\noindent\textbf{In the second step}
\label{def:merge-fold} 
the algorithm initializes a set {\sc Red} that contains the initial node $q_\varepsilon$ (line~2), and  maintains a \emph{sorted list} {\sc Blue} containing the successors of $q_\varepsilon$ in the $\PT$ (line~3). 
The list {\sc Blue} is sorted according to the order $\order$ defined between symbolic timed words mentioned earlier in this section, the lowest element in the order being at the beginning of the list.
Then, it iteratively does the following until {\sc Blue} becomes empty: at every iteration $i$, the algorithm pops the first node $q_v$ from {\sc Blue} (line 5), and executes the procedure {\sf merge} that checks if a merge is possible between $q_v$ and one of the nodes in {\sc Red} (line 6-11): 
a merge is possible if the resulting automaton after the merge remains \emph{consistent} with the sample -- that is, if the merged automaton does not intersect with any word in $\sminus$; otherwise, the merge is discarded. 
When a merge is successful, {\sc Red} and {\sc Blue} are updated accordingly (line 8).
If no merge is possible, $q_v$ is \emph{promoted} to a red state, \ie, $\textsc{Red} := \textsc{Red} \uplus \{q_v\}$, and all successors of $q_v$ are appended to {\sc Blue} according to the order $\order$ (line 12-15). 
    
\subparagraph*{Procedure merge:}
    The procedure takes as input an ERA $\A$, a node $q_u$ from {\sc Red} and a node $q_v$ from {\sc Blue}. 
    Note that, every node in {\sc Blue} has a \emph{unique} predecessor node, and this node is always in {\sc Red}.
    Let $q_w$ be the unique predecessor of the blue node $q_v$, \ie, there exists a transition $t = q_w \xrightarrow{(\sigma,g)}q_v$ in $\A$. Then construct an ERA $\A'$ as follows: delete the transition $t$ from $\A$, and add the transition $q_w \xrightarrow{(\sigma, g)} q_u$ (if it does not already exist in $\A$), and then recursively
    \emph{fold} the subtree of $q_v$ into $q_u$. Here, \emph{fold} means after merging $q_u$ and $q_v$, if these nodes have successors $q_u'$ and $q_v'$ on transitions labelled with the \emph{same} letter $\sigma$ and the \emph{same} guard $g$, respectively, then we also merge $q_u'$ and $q_v'$, and repeat this process.
    Notice that, the automaton $\A'$ can, in general, be non-deterministic.

\subparagraph*{Checking consistency with the sample:}
\label{smt-encoding}

A merge between two states is allowed in \leap\ only if the resulting automaton does not intersect with any negative sample. We now provide an SMT-based algorithm for checking if an ERA intersects a symbolic word. Thus, this algorithm can be used inside \leap\ to determine when a merge is allowed: each time two states are merged during an iteration of \leap, say $\Aa$ is the resulting automaton due to the merge, then the following algorithm can be used to check if for every $w \in \sminus$, $\lang(\Aa) \cap \sem{w} \neq \emptyset$ or not. 
Such an encoding of this problem into an SMT formula is justified, since in the next section (in Lemma~\ref{prop:lang-incl-np-hard}) we will show that deciding this problem is, in fact, $\np$-complete.

We reduce the afore-mentioned problem in the theory of (quantifier-free) Linear Real Arithmetic (LRA) with Boolean variables, for which checking satisfiability is also known to be $\np$-complete. 
Assume $w$ be the symbolic word $(\sigma_1, g_1) (\sigma_2, g_2) \ldots (\sigma_p, g_p)$. We want to find a timed word (as a certificate) $w_t = (\sigma_1, t_1) (\sigma_2, t_2) \ldots (\sigma_p, t_p)$, where $t_i \in \R_{\ge 0}$ and $t_i \le t_j$ for every $1 \le i < j \le p$, such that $w_t \in \sem{w} \cap \lang(\Aa)$. For this, we construct a formula $\phi$ over $p$ real variables $t_1, t_2, \ldots, t_p$ such that, a model $w_t \models \phi$ iff $w_t \in \sem{w} \cap \lang(\Aa)$. The formula $\phi$ is constructed as a conjunction of $\phi_{w}$ and $\phi_{\Aa}$, as described below.

First, we encode the constraints present in $w$. Consider the $i$-th position in $w$, suppose the constraint in this position is $g_i$. Now, let $x_\sigma \sim c$ be an elementary constraint present as a conjunct in $g_i$.  Now, let $0 \leq j < i$ be the largest position in $w$ such that $\sigma_j = \sigma$. Then, $g_i$ will be satisfied by the timed word $w_t$ only if $t_i - t_j \sim c$. For every constraint $g_i$ in $w$, we construct the formula $\phi_{g_i}$ where each clock $x_\sigma$ is replaced with $t_i - t_j$. Then define $\phi_{w} := \bigwedge_{1 \leq i \leq n} \phi_{g_i}$.

We now describe the formula $\phi_{\Aa}$. We encode the underlying transition system $\Aa$ in a fairly standard way. 
Assuming $\Aa$ contains $n$ states, we use $\log(n)$-many Boolean variables to denote the states of the automaton. For each state $q$ of $\Aa$ we have the following formula:
    $\phi_{q} := \ell_1 \wedge \ell_2 \wedge \ldots \wedge \ell_k$, where $k = \log(n)$ and $\ell_j \in \{p_j, \neg p_j\}, 1 \le j \le k$.
The formula for the initial state is $\phi_I := \phi_{q_{\mathsf{init}}}$, and the formula for the final states is:
    $\phi_F := \bigvee_{q_f \in F} \phi_{q_f}$.
We now encode the transitions of $\Aa$.
Suppose $e = (q, \sigma, g, q')$ be a transition of $\Aa$, where $\sigma \in \Sigma$ and the guard $g$ is a constraint $\bigwedge_m x_m \sim c$. To encode the guards, we use the same technique as we used when encoding a symbolic word. 
We then get the formula for a transition:
$\phi_e := \phi_{q} \wedge \phi_{g} \wedge \phi_{q'}$. 
For the set of all transitions $E$ with event $\sigma$ of $\Aa$, noted $E|\sigma$, we use the formula $\phi_{E|\sigma} := \bigvee_{e = (q,\sigma,g,q') \in E} \phi_e$.

We will use superscripts on the state-variables ($\ell_i$'s) to denote the state of the automaton after $m$ steps in a run; for instance, $\phi_{q_i}^m := \ell_1^m \wedge \ell_2^m \wedge \ldots \wedge \ell_k^m$ will encode the fact that the automaton is in $q_i$ at the $m$-th step of a run. Further, the $m$-th transition in a run will be represented by $\phi_e^{m} := \phi_{q_i}^m \wedge \phi_{g_i} \wedge \phi_{q_j}^{m+1}$.
Similarly, we use $\phi_{E|\sigma}^m := \bigvee_{e = (q,\sigma,g,q') \in E} \phi_e^m$.
To represent the set of all $p$-length paths in $\Aa$ on $w$, we use the following formula:
\begin{equation*}
    \phi_{\Aa} := \phi_I^1 \wedge \phi_{E|\sigma_1}^1 \wedge \phi_{E|\sigma_2}^2 \ldots \wedge \phi_{E|\sigma_p}^p \wedge \phi_F^p
\end{equation*}
Note that, every satisfying model of $\phi_{\Aa}$ is a timed word $w_t$ that has a run in $\Aa$ along $w$, and further it starts from the initial state and ends at a final state after $p$ steps; i.e. $w_t$ has an \emph{accepting} run in $\Aa$.

Then the final formula is $\phi := \phi_{w} \wedge \phi_{\Aa}$. 

\begin{lemma}
    $\phi$ is satisfiable iff $\sem{w} \cap \lang(\Aa) \neq \emptyset$.
\end{lemma}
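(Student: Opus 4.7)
The plan is to prove the biconditional directly, matching up the two kinds of variables in a satisfying model of $\phi$ with the two ingredients needed to witness membership in $\sem{w} \cap \lang(\Aa)$: the real variables $t_1, \ldots, t_p$ will yield the timestamps of a timed word $w_t$, and the Boolean variables $\ell_j^m$ will encode the sequence of states along an accepting run of $\Aa$ on $w_t$. The key semantic observation that makes this work is that the substitution $x_\sigma \mapsto t_i - t_j$, with $j$ the largest index $<i$ such that $\sigma_j = \sigma$ (and $t_0 = 0$ by convention), exactly matches the definition of the clocked word $\mathsf{cw}(w_t)$ given in the preliminaries; so $\phi_{g_i}(t_1, \ldots, t_p)$ holds iff $v_i \models g_i$ in $\mathsf{cw}(w_t)$.

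For the $(\Rightarrow)$ direction, I would start from a satisfying model $M$ of $\phi$, define $t_i := M(t_i)$ to obtain a candidate $w_t = (\sigma_1, t_1)\ldots(\sigma_p, t_p)$, and read off states $q^1, \ldots, q^{p+1}$ by decoding the Boolean tuples $(\ell_1^m, \ldots, \ell_k^m)$ via the injective assignment used in $\phi_q$. From satisfaction of $\phi_w = \bigwedge_i \phi_{g_i}$ together with the key observation above, conclude $w_t \models w$, hence $w_t \in \sem{w}$. From satisfaction of $\phi_{\Aa}$, extract: $q^1 = q_{\sf init}$ (from $\phi_I^1$), $q^{p+1} \in F$ (from $\phi_F^p$), and for each $m$ a disjunct $\phi_e^m$ that is satisfied, witnessing the existence of a transition $e = (q^m, \sigma_m, g, q^{m+1}) \in E$ whose guard is satisfied by the current clock valuation. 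This yields an accepting run on $w_t$, so $w_t \in \lang(\Aa)$.

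For the $(\Leftarrow)$ direction, take any $w_t \in \sem{w} \cap \lang(\Aa)$ with an accepting run $q^1 q^2 \ldots q^{p+1}$. Assign the real variables the actual timestamps of $w_t$, and assign each Boolean variable $\ell_j^m$ according to the binary code of $q^m$ used in $\phi_{q^m}$. Then verify conjunct by conjunct: $\phi_w$ holds by the same semantic matching between the substitution and the event-recording clock values in $\mathsf{cw}(w_t)$; $\phi_I^1$ and $\phi_F^p$ hold because $q^1 = q_{\sf init}$ and $q^{p+1} \in F$; and each $\phi_{E|\sigma_m}^m$ is satisfied by the disjunct corresponding to the transition actually used at step $m$.

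The main care point, rather than a deep obstacle, is the bookkeeping around the clock substitution, including the corner case where $\sigma_i$ has no earlier occurrence in $w$ and $v_i(x_{\sigma_i}) = t_i - 0 = t_i$; this has to be coherent with how $\phi_{g_i}$ is defined. A smaller issue worth flagging explicitly is that $\phi$ as written does not include the monotonicity conjunct $t_1 \le t_2 \le \ldots \le t_p$, so it should either be assumed to be part of $\phi$ or added as an extra conjunct so that every model indeed corresponds to a legal timed word; this does not affect the proof structure. Finally, the non-determinism of $\Aa$ causes no difficulty because $\phi_{E|\sigma}$ is a disjunction over all $\sigma$-transitions, which lets both directions freely choose or recover the transition actually used.
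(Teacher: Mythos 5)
Your proof is correct and follows exactly the argument the paper intends: the paper leaves this lemma's proof implicit in the construction itself (noting only that models of $\phi_w$ and $\phi_{\Aa}$ correspond to compatible timed words and accepting runs, respectively), and your two-directional verification via the correspondence between the substitution $x_\sigma \mapsto t_i - t_j$ and the clocked-word semantics is precisely that argument made explicit. Your remark about the missing monotonicity conjunct $t_1 \le \cdots \le t_p$ is a fair and accurate observation about the encoding as written.
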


This concludes the description of the SMT encoding. 
Then one can show the correctness of \leap:

\begin{restatable}[Correctness]{theorem}{leapcorr}
\label{th:leap-correctness}
    Given a sample set $\mathsf{S} = (\splus, \sminus)$ of symbolic timed words as input, (1) \leap\ terminates, and (2) \leap\ returns a possibly nondeterministic automaton $\Aa$ such that for every $w \in \splus$, $\sem{w} \subseteq \lang(\Aa)$ and for every $w \in \sminus$, $\sem{w} \cap \lang(\Aa) \neq \emptyset$.
\end{restatable}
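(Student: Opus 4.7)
The plan is to prove both claims by tracking suitable invariants through the main \textbf{while}-loop of Algorithm~\ref{alg:LEAP} and its inner \textsf{merge} subroutine. I would set up a finite sequence of automata $\Aa_0,\Aa_1,\dots,\Aa_N$, starting with $\Aa_0 = \PT(\splus)$, where each successor $\Aa_{i+1}$ is obtained either by a successful merge-and-fold or coincides with $\Aa_i$ in case $q_v$ is promoted; the returned automaton is $\Aa_N$.

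For termination, I would argue that both $\textsc{Red}$ and $\textsc{Blue}$ only ever index into the finite node set of the prefix tree $\tree$. Each iteration pops exactly one node from $\textsc{Blue}$; on promotion, that node is added to $\textsc{Red}$ and its tree-successors are appended to $\textsc{Blue}$, while on merge the node is absorbed into the target red node and its subtree is folded in. Using the fact that a tree node is appended to $\textsc{Blue}$ only when its unique parent first becomes red, each node enters $\textsc{Blue}$ at most once, so the total number of iterations is bounded by $|\tree|$. The \textsf{merge} subroutine itself terminates because the fold recursion is bounded by the depth of the subtree rooted at $q_v$.

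For the inclusion $\sem{w} \subseteq \lang(\Aa)$ on each $w\in\splus$, I would carry the invariant ``$\sem{w} \subseteq \lang(\Aa_i)$ for every $w\in\splus$'' through the induction on $i$. The base case $\Aa_0=\PT(\splus)$ follows directly from the definition of the prefix tree: the branch labelled by $w$ carries exactly the letters and guards of $w$, so every compatible timed word traverses it and reaches an accepting state. For the inductive step, I observe that merge-and-fold only identifies states while preserving every transition and every accepting status, giving a homomorphism $\Aa_i \to \Aa_{i+1}$ that sends accepting runs to accepting runs; hence $\lang(\Aa_i) \subseteq \lang(\Aa_{i+1})$ and the inclusion is preserved.

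For the condition on every $w\in\sminus$ stated in the theorem, I would appeal directly to the admissibility test at line~7, which invokes the SMT encoding developed above. The preceding lemma shows that the formula $\phi$ is satisfiable iff $\sem{w} \cap \lang(\Aa) \neq \emptyset$, which is precisely the predicate appearing in the theorem; the algorithm's merge policy uses this check to maintain its intended invariant across every iteration, and a base-case inspection of $\Aa_0$ closes the induction over successful merges. The main obstacle will be reasoning cleanly about \textsf{fold}: a single merge can cascade into a chain of identifications, and the resulting automaton may be non-deterministic, so I plan to formalise \textsf{fold} as the least fixed point of a monotone operator on equivalence relations over the state set of $\tree$, argue confluence and termination of this fixed-point computation, and then show that applying the SMT check to the final folded automaton is sufficient to certify the invariant at each successful merge.
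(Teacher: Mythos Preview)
Your approach is essentially that of the paper, only spelled out more carefully. The paper's sketch for termination uses the same potential: each iteration either strictly decreases the number of states (on a successful merge) or permanently moves a node into \textsc{Red}, which is never revisited; your ``each node enters \textsc{Blue} at most once'' is the same count, phrased dually. For the positive samples, the paper writes the invariant in one line (``after every merge we do not delete any node or transition''); your language-monotonicity argument via a state-map $\Aa_i \to \Aa_{i+1}$ makes this precise and is fine.

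There is, however, a sign confusion in your negative-sample paragraph that you should confront explicitly. The theorem as printed asserts $\sem{w}\cap\lang(\Aa)\neq\emptyset$ for every $w\in\sminus$, but the surrounding text (the definition of consistency, the description of the merge test, and the paper's own proof sketch) all make clear that the intended condition is $\sem{w}\cap\lang(\Aa)=\emptyset$. Your argument actually establishes the intended claim, not the stated one: line~7 discards any merge for which the SMT formula $\phi$ is satisfiable, i.e., for which some $w\in\sminus$ would have $\sem{w}\cap\lang(\Aa')\neq\emptyset$, and your base case $\Aa_0=\PT(\splus)$ satisfies $\sem{w}\cap\lang(\Aa_0)=\emptyset$ for every $w\in\sminus$ precisely because the sample is assumed consistent. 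So state your invariant as $\forall w\in\sminus,\ \sem{w}\cap\lang(\Aa_i)=\emptyset$, flag the typo in the statement, and the induction closes exactly as the paper intends. Written as you have it (``which is precisely the predicate appearing in the theorem''), the reader will think you are proving non-emptiness and your base case and inductive step will appear to contradict your goal.

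Finally, the plan to formalise \textsf{fold} as a least fixed point and argue confluence is unnecessary here. The consistency test at line~7 is applied to the \emph{result} of the entire merge-and-fold, so you never need to reason about intermediate foldings or their order; for termination of \textsf{fold}, bounding the recursion by the size of the subtree under $q_v$ (as you already do) is enough. The paper accordingly treats \textsf{fold} as a black box whose effect is a quotient that only adds behaviours; you can do the same and drop the fixed-point machinery without loss.
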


\begin{proof}[Proof (sketch)]
    (1) The prefix tree has finitely many nodes -- at most one for each prefix in $\pref{\splus}$. Then at every step of \leap, it either successfully merges two nodes, in which case, the number of nodes in the automaton reduces at least by one, or it promotes a node to red node. Since red nodes are never processed again in the future, this procedure terminates.

    (2) The algorithm is correct by construction, since after every merge we do not delete any node or transition present in the automaton, and moreover, when a merge is successful, it means that the resulting automaton is also consistent with the negative samples $\sminus$ as well.
\end{proof}

We will now illustrate the execution of \leap\ on an example.

\begin{example}
    \label{appendix:examplerun}

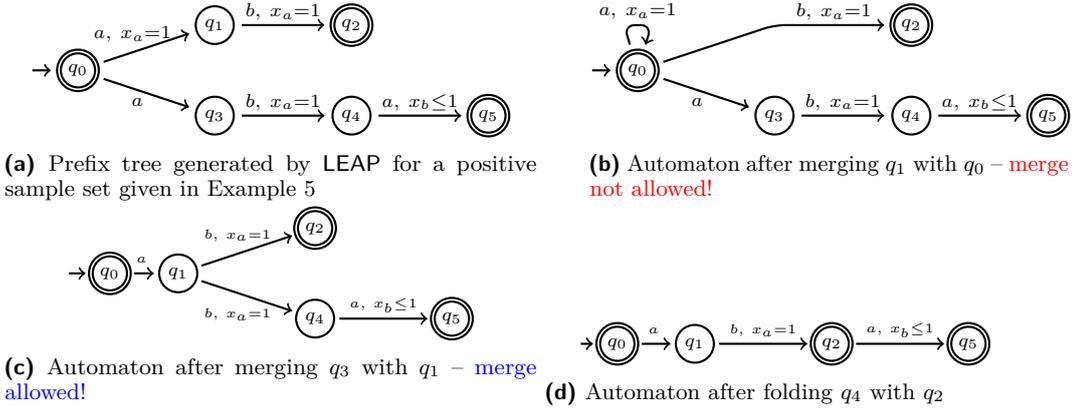
\begin{figure}[t]
\begin{subfigure}{0.5\textwidth}
    \centering
    \begin{tikzpicture}[scale=0.6]
    \everymath{\scriptstyle}
        \begin{scope}[every node/.style={circle, draw, inner sep=2pt,
            minimum size = 3mm, outer sep=3pt, thick}] 
            \node [double] (0) at (0,0) {$q_0$}; 
            \node [] (1) at (3,1) {$q_1$};
            \node [] (3) at (3,-1) {$q_3$};
            \node [double] (2) at (6,1) {$q_2$};
            \node [] (4) at (6,-1) {$q_4$};
            \node [double] (5) at (9,-1) {$q_5$};
        \end{scope}
        \begin{scope}[->, thick]
            \draw (-1,0) to (0); 
            \draw (0) to (1);
            \draw (1) to (2);
            \draw (0) to (3);
            \draw (3) to (4);
            \draw (4) to (5);
        \end{scope}
        \node at (1.2,0.8) { $a,~x_a = 1$};
        \node at (1.3,-0.7) { $a$};
        \node at (4.5,1.3) { $b,~x_a = 1$};
        \node at (4.5,-0.7) { $b,~x_a = 1$};
        \node at (7.5,-0.7) { $a,~x_b \leq 1$};
    \end{tikzpicture}
\caption{Prefix tree generated by \leap\ for a positive sample set given in Example~\ref{appendix:examplerun}}
\label{fig:running-ex-pta}
\end{subfigure}
\begin{subfigure}{0.45\textwidth}
    \centering
    \begin{tikzpicture}[scale=0.6]
        \everymath{\scriptstyle}
        \begin{scope}[every node/.style={circle, draw, inner sep=2pt,
            minimum size = 3mm, outer sep=3pt, thick}] 
            \node [double] (0) at (0,0) {$q_0$}; 
            \node [] (3) at (3,-1) {$q_3$};
            \node [double] (2) at (6,1) {$q_2$};
            \node [] (4) at (6,-1) {$q_4$};
            \node [double] (5) at (9,-1) {$q_5$};
        \end{scope}
        \begin{scope}[->, thick]
            \draw (-1,0) to (0); 
            \draw [rounded corners] (0) to (-0.3,1) to (0.3,1) to (0);
            \draw [rounded corners] (0) to (3,1) to (2);
            \draw (0) to (3);
            \draw (3) to (4);
            \draw (4) to (5);
        \end{scope}
        \node at (0,1.3) { $a,~x_a = 1$};
        \node at (1.3,-0.7) { $a$};
        \node at (4.3,1.3) { $b,~x_a = 1$};
        \node at (4.5,-0.7) { $b,~x_a = 1$};
        \node at (7.5,-0.7) { $a,~x_b \leq 1$};
    \end{tikzpicture}
    \caption{Automaton after merging $q_1$ with $q_0$ -- \textcolor{red}{merge not allowed!}}
    \label{fig:running-ex-merge1}
\end{subfigure}
\begin{subfigure}{0.5\textwidth}
    \centering
    \begin{tikzpicture}[scale=0.6]
        \everymath{\scriptstyle}
        \begin{scope}[every node/.style={circle, draw, inner sep=2pt,
            minimum size = 3mm, outer sep=2pt, thick}] 
            \node [double] (0) at (1.5,0) {$q_0$}; 
            \node [] (1) at (3,0) {$q_1$};
            \node [double] (2) at (6,1) {$q_2$};
            \node [] (4) at (6,-1) {$q_4$};
            \node [double] (5) at (9,-1) {$q_5$};
        \end{scope}
        \begin{scope}[->, thick]
            \draw (0.6,0) to (0); 
            \draw [rounded corners] (0) to (1);
            \draw (1) to (2);
            \draw (1) to (4);
            \draw (4) to (5);
        \end{scope}
        \node at (2.2,0.3) {\scriptsize $a$};
        \node at (4.3,0.85) {\scriptsize $b,~x_a = 1$};
        \node at (4.3,-0.9) {\scriptsize $b,~x_a = 1$};
        \node at (7.5,-0.7) {\scriptsize $a,~x_b \leq 1$};
    \end{tikzpicture}
    \caption{Automaton after merging $q_3$ with $q_1$ -- \textcolor{blue}{merge allowed!}}
    \label{fig:running-ex-merge2}
    \end{subfigure}
    \begin{subfigure}{0.45\textwidth}
    \centering
    \begin{tikzpicture}[scale=0.6]
    \everymath{\scriptstyle}
        \begin{scope}[every node/.style={circle, draw, inner sep=2pt,
            minimum size = 3mm, outer sep=2pt, thick}] 
            \node [double] (0) at (1.3,0) {$q_0$}; 
            \node [] (1) at (3,0) {$q_1$};
            \node [double] (2) at (6,0) {$q_2$};
            \node [double] (5) at (9,0) {$q_5$};
        \end{scope}
        \begin{scope}[->, thick]
            \draw (0.5,0) to (0); 
            \draw [rounded corners] (0) to (1);
            \draw (1) to (2);
            \draw (2) to (5);
        \end{scope}
        \node at (2.1,0.3) {\scriptsize $a$};
        \node at (4.5,0.3) {\scriptsize $b,~x_a = 1$};
        \node at (7.5,0.3) {\scriptsize $a,~x_b \leq 1$};
    \end{tikzpicture}
    \caption{Automaton after  folding $q_4$ with $q_2$}
    \label{fig:running-ex-merge3}
    \end{subfigure}
    \caption{Different steps of the algorithm \leap.
    In the prefix tree, we associate each state with the word that leads to that state as described in Section~\ref{def:PTA}, for example, $q_2$ corresponds to $(a,~x_a = 1)(b,~x_a = 1)$, and $q_3$ corresponds to $(a,\top)$, etc.
    }
    \label{fig:running-ex-illustration}
\end{figure}

Let $S = (\splus, \sminus)$ be a sample set, where -- 
\begin{align*}
    \splus := &\{\varepsilon, (a,~x_a=1)(b,~x_a = 1), (a,\top)(b,~x_a = 1)(a,~x_b \leq 1)\} \\
    \sminus := &\{(a,\top)(a,\top), (a,\top)(b,~x_a = 1)(a,~x_b = 2)(b,~x_a = 1), \\
    & (a,\top)(b,~x_a = 1)(b,~x_a = 1), (a,\top)(b,~x_a = 1)(a,~x_b = 1)(a,\top)(b,~x_a = 1)\}
\end{align*}
With this sample as input, in the first step, the algorithm computes the prefix tree depicted in Figure~\ref{fig:running-ex-pta}.
The algorithm starts by putting $q_0$ in \textsc{Red} and $q_1$ and $q_3$ in \textsc{Blue} according to the order $\order$, mentioned earlier in this section. It picks $q_1$ and tries to merge this state with $q_0$, the merged automaton is depicted in Figure~\ref{fig:running-ex-merge1}, but the merge fails since it accepts the negative word $(a,\top)(a,\top)$.
Since $q_0$ is the only state in \textsc{Red}, $q_1$ gets promoted to \textsc{Red}, and $q_2$ is added to \textsc{Blue}, keeping the list sorted. 
The algorithm then picks $q_3$; the merge of $q_3$ with $q_0$ fails, but, the merge of $q_3$ with $q_1$ succeeds. The automaton computed after this merge is depicted in Figure~\ref{fig:running-ex-merge2}. After this, the algorithm \emph{folds} $q_2$ and $q_4$ (since the incoming transitions to these two states are both from $q_1$ with the same event and guard) and gets the automaton in Figure~\ref{fig:running-ex-merge3}. 
At this point, the only state in \textsc{Blue} is $q_2$. 
Both merges of $q_2$ with $q_0$ and $q_2$ with $q_1$ fail, and therefore,
$q_2$ is promoted to \textsc{Red}, and $q_5$ to \textsc{Blue}. Finally, the algorithm tries to merge $q_5$ with $q_0$, which fails due to the negative word $(a,\top)(b,~x_a = 1)(a,~x_b = 1)(a,\top)(b,~x_a = 1)$. The merge of $q_5$ with $q_1$ succeeds, the algorithm computes the automaton in Figure~\ref{fig:dera}. 
\end{example}

The above example demonstrates that our algorithm \leap\ does not need to maintain deterministic structure during merging, and the result of the merging phase can be a non-deterministic ERA. This is in sharp contrast to RPNI. 
We now provide a more detailed comparison of \leap\ with RPNI and then compare with other state-merging algorithms that we are aware of.

\subsection{Comparing LEAP with RPNI}
\label{sec:leap-vs-rpni}
While the overall idea behind the algorithm \leap\ is broadly similar to RPNI, we mention below a few key differences between them.

\medskip
    \noindent \textbf{Role of semantics.} 
    It might feel tempting to think that, one can 
    simply interpret the symbolic timed words over $\Sigma$ in the sample set as words from a modified alphabet $\Sigma'=\Sigma \times C_S$, where $C_S$ denotes the set of all constraints present in the sample set, and apply techniques from classical regular inference. 
    However, in the timed setting, this does not align with the properties of a sample set. For instance, consider the following sample set $S = (\splus, \sminus)$ over alphabet $\Sigma = \{a,b\}$ with $\splus = (a,\top)$ and $\sminus = (a,x_b\le 1)$. Then this sample set is indeed consistent w.r.t. the alphabet $\Sigma' = \{(a,\top), (a,x_b\le1)\}$ in the regular case, however, this is not consistent if we consider timing aspects of the words. This is why one must be able to deal with the semantics of such sets, rather than only syntax.
    
\medskip
    \noindent\textbf{Comparing merge.} First notice that, at every iteration of the algorithm, one needs to check if a \emph{merge} between two states of an automaton is successful, that is,
    if the resulting automaton does not intersect with any of the symbolic timed words from the negative sample ($\sminus$). 
    More formally, let $A$ be the ERA after a merge along the procedure, then one has to check if for all $w \in \sminus$, $\sem{w} \cap L(A) = \emptyset$. 
    In RPNI, this check can be done in polynomial time as it amounts to check membership of a word in a DFA. However, for symbolic timed words and ERA, this check is computationally more expensive as we show in Section~\ref{sec:complexity}: given a DERA $A$ and a symbolic timed word $w$, checking if $\sem{w} \cap L(A) \neq \emptyset$ is $\np$-complete.
    
    Secondly, unlike RPNI, due to the timing constraints in an ERA, the automaton produced at each step of \leap, and hence the output ERA, can in general be non-deterministic. This helps us learn potentially smaller automata, compared to the existing learning algorithms for this class, as will be illustrated in Section~\ref{sec:experiments}.
   
\medskip
    \noindent\textbf{Towards completeness.} The idea of language identification from given data, introduced by Gold~\cite{DBLP:journals/iandc/Gold78}, is a desired property of a passive learning algorithm. To this end, given any language $L$ from a particular class of languages (for example, the class of regular languages), 
    one defines \emph{characteristic sets} $S_L$, so that if any sample set $S$ \emph{containing} $S_L$ is given as input, the learning algorithm is guaranteed to return an automaton accepting $L$. It is well-known that such characteristic sets exist for RPNI for the class of regular languages.    

    The construction of a characteristic set, in the case of regular languages, depends on the Myhill-Nerode equivalence classes of the language. Intuitively, this can also be obtained
    from the minimal DFA recognizing the language: for every transition $t$ present in the minimal DFA, such a set must contain a word $w \in \splus$ that visits the transition $t$, and further, for every state $q$ and every transition $t$ that leads to some $q'(\ne q)$, the set must contain two words that \emph{distinguishes} $q$ with $t$: $\exists w_1.z \in \splus$ and $w_2.z \in \sminus$ s.t. either $w_1$ leads to $q$ and $w_2$ leads to $q'$ (via $t$), or vice-versa. In Section~\ref{sec:leap-completeness-sdera}, we will show a similar completeness result for the class of ERA-recognizable languages w.r.t. \leap.
    However, here we take a different approach, that only depends on the language to be learned -- and not the automaton -- to prove completeness, which however can easily be adapted for the regular case.

\subsection{Comparing \leap\ with other state-merging algorithms}
\label{subsec:comparison-state-merging}
As we have remarked in the introduction, state-merging techniques have been used for inferring ERA and an extended class of ERA in an active-learning framework~\cite{GJP06,HJM20}. The state-merging criterion proposed in~\cite{HJM20} is an adaptation of the criterion proposed in~\cite{GJP06} to handle the larger class. Since in this article we are only dealing with ERA, we compare here the merging criterion proposed in Section~\ref{subsec:leap-algo} with the one proposed in~\cite{GJP06}. 
Intuitively, their algorithm is targeting a fixed language, hence their merging criterion is more restricted and thus less suited for a passive learning algorithm. 
More precisely, their merging criterion relies on the fact that the timed-decision tree contains `complete' information about the words (upto a fixed length) of the target language.
We, on the other hand, only have access to a fixed set of words. Consequently, their merging criteria are more restricted, and thus less suited for a passive learning algorithm. Below, we show through an example, that if we follow the merging criteria of~\cite{GJP06}, then we will, in general, merge less number of nodes.  

\begin{figure}[t]
    \centering
    \begin{tikzpicture}[scale=0.6]
    \everymath{\scriptstyle}
        \begin{scope}[every node/.style={circle, draw, inner sep=2pt,
            minimum size = 3mm, outer sep=3pt, thick}] 
            \node [] (0) at (0,0) {$q_0$}; 
            \node [] (1) at (3,1) {$q_1$};
            \node [] (3) at (3,-1) {$q_3$};
            \node [double] (2) at (6,1) {$q_2$};
            \node [double] (4) at (6,-1) {$q_4$};
        \end{scope}
        \begin{scope}[->, thick]
            \draw (-1,0) to (0); 
            \draw (0) to (1);
            \draw (1) to (2);
            \draw (0) to (3);
            \draw (3) to (4);
        \end{scope}
        \node at (1.2,0.8) { $a,~x_a > 1$};
        \node at (1.3,-0.9) { $b, ~x_b > 1$};
        \node at (4.5,1.3) { $c,~x_a < 3$};
        \node at (4.5,-0.7) { $d,~x_b < 3$};
    \end{tikzpicture}
\caption{Prefix tree generated by \leap}
\label{fig:leap-vs-gjp}
\end{figure}

Consider the sample $\splus = \{(a,x_a > 1)(c, x_a<3), (b, x_b>1)(d, x_b<3)\}$, $\sminus = \{(a,\top)(a,\top), (b,\top)(b,\top)\}$ (the prefix tree constructed by \leap\ for this sample set is depicted in Figure~\ref{fig:leap-vs-gjp}). Here, the nodes $q_1$ and $q_3$ are not merged with the node $q_0$ due to the two negative words present in the sample, respectively; on the other hand, the nodes $q_1$ and $q_3$ are merged by \leap. However, $q_1$ and $q_3$ will not be merged if we follow the criteria of~\cite{GJP06}. This is because, (i) the set of paths from $q_1$ (resp. $q_3$) are not included in the set of paths from $q_3$ (resp. $q_1$), and (ii) the constraint on clocks induced by the paths leading to $q_1$ and $q_3$ are different -- which are the two criteria proposed in~\cite{GJP06} for merging two nodes.

\section{Complexity results}
\label{sec:complexity}

In this section, we will first show that given an ERA $\Aa$ and a symbolic timed word $w$, checking whether the language accepted by $\Aa$ \emph{is not disjoint from} $w$, \emph{i.e.} whether $\lang(\Aa) \cap \sem{w} \neq \emptyset$, is $\np$-complete. Consequently, deciding if a merge is allowed in any iteration of \leap\ is also $\np$-complete. Later, we will discuss the overall complexity of the \leap\ algorithm.

Given an ERA $\A$ and a symbolic timed word $w$, we study the complexity of the following decision problem:

  \begin{fmpage}{0.9\linewidth}
  \label{intersection-non-emptiness-problem}
  \textsc{Intersection non-emptiness problem} \\
  {\bf Input}: An ERA $\A$, and a symbolic timed word $w$.\\
  {\bf Output}: $\mathsf{Yes}$ if $\lang(\A) \cap \sem{w} \neq \emptyset$, and $\mathsf{No}$ otherwise.
  \end{fmpage}

We show that the above problem is $\np$-complete, even for DERA. We prove the hardness by providing a polynomial time reduction from $3$-\textsf{SAT}.

\begin{lemma}
\label{prop:lang-incl-np-hard}
The \textsc{Intersection non-emptiness} problem for DERA is $\np$-hard.
\end{lemma}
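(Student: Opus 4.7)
My plan is to establish NP-hardness via a polynomial-time reduction from $3$-SAT. Given a $3$-SAT formula $\phi = C_1 \wedge \cdots \wedge C_m$ over variables $x_1, \dots, x_n$, I will build a DERA $\Aa$ and a symbolic timed word $w$ such that $\lang(\Aa) \cap \sem{w} \neq \emptyset$ iff $\phi$ is satisfiable. The design idea is to let $w$ leave the timestamps essentially free and have $\Aa$ use its guards (a) to confine each ``variable event'' to one of two adjacent unit sub-intervals that encode the truth value of the corresponding variable, and (b) to verify the clauses one at a time while exploiting the fact that the event-recording clocks still remember the times of the variable events.

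Concretely, take $\Sigma = \{a_1, \dots, a_n, c_1, \dots, c_m\}$ and $w = (a_1, \top) \cdots (a_n, \top)(c_1, \top) \cdots (c_m, \top)$, all with trivial guards. The DERA is a chain of states $q_0 \to q_1 \to \cdots \to q_n \to r_1 \to \cdots \to r_m$ with $r_m$ accepting. The variable-phase transition $q_{i-1} \xrightarrow{a_i} q_i$ carries the guard $x_{a_i} \in (2i-2,\, 2i)$, and I interpret the timestamp $t_i \in (2i-1,\, 2i)$ as meaning $x_i$ is \emph{true} and $t_i \in (2i-2,\, 2i-1)$ as meaning $x_i$ is \emph{false}. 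Each clause-phase transition on $c_j$ additionally pins the event by the atomic constraint $x_{c_j} = 2n + j$. A direct computation then shows that at event $c_j$ the recording clock $x_{a_i} = t_{c_j} - t_i$ lies in $(2(n-i)+j,\, 2(n-i)+j+1)$ when $x_i$ is true and in $(2(n-i)+j+1,\, 2(n-i)+j+2)$ when $x_i$ is false---two disjoint unit intervals, each a legal atomic constraint on the single clock $x_{a_i}$.

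For each clause $C_j = l_{j,1} \vee l_{j,2} \vee l_{j,3}$, I would place three parallel $c_j$-transitions from $r_{j-1}$ to $r_j$ in a \emph{short-circuit} pattern: $T_1$ requires $l_{j,1}$ to be satisfied; $T_2$ requires $l_{j,1}$ falsified and $l_{j,2}$ satisfied; $T_3$ requires $l_{j,1}, l_{j,2}$ falsified and $l_{j,3}$ satisfied. Each ``$l$ satisfied'' or ``$l$ falsified'' condition translates to placing the clock of $l$'s underlying variable in the true- or false-interval above, depending on the polarity of $l$. Because $T_2, T_3$ pin $l_{j,1}$'s variable to the opposite sub-interval from $T_1$ (and likewise $T_3$ vs.\ $T_2$ for $l_{j,2}$), the three guards are pairwise disjoint, so $\Aa$ stays deterministic; and their union is exactly ``$C_j$ is satisfied''.

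Correctness follows in both directions by routine verification. From a satisfying assignment of $\phi$ I would build a timed word in $\lang(\Aa) \cap \sem{w}$ by picking each $t_i$ in the sub-interval prescribed by the assignment, setting $t_{c_j} = 2n + j$, and for each $j$ firing the first $T_k$ whose literal is satisfied. Conversely, any timed word in the intersection has its $t_i$'s pinned by the DERA into one of the two sub-intervals, which yields an assignment, and the firing $T_k$ at each $c_j$ witnesses that $C_j$ holds under it. The main technical obstacle I foresee is reconciling DERA-determinism with the disjunctive semantics of clause satisfaction; the short-circuit pattern is exactly the trick that makes the three per-clause guards pairwise disjoint. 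A minor detail is the degenerate case where two literals of a clause share a variable: one of $T_2, T_3$ may then acquire an unsatisfiable guard and is simply dropped in view of the paper's convention that only non-trivial constraints are retained. All pieces have size polynomial in $n + m$, so the reduction is polynomial-time.
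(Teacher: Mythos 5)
Your reduction is correct and follows the same overall strategy as the paper's proof: a polynomial-time reduction from $3$-SAT in which the symbolic word carries only trivial guards, the truth assignment is encoded in the timing of one event per variable, and the clauses are then checked sequentially using the fact that the event-recording clocks still remember when the variable events occurred. The gadgets differ in two respects. First, the paper encodes the value of $p_i$ by a \emph{branch} in the automaton (two states $q_i$, $\bar{q}_i$ reached under guards $x_\delim=1$ versus $x_\delim=2$, with a delimiter event closing each $3$-unit block), whereas you keep the variable phase a single chain and encode the value by which open unit sub-interval of $(2i-2,2i)$ the timestamp falls into; both work. Second, and more interestingly, you enforce determinism in the clause phase by a short-circuit decomposition of each clause into three pairwise-disjoint guards, whereas the paper keeps one $\ok$-transition per literal and restores determinism by expanding each such transition into the set of transitions that pin \emph{all} of the clocks $x_{p_1},\dots,x_{p_n}$ to specific values; your disjunction-free trick is arguably cleaner and avoids that expansion. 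One minor imprecision to fix: your converse direction asserts that the DERA pins every $t_i$ into one of the two sub-intervals, but the phase-one guard $x_{a_i}\in(2i-2,2i)$ also admits the midpoint $t_i=2i-1$. This is harmless --- the transition actually fired at each clause pins the variable of its witnessing literal strictly inside a sub-interval, these pins are mutually consistent since a given $t_i$ lies in at most one sub-interval, and unpinned variables may be assigned arbitrarily --- but the sentence should be weakened accordingly (or the midpoint excluded outright by splitting the phase-one transition into two with disjoint guards $(2i-2,2i-1)$ and $(2i-1,2i)$, which preserves determinism).
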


\begin{proof}
The proof of $\np$-hardness is established via a polynomial-time reduction from $3$-\textsf{SAT}, which is known to be $\np$-complete~\cite{Cook71}. Given a $3$-CNF instance $\varphi$, we construct a DERA $\Aa_{\varphi}$ and a symbolic timed word $w_{\varphi}$ such that $\varphi$ is satisfiable if and only if $\lang(\Aa_{\varphi}) \cap \sem{w_{\varphi}}$ is non-empty.

The high-level idea is as follows: the automaton $\Aa_{\varphi}$ consists of two parts. The first part includes a block of states for each propositional variable in $\varphi$, ensuring that every run of $\Aa_{\varphi}$ corresponds to an assignment of these variables. The second part ensures that if a run of $\Aa_{\varphi}$ completes the first part, the corresponding assignment must make at least one literal in each clause of $\varphi$ true. Together, they ensure that every run of $\Aa_{\varphi}$ corresponds to a satisfying assignment for $\varphi$.
Then we construct $w_{\varphi}$ in such a way that, for every timed word $w_t$ that has an accepting run in $\Aa_\varphi$, we will have $w_t \in \sem{w_\varphi}$. This in turn will imply that $\varphi$ is satisfiable if and only if $ \lang(\A_{\varphi}) \cap \sem{w_{\varphi}} \neq \emptyset$. 
We formalize this idea below. An example of the construction of $\Aa_\varphi$ is depicted in Figure~\ref{fig:np-hard-example}.

We first fix some notations: let $\varphi$ be the $3$-CNF formula $C_1 \wedge C_2 \wedge \ldots \wedge C_m$, where each $C_h$ is of the form $\ell_{h,1} \vee \ell_{h,2} \vee \ell_{h,3}$. We call each $C_h$ a \emph{clause} and each $\ell_{h,j}$ a \emph{literal}. We assume $p_1,p_2, \ldots, p_n$  are all the propositional variables present in~$\varphi$. W.l.o.g., the order on the variables is fixed throughout the construction.
A literal is then either a propositional variable or the negation of a propositional variable, \emph{i.e.}, for every $1 \leq j \leq 3$,
  $\ell_{h,j} \in \{p_i, \neg p_i \mid 1 \le i \le n\}$.
\medskip

\noindent\textbf{The automaton $\Aa_{\varphi}$.} 
The alphabet $\Sigma$ of $\Aa_{\varphi}$ consists of $p_i$ for every $1 \le i \le n$ and two special letters $\delim$ and $\ok$. Since $\Aa_{\varphi}$ is an ERA, we have a clock $x_\sigma$ for every $\sigma \in \Sigma$.
Intuitively, the amount of delay required before taking the transitions labelled with $p_i$'s will be used to determine the values of the propositions,
the transitions on $\delim$ will maintain the time elapsed during an execution, and the transitions on $\ok$ will be used to ensure that every clause evaluates to $\tt$.

Formally,
$\A_{\varphi} = (Q, \Sigma, q_0, \Delta, F)$, 
where:
$Q = \{q_0\} \cup \{v_i \mid 0\le i \le n\} \cup \{q_{i},\bar{q}_i \mid 1\le i \le n\} \cup \{C_{h} \mid 1\le h \le m+1\}$;
 $\Sigma = \{p_i \mid 1\le i \le n\} \cup \{\delim\} \cup \{\ok\}$; and
 $q_0 \in Q$ is the initial state. 
Recall here that $n$ is the number of propositional variables and $m$ is the number of clauses present in $\varphi$.
\begin{figure}[t]
        \centering
        \begin{tikzpicture}[scale = 0.8]
            \everymath{\scriptstyle}
            \begin{scope}[every node/.style={circle, draw, inner sep=2pt,
                minimum size = 3mm, outer sep=2pt, thick}] 
                \node (-1) at (-1.3,0) {$q_0$};

                \node (0) at (0,0) {$\scriptscriptstyle v_0$};
                
                \node (1) at (1,1) {$q_1$};
                \node (2) at (2,-1) {$\bar{q}_1$};
                
                \node (3) at (3,0) {$\scriptscriptstyle v_1$};
                
                \node (4) at (4,1) {$q_2$};
                \node (5) at (5,-1) {$\bar{q}_2$};
                
                \node (6) at (6,0) {$\scriptscriptstyle v_2$};
                
                \node (7) at (7,1) {$q_3$};
                \node (8) at (8,-1) {$\bar{q}_3$};
                
                \node (9) at (9,0) {$\scriptscriptstyle v_3$};
                \node (10) at (10,1) {$q_4$};
                \node (11) at (11,-1) {$\bar{q}_4$};
                \node (12) at (12,0) {$\scriptscriptstyle v_4$};

                \node (C1) at (0,-4) {$C_1$};
                \node (C2) at (4,-4) {$C_2$};
                \node (C3) at (8,-4) {$C_3$};
                \node [accepting, outer sep=2.5pt] (fin) at (12,-4) {$C_4$};
            \end{scope}
            
            \begin{scope}[->, thick]
             \draw (-1.3,1) to (-1); 
                \draw (-1) to  node [sloped,midway,above] {$\delim$}(0); 
                
                \draw [ color=red] (0) to  node[midway,above,sloped] {$p_1$}
                node [ midway, below, sloped] {$x_\delim = 1$}  (1);
                
                \draw [ color=blue] (0) to node[midway,above,sloped] {$p_1$}
                node [ midway, below, sloped] {$x_\delim = 2$}  (2);
                
                \draw [] (1) to node[midway,above,sloped] {$\delim$}
                node [ midway, below, sloped] {$x_{p_1} = 2$} (3);
                
                \draw [] (2) to node[midway,above,sloped] {$\delim$}
                node [ midway, below, sloped] {$x_{p_1} = 1$} (3);
                
                \draw [color=red] (3) to node[midway,above,sloped] {$p_2$}
                node [ midway, below, sloped] {$x_\delim = 1$} (4);
                
                \draw [ color=blue] (3) to node[midway,above,sloped] {$p_2$}
                node [ midway, below, sloped] {$x_\delim = 2$} (5);
                
                \draw [] (4) to node[midway,above,sloped] {$\delim$}
                node [ midway, below, sloped] {$x_{p_2} = 2$} (6);
                
                \draw [] (5) to node[midway,above,sloped] {$\delim$}
                node [ midway, below, sloped] {$x_{p_2} = 1$} (6);
                
                \draw [ color=red] (6)  to node[midway,above,sloped] {$p_3$}
                node [ midway, below, sloped]{$x_{\delim} = 1$} (7);
                
                \draw [ color=blue] (6)  to node[midway,above,sloped] {$p_3$}
                node [ midway, below, sloped]{$x_{\delim} = 2$}(8);
                
                \draw [] (7) to node[midway,above,sloped] {$\delim$}
                node [ midway, below, sloped] {$x_{p_3} = 2$}  (9);
                
                \draw [] (8) to node[midway,above,sloped] {$\delim$}
                node [ midway, below, sloped] {$x_{p_3} = 1$}  (9);
                
                \draw [color = red] (9) to node[midway,above,sloped] {$p_4$}
                node [ midway, below, sloped]{$x_{\delim} = 1$} (10);
                
                \draw [color = blue] (9) to node[midway,above,sloped] {$p_4$}
                node [ midway, below, sloped]{$x_{\delim} = 2$} (11);
                
                \draw [] (10) to node[midway,above,sloped] {$\delim$}
                node [ midway, below, sloped]{$x_{p_4} = 2$} (12);
                
                \draw [] (11) to node[midway,above,sloped] {$\delim$}
                node [ midway, below, sloped]{$x_{p_4} = 1$} (12);
                
                \draw [rounded corners, color=green!60!black] (C1) to (1,-3) to node [midway,above] {$\ok$} node[midway,below]{$x_{p_1} = 11 \wedge x_\delim = 0$} (3,-3) to (C2); 
                \draw [color=green!60!black]  (C1) to node [midway,above] {$\ok$} node [midway,below] {$x_{p_2} = 8 \wedge x_\delim = 0$}(C2); 
                \draw [rounded corners,color=green!60!black] (C1) to (1,-5) to node [midway,above] {$\ok$} node[midway,below]{$x_{p_3} = 4 \wedge x_\delim = 0$} (3,-5) to (C2);
                
                \draw [rounded corners, color=green!60!black] (C2) to (5,-3) to node [midway,above] {$\ok$} node[midway,below]{$x_{p_2} = 7 \wedge x_\delim = 0$} (7,-3) to (C3); 
                \draw [color=green!60!black]  (C2) to node [midway,above] {$\ok$} node [midway,below] {$x_{p_3} = 4 \wedge x_\delim = 0$}(C3); 
                \draw [rounded corners, color=green!60!black] (C2) to (5,-5) to node [midway,above] {$\ok$} node[midway,below]{$x_{p_4} = 2 \wedge x_\delim = 0$} (7,-5) to (C3);

                \draw [rounded corners, color=green!60!black] (C3) to (9,-3) to node [midway,above] {$\ok$} node[midway,below]{$x_{p_1} = 10 \wedge x_\delim = 0$} (11,-3) to (fin); 
                \draw [color=green!60!black]  (C3) to node [midway,above] {$\ok$} node [midway,below] {$x_{p_3} = 5 \wedge x_\delim = 0$}(fin); 
                \draw [rounded corners, color=green!60!black] (C3) to (9,-5) to node [midway,above] {$\ok$} node[midway,below]{$x_{p_4} = 1 \wedge x_\delim = 0$} (11,-5) to (fin);

             \draw [rounded corners] (12) to (12.8,0) to (12.8,-2) to node [midway, above] {$\delim$} node [midway, below] {$x_{\delim} = 0$} (-1.3,-2) to (-1.3,-4) to (C1);
            \end{scope}
        \end{tikzpicture}
        \caption{The automaton $\A_{\varphi}$ corresponding to the formula 
        $\varphi = (p_1 \vee p_2 \vee \neg p_3)\land (\neg p_2 \vee \neg p_3 \vee p_4)\land (\neg p_1 \vee  p_3 \vee \neg p_4)$.
        The corresponding symbolic timed word is 
        $w_\varphi = ({\delim}, \top) (p_1, \top) (\delim, \top) (p_2, \top) (\delim, \top) (p_3, \top) (\delim, \top) (p_4, \top) (\delim, \top) ( \delim, \top) (\ok, \top)^3$. }
        \label{fig:np-hard-example}
\end{figure}
The transition relation  $\Delta$ is defined as follows.
For every $i\in \{1,2,\ldots,n\}$, from the state $v_{i-1}$, we keep two outgoing transitions, one to $q_i$ and the other to $\bar{q}_i$, which denotes the fact that the proposition $p_i$ has been set to $\tt$ and $\ff$, respectively. From each of $q_i$ and $\bar{q}_i$, there is a transition to $v_{i+1}$. The guards on these transitions ensure that $v_{i+1}$ is reached exactly after $3$ time units from $v_{i}$. Formally, 
$$v_{i-1} ~\red{\xrightarrow[\xs = 1]{p_i}}~ q_{i} \xrightarrow[x_{p_i} = 2]{\delim} v_i,\ \ 
v_{i-1} ~\blue{\xrightarrow[\xs = 2]{p_i}}~ \bar{q}_{i} \xrightarrow[x_{p_i} = 1]{\delim} v_i, \text{ for every } 1\le i \le n.$$

Notice that each run of $\Aa_{\varphi}$ can take only one of these paths for each $i$, since the guards on the transitions on $p_i$'s are disjoint.
A run of $\A_\varphi$, from $v_0$ to $v_n$, will determine the values of the propositions $p_i$'s: if the path through~$q_{i}$ is taken, we assign $p_i$ to $\tt$ and when the other path (through~$\bar{q}_{i}$) is taken, we assign $p_i$ to $\ff$.
Additionally, we add the following transition from the initial state $q_0$ that resets the clock $\xs$:
$q_0 \xrightarrow[\top]{\delim} v_{0}$.

The second phase begins from $C_1$. In this phase, we ensure that the automaton cannot elapse time at any of the states. We, for the sake of readability, keep a (dummy) transition between the two phases on the letter $\delim$, that resets the clock $\xs$, and we keep the guard in such a way that no time can elapse at $v_n$:
$v_n \xrightarrow[x_\delim = 0]{\delim} C_1$.
We then \emph{sequentially} check whether,
for every $h \in \{1,2,\ldots,m\}$, a literal in $C_h$ is made $\tt$ by the valuation chosen in the first phase.
This is achieved using appropriate guards on the transitions:
for every $1 \le j \le 3$, if $\ell_{h,j} = p_i$ for some $p_i$, then we add:
$C_{h} \xrightarrow[x_{p_i} = 3\times (n-i)+2 \wedge x_{\delim} = 0]{\ok}C_{h+1}$,
and if $\ell_{h,j} = \neg p_i$ for some $p_i$, then we add:
$C_{h} \xrightarrow[x_{p_i} = 3\times (n-i)+1 \wedge x_{\delim} = 0]{\ok}C_{h+1}$.
These transisitions are depicted in green in Figure~\ref{fig:np-hard-example}. However, each of them actually represent sets of transitions. For example, the transition $(C_1, \ok, x_{p_1} = 11 \wedge x_{\delim} = 0, C_2)$ denotes the set of transitions $\{(C_1, \ok, \psi, C_2) \mid \psi \in \bigcup_{d_2 \in \{7,8\}, d_3 \in \{4,5\}, d_4 \in \{1,2\}}\{x_{p_1} = 11 \wedge x_{p_2} = d_2 \wedge x_{p_3} = d_3 \wedge x_{p_4} = d_4\}\}$. 
Now note that, although the automaton in Figure~\ref{fig:np-hard-example} is non-deterministic as it is presented, if the green edges are replaced with the set of edges it represents as described above (removing redundant edges, if present), then the resulting automaton will indeed be deterministic. 
Finally, we make $C_{m+1}$ the unique final state of the automaton: $F = \{C_{m+1}\}$.

\smallskip
To illustrate the assignments to propositions, consider the following example. Suppose, $\ell_{h,2} = p_i$ for some $h$. Then, if a run of $\A_{\varphi}$ has visited the state $q_{i}$, it means $p_i$ has been set to $\tt$, and so is $\ell_{h,2}$. Note that this can be checked using the guard $x_{p_i} = 3\times(n-i)+ 2$, where the value of $x_{p_i}$ is the time elapsed since~$q_{i}$. On the other hand, if the run had visited the state $\bar{q}_{i}$, then $p_i$ would be $\ff$, and so is $\ell_{h,2}$, which can again be checked using the guard $x_{p_i} = 3\times(n-i)+1$. 

\medskip
\noindent\textbf{The symbolic timed word $w_{\varphi}$} is defined in a way so that $\sem{w_\varphi}$ is the set of all timed words whose untimed projection is $\delim\ 
  p_{1} \delim\  p_{2} \delim\  p_{n} \delim \ \delim \ \ok^{m}$. Note that, this untimed word is 
  visited in every accepting run in $\Aa_\varphi$.
  $$w_\varphi  := (\delim, \top)(p_1, \top)
  \ldots (\delim, \top) (p_n,\top)  
  (\delim,\top) \ (\delim, \top) \
  (\ok,\top) \ldots (\ok,\top).$$

The above reduction ensures the following equivalence.

\begin{restatable}{lemma}{hardness}
\label{lem:NP-h-correctness}
  Given a $3$-CNF formula $\varphi$, $\varphi$ is satisfiable iff $ \lang(\A_{\varphi}) \cap \sem{w_{\varphi}} \neq \emptyset$.
\end{restatable}

\begin{proof}
\label{proof:lem-hardness}
    Note that, there is an one-one correspondence between runs of $\A_\varphi$ until $v_{n}$ -- which we call a \emph{history} -- and valuations over $\{p_1, \cdots, p_{n}\}$: the valuation
$\iota_\pi$ associated with a history
$\pi = v_0 v'_1 v_1 v'_2 v_2 \cdots v_{n}$ is such that
$\iota_\pi(p_i) = 1$ if $v'_i = q_{i}$ and $\iota_\pi(p_i) = 0$ if
$v'_i = {\bar{q}_i}$. Indeed, this correspondence is a bijection: given a valuation $\iota$ over $\{p_1, \cdots, p_{n}\}$, there
is a unique history $\pi_\iota = v_0 v'_1 v_1 v'_2 v_2 \cdots v_{n}$
such that $v'_i = q_i$ if $\iota(p_i) = 1$, and $v'_i = {\bar{q}_i}$ if $\iota(p_i) = 0$.

 For a literal $\ell_{h,j}$, we define $var(\ell_{h,j}) = p_i$ if $\ell_{h,j} \in \{p_i, \neg p_i\}$ for some $1 \le i \le n$.
 
Now suppose that there is a valuation $\iota$ such that $\iota \models C_1 \wedge \dots \wedge C_m$.  Then, for every
  $1 \le h \le m$, there is $\ell_{h,j}$ (literal of $C_h$) such that  $\iota(\ell_{h,j})=1$. We show an accepting run of $\A_\varphi$ on $w_\varphi$. In the first phase, $\A_\varphi$ takes the run corresponding to $\pi_\iota$, and then in the second phase, from $C_h$, reading the letter $\ok$, it takes the transition corresponding to the variable $var(\ell_{h,j})$, where $\ell_{h,j}$ is $\tt$ in $C_h$. The construction of $\A_\varphi$ ensures that this transition on $\ok$ can indeed be taken from $C_h$. This is true for every $h \in \{1,2,\ldots,m\}$. This produces an accepting run of $\A_\varphi$ on $w_\varphi$.

  Conversely, suppose that there is an accepting run $\rho$ of $\A_\varphi$ on $w_\varphi$. The first part of the run, say $\pi$, generates a valuation in the first phase of the automaton $\iota_\pi$. Since $\rho$ is accepting, in the second phase, from every $C_h$, a guard on $var(\ell_{h,j})$ is satisfied for some $j \in \{1,2,3\}$. That implies, for every clause $C_h$, there is a literal $\ell_{h,j}$ which becomes true w.r.t. the valuation $\iota_\pi$. Hence, $\iota_\pi \models C_1 \wedge \dots \wedge C_m$. 
\end{proof}

This concludes the proof of Lemma~\ref{prop:lang-incl-np-hard}. 
\end{proof}

We can also show that the \textsc{Intersection Non-emptiness} problem for (possibly non-deterministic) ERA is in $\np$. Then the following theorem follows.

\begin{restatable}{theorem}{npcompleteness}
\label{corr:hardness-for-dera}
The \textsc{Intersection non-emptiness problem} is $\np$-complete, both for ERA and DERA.
\end{restatable}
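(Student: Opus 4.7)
The plan is to combine the hardness result already obtained in Lemma~\ref{prop:lang-incl-np-hard} with a matching upper bound that works for the broader class of (possibly nondeterministic) ERA. For $\np$-hardness, there is nothing further to do: Lemma~\ref{prop:lang-incl-np-hard} gives hardness for DERA, and since every DERA is trivially an ERA, hardness transfers to the more general class. So the remaining task is to show that the \textsc{Intersection non-emptiness} problem lies in $\np$ for arbitrary ERA.

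First, I would identify a small certificate. Given an ERA $\A$ and a symbolic timed word $w = (\sigma_1,g_1)\dots(\sigma_p,g_p)$, a yes-witness consists of a transition sequence $e_1 e_2 \dots e_p$ with $e_i = (q_{i-1},\sigma_i,\psi_i,q_i)$ such that $q_0 = q_{\mathsf{init}}$ and $q_p \in F$. Such a sequence has size polynomial in $|\A|$ and $|w|$, since any accepting run on a timed word compatible with $w$ necessarily consumes exactly $|w|$ letters, one per transition. This is essentially the same object that the Boolean variables of the SMT encoding described in Section~\ref{smt-encoding} are used to guess.

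Second, I would describe a deterministic polynomial-time verifier. Once the transition sequence is fixed, the only remaining unknowns are the real timestamps $t_1,\dots,t_p$. The verifier constructs the conjunction of: (i) the monotonicity constraints $0 \le t_1 \le \dots \le t_p$; (ii) for each position $i$, the constraint $\phi_{g_i} \wedge \phi_{\psi_i}$ obtained by replacing every clock $x_\sigma$ with $t_i - t_{j(i,\sigma)}$, where $j(i,\sigma)$ is the largest index $j < i$ with $\sigma_j = \sigma$ (and $t_0 = 0$ by convention). Since $K$-constraints in ERA and symbolic timed words are, by definition, conjunctions of atomic linear constraints, the result is a pure linear feasibility instance of polynomial size, which can be solved in polynomial time. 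Correctness follows because models of this linear system are precisely the timed words $w_t \in \sem{w}$ that traverse the chosen run in $\A$.

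The mildly delicate point I foresee is only the bookkeeping needed to argue that, after fixing the run, no disjunctions remain and standard LP feasibility suffices; this relies crucially on the fact that ERA guards are conjunctions rather than arbitrary Boolean combinations. Putting guess and verify together yields an $\np$ algorithm, which combined with Lemma~\ref{prop:lang-incl-np-hard} gives $\np$-completeness for both ERA and DERA.
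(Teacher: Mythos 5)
Your proposal is correct and follows essentially the same route as the paper: hardness is inherited from Lemma~\ref{prop:lang-incl-np-hard}, and membership in $\np$ is shown by guessing an accepting transition sequence of length $|w|$ and then checking, via a polynomial-size linear feasibility instance over the timestamps $t_1,\dots,t_p$ (with clocks replaced by differences $t_i - t_j$), whether some compatible timed word realizes that run. Your explicit remark that conjunctive guards leave no disjunctions after fixing the run is exactly the (implicit) reason the paper's verifier reduces to a single linear program.
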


\begin{proof}
\label{proof:NP-complete}
We have established hardness in Lemma~\ref{lem:NP-h-correctness}. Here we show that the \textsc{Intersection Non-emptiness} problem for ERA is in $\np$. 

    A certificate for this would be a path $\rho$ in $\Aa$, starting from the initial state and ending at an accepting state.
    We now show that, given a path $\rho$, it is possible to check deterministically in polynomial time if there exists a timed word $\word = (\sigma_1, t_1) (\sigma_2, t_2) \ldots (\sigma_n, t_n)$, such that $\Aa$ has an accepting run on $\word$ following the path $\rho$ and moreover, $\word \in \sem{w}$. 
    The following algorithm determines whether such a timed word  exists. 
    \begin{itemize}
        \item initialize $\varphi_\rho := \top$ and $\varphi_w := \top$
        \item for every edge $(q_i, \sigma_i, g_i, q_{i+1})$ in $\rho$:
        update $\varphi_\rho$ to $\varphi_\rho \wedge \tilde{g_i}$, where $\tilde{g_i}$ is the guard constructed from $g_i$ in the following manner --
        if $g_i$ contains an atomic constraint $x_{\sigma} \sim c$, then modify that constraint to $t_i-t_j \sim c$, where $j < i$ is the latest position in $w$ marked with the event $\sigma$.
        \item let $\varphi_w$ be the conjunction of all constraints coming from $w$: if the $i$-th letter present in $w$ is $(\sigma_i, g_i)$, then we similarly construct $\tilde{g_i}$ from $g_i$, and update $\varphi_w := \varphi_w \wedge \tilde{g_i}$.
        \item check if $\varphi_\rho \land \varphi_w$ is satisfiable where each $t_i \in \R_{\ge 0}$ and $t_i \le t_{i+1}$ for every $i$
    \end{itemize}
    Note now that the size $\varphi_w$ is that of $w$ and the size of $\varphi_\rho$ depends on the size of $\Aa$. Finally, checking if $\varphi_\rho \land \varphi_w$ is satisfiable or not is same as solving a linear program, which is known to be solvable in polynomial time. This proves that checking \textsc{Intersection Non-Emptiness} is in $\np$.
\end{proof}

However, if the sample set only contains region words and the ERA is deterministic, then the \textsc{Intersection non-emptiness problem} is in $\p$.
\begin{lemma}
\label{lem:reg-word-poly}
Given any DERA $\A$ and any symbolic timed word $w$ s.t. all guards in $\A$ and all guards in $w$ are $K$-simple constraints, for some $K \in \N$, then the \textsc{Intersection non-emptiness} problem can be solved in polynomial time.
\end{lemma}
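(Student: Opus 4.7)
The plan is to exploit Lemma~\ref{lem:rws-dont-intersect}, which states that any two $K$-simple constraints have either identical or disjoint semantics. Combined with the determinism of $\A$, this implies that at each state $q$ and each letter $\sigma$, the outgoing transitions of $\A$ on $\sigma$ correspond to pairwise distinct simple-constraint equivalence classes (two transitions with guards of equal semantics would violate the DERA condition, since their intersection would be non-empty). Given a symbolic word $w = (\sigma_1, r_1)\cdots (\sigma_n, r_n)$ with simple-constraint guards $r_i$, the first step of the algorithm is to simulate $\A$ on $w$ deterministically: starting from $q_{\mathsf{init}}$, at step $i$ we look from the current state $q_{i-1}$ for the unique (if it exists) transition $(q_{i-1}, \sigma_i, g_i, q_i)$ with $\sem{g_i} = \sem{r_i}$. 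Testing equality of two simple constraints reduces to comparing their elementary conjuncts on each clock in $X_\Sigma$, so each step runs in polynomial time. The simulation either fails at some step, in which case we return No, or reaches some state $q_n$ after $n$ steps.

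The key claim to prove is then: $\lang(\A) \cap \sem{w} \neq \emptyset$ iff the simulation succeeds with $q_n \in F$ and $\sem{w} \neq \emptyset$. For the ``if'' direction, any $w_t \in \sem{w}$ has valuations $v_i$ satisfying $r_i$, and hence also $g_i$ at every step; so the simulated run is an accepting run of $\A$ on $w_t$, witnessing $w_t \in \lang(\A)$. Conversely, any $w_t \in \lang(\A) \cap \sem{w}$ admits an accepting run of $\A$ using some guards $g'_i$ at each step; since $v_i$ satisfies both $r_i$ and $g'_i$, Lemma~\ref{lem:rws-dont-intersect} forces $\sem{g'_i} = \sem{r_i}$, and the uniqueness observation above implies that this run coincides with the simulated one, which therefore must end in an accepting state.

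The last step is to check whether $\sem{w} \neq \emptyset$. This can be encoded as a system of linear constraints over real variables $t_1, \ldots, t_n$: we impose $0 \leq t_1 \leq \cdots \leq t_n$, and for every elementary conjunct $x_\sigma \sim c$ of $r_i$ we add $t_i - t_j \sim c$, where $j = \max\{\ell < i \mid \sigma_\ell = \sigma\}$ (with $t_0 = 0$ by convention). The resulting linear program has polynomially many variables and constraints, and its feasibility is decidable in polynomial time, completing the polynomial-time algorithm.

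I do not anticipate a substantial obstacle in this proof. The main point to verify carefully is the uniqueness argument above: the combinatorial explosion responsible for $\np$-hardness in Lemma~\ref{prop:lang-incl-np-hard} arises from guards whose semantics properly overlap, creating disjunctive choices among transitions; the simple-constraint hypothesis eliminates precisely this freedom by reducing each transition selection to a syntactic equivalence test, leaving only a standard linear-programming feasibility check at the end.
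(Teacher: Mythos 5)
Your proposal is correct and follows essentially the same approach as the paper: syntactically trace the (unique, by determinism and Lemma~\ref{lem:rws-dont-intersect}) path of $\A$ along $w$, then reduce the remaining check $\sem{w} \neq \emptyset$ to a linear program. The only cosmetic difference is that the paper assumes $\A$ is complete so the traced path always exists, whereas you handle the failure of the simulation explicitly and spell out the correctness equivalence in more detail.
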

\begin{proof}
 Without loss of any generality, we can assume that $\A$ is complete, \emph{i.e.}, for every state $q$, every letter $\sigma \in \Sigma$, and every $K$-simple constraint $r$, there exists exactly one outgoing transition from $q$ on $(\sigma,r)$. 
Now, since the guards present in $w$ are also $K$-simple constraints, one can just syntactically trace the path corresponding to $w$ in $\Aa$ (it always exists, since $\Aa$ is complete). 
 Then, it only remains to check if this path can indeed be taken by the automaton. This can be determined by checking whether $\sem{w} \neq \emptyset$ or not, which can be checked by solving a linear program.
\end{proof}

Note that, having to solve an $\np$-complete problem for every merge (unlike the case in RPNI for regular languages) is somewhat expected, as we deal with sequences of constraints over clocks. Even $\pspace$-completeness is common in analyzing Timed Automata, yet tools like UPPAAL~\cite{uppaal} remain useful in practice.

\subparagraph*{Complexity of $\leap$.}
The size of the $\PT$ constructed by $\leap$ is \emph{linear} in the size of $\splus$. The rest of the algorithm tries to merge states of this $\PT$ to get smaller models at each iteration, the total number of merges performed by the algorithm is \emph{polynomial} in the size of the $\PT$.
Now, in the most general case, when the sample contains symbolic timed words, at each iteration, the algorithm uses an $\np$ oracle (the algorithm for \textsc{Intersection non-emptiness problem}) to determine whether the merge is allowed or not. Therefore, the overall complexity of $\leap$ is $\p^{\np}$. 
Whereas, if the sample sets only contain region words, since \textsc{Intersection non-emptiness problem} is in $\p$ (Lemma~\ref{lem:reg-word-poly}), \leap\  also is in $\p$.
\begin{theorem}
\label{thm:complexity}
\leap\ executes in polynomial time when the sample sets contain only region words, and executes with a polynomial number of calls to an $\np$-oracle when the sample sets contain symbolic words in general.
\end{theorem}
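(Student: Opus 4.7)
The plan is to bound three quantities separately and then multiply: the size of the prefix tree, the number of iterations of the main loop of \leap, and the cost of a single merge-check. First, I would observe that the prefix tree $\PT(\splus)$ constructed on line~1 has at most $|\pref{\splus}| + 1$ nodes, which is linear in the total length of words in $\splus$, and that the edge set and folded-merge operations can be carried out in time polynomial in $|\PT(\splus)|$. Next, I would show that the main \textbf{while} loop runs for only polynomially many iterations: each iteration pops exactly one blue node, and the invariant of the algorithm is that a node from $\PT$ enters \textsc{Blue} at most once, because (i) after a successful merge the blue node is discarded, and (ii) after a promotion it is placed in \textsc{Red} and its children are enqueued with the side condition $\notin \textsc{Blue}$. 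Hence the loop executes $O(|\PT|)$ times, and in each iteration the inner \textbf{for} loop does at most $|\textsc{Red}| \le |\PT|$ merge attempts.

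The cost of a single merge attempt is the bottleneck, and I would split it according to the two cases in the statement. A trial merge-and-fold between a red node $q_u$ and a blue node $q_v$ produces a candidate automaton $\Aa$ of size at most $|\PT|$ in polynomial time. To decide whether $\Aa$ is consistent with the negative sample, we must check, for every $w \in \sminus$, whether $\lang(\Aa) \cap \sem{w} = \emptyset$, i.e. we must solve $|\sminus|$ instances of \textsc{Intersection non-emptiness}.

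For the general symbolic case this is exactly the problem shown $\np$-complete in Theorem~\ref{corr:hardness-for-dera}, so each such check is one call to an $\np$-oracle; multiplying by the polynomial bounds on the number of iterations, red nodes and size of $\sminus$, we obtain a polynomial number of oracle calls, i.e.\ membership in $\p^{\np}$. For the region-word case I would appeal to Lemma~\ref{lem:reg-word-poly}, which gives a polynomial-time procedure whenever the automaton is a DERA and all guards in both the automaton and the queried word are $K$-simple. The main subtlety — and the step I expect to require the most care — is to argue that the intermediate automata maintained by \leap\ actually satisfy the hypotheses of that lemma in the region-word case. Because $\PT$ is built from region words, any two outgoing $(\sigma, r_1)$ and $(\sigma, r_2)$ transitions from the same node in $\PT$ are associated with $K$-simple constraints, so by Lemma~\ref{lem:rws-dont-intersect} either $r_1 = r_2$ (impossible in $\PT$ by construction) or $\sem{r_1} \cap \sem{r_2} = \emptyset$; hence $\PT$ is deterministic. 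I would then prove by induction that the folding step in $\mathsf{merge}$ preserves this property: when merging $q_u$ with $q_v$, any conflict arising from two $(\sigma, r)$ transitions with the \emph{same} guard is resolved by folding the targets, while two distinct $K$-simple guards $r \neq r'$ remain disjoint by Lemma~\ref{lem:rws-dont-intersect}. Consequently the intermediate automaton is always a DERA whose guards are $K$-simple, Lemma~\ref{lem:reg-word-poly} applies to every consistency check, and the total running time of \leap\ stays polynomial. Combining the two cases yields the claimed complexity.
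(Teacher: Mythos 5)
Your proposal is correct and follows essentially the same decomposition as the paper: a linear bound on the size of $\PT(\splus)$, a polynomial bound on the number of iterations (each iteration either removes a node by merging or permanently promotes one), and a per-merge consistency check that is one \textsc{Intersection non-emptiness} instance per word of $\sminus$, resolved via an $\np$-oracle in the symbolic case and via Lemma~\ref{lem:reg-word-poly} in the region-word case. The only addition is your explicit argument that the intermediate automata remain deterministic with $K$-simple guards when the sample contains only region words; the paper asserts this elsewhere (via Lemma~\ref{lem:rws-dont-intersect} and the remarks in Section~\ref{sec:leap-completeness-sdera}) rather than inside the complexity argument, so this is a welcome but not divergent piece of bookkeeping.
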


\subparagraph*{Symbolic timed words vs region words.}
From any input sample set $S$ containing symbolic timed words, one can construct a sample set $S^{\rw}$ containing only region words 
(Lemma~\ref{lem:justify-region}),
and apply \leap\ on $S^{\rw}$. Since  \leap\ runs in polynomial time  in the size of $S^{\rw}$ (Lemma~\ref{lem:reg-word-poly}), and $S^{\rw}$ contains at most exponentially many samples than in $S$, the worst case complexities of the procedures are indeed similar. However, as also witnessed in the theory of TA, dealing with \emph{zones} is often more efficient than dealing with \emph{regions} in practice. We will see in Section~\ref{sec:experiments} that this is indeed also the case for \leap, \emph{i.e.}, handling symbolic timed words directly is often more efficient in practice.


\section{Completeness of \leap}
\label{sec:leap-completeness-sdera}

In this section, we will provide a completeness result for the  algorithm \leap. 
We will define the notion of \emph{characteristic sets} for ERA-recognizable languages. Intuitively, these are sample sets that contain necessary information for the learning algorithm to infer the correct language. 
We show that, given an ERA-recognizable language $L$, there exist characteristic sets $S_L$ such that, when $S_L$ is given as input to \leap, it will return an ERA recognizing $L$. 

In our construction, the characteristic sets  only contain region words.
In the worst case, the size of these sets can be \emph{exponential} in the size of the alphabet $\Sigma$ and also in the size of a minimal DERA $\A$ recognizing $L$. However, one can show that the exponential blowup in the size of the characteristic sets (containing only region words) is necessary. 
The reason being the class of ERA is not polynomially learnable, as shown in Theorem $1$ of~\cite{VWW08}. Even though the original result in~\cite{VWW08} was shown for the class of DTA, the same proof also holds for the class of ERA, by relabeling the transitions of the TA in Figure~$2$ of~\cite{VWW08} to form an ERA.

\begin{lemma}[\cite{VWW08}]
\label{lem:ERA-not-poly-recognizable}
    The class of ERA is not polynomially learnable.
\end{lemma}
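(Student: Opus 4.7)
The plan is to follow the non-polynomial learnability proof of~\cite{VWW08} for DTA and adapt it to ERA via a syntactic relabeling of transitions, exactly as the excerpt hints. The core of the VWW08 argument exhibits a family of DTA $\{\mathcal{T}_n\}_{n \in \N}$, of size polynomial in $n$, together with a carefully chosen family of ``perturbed'' languages, so that any characteristic sample (in the sense required for a state-merging algorithm to converge to the correct automaton) must contain a number of words or a total word-length exponential in $n$. The exponential blow-up comes from the fact that distinguishing $L_n = \lang(\mathcal{T}_n)$ from its neighbours requires witnesses for exponentially many clock-valuation profiles induced by the chain-like structure of $\mathcal{T}_n$.

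My first step is to recall the family $\{\mathcal{T}_n\}$ from Figure~2 of~\cite{VWW08} and the adversarial argument used there. Next, I would define a family of ERA $\{\Aa_n\}$ obtained from $\mathcal{T}_n$ by relabeling: each transition of $\mathcal{T}_n$ that carries a letter $\sigma$ and resets a particular (possibly non-singleton) set of clocks is replaced by a transition carrying a fresh letter $\sigma'$ drawn from an enlarged alphabet~$\Sigma'$. Since in an ERA the clock $x_{\sigma'}$ is implicitly reset whenever $\sigma'$ is read, a suitable choice of fresh letters faithfully mimics the clock-reset behaviour of $\mathcal{T}_n$. Because $\mathcal{T}_n$ has only polynomially many transitions, $\Sigma'$ has polynomial size, and $\Aa_n$ is again of polynomial size; moreover, the guards of $\Aa_n$ are inherited from $\mathcal{T}_n$ with clock names simply renamed to their event-recording counterparts in $X_{\Sigma'}$.

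The step I expect to require the most care is re-verifying the adversarial lower bound in the ERA setting: one must check that the relabeling does not open up ``shortcuts'' — that is, that no shorter distinguishing witness becomes available simply because the new clocks are event-recording rather than freely resettable. The key observation is that along any run of $\Aa_n$, the value of each event-recording clock $x_{\sigma'}$ agrees with the value of the corresponding DTA-clock of $\mathcal{T}_n$ along the image run, because the fresh letters were introduced precisely to reproduce the same reset pattern. Consequently, the set of reachable configurations, and hence the set of distinguishable timed words, is essentially the same in $\Aa_n$ as in $\mathcal{T}_n$. This transfers the VWW08 lower bound verbatim to $\{\Aa_n\}$, yielding that no learner using a sample of size polynomial in $n$ — in particular \leap\ with its characteristic set construction of Section~\ref{sec:leap-completeness-sdera} — can converge to $L_n$ for all $n$, which is exactly the non-polynomial learnability statement of the lemma.
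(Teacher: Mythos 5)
Your proposal follows essentially the same route as the paper: the paper's entire justification for this lemma is the citation of Theorem~1 of~\cite{VWW08} together with the remark that the lower-bound automaton of Figure~2 there can be turned into an ERA by relabeling its transitions, which is precisely the relabeling-plus-transfer argument you spell out in more detail. The only point to watch is that a fresh letter $\sigma'$ resets exactly the single clock $x_{\sigma'}$, so your ``possibly non-singleton set of clocks'' phrasing glosses over the fact that the renaming of guards only works cleanly when each original clock can be associated with one relabeled event; this holds for the specific automaton in question, so it does not affect the argument here.
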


Now, notice that, thanks to Theorem~\ref{thm:complexity}, if a sample set contains only region words, then \leap\ runs in polynomial time in the size of such a set. 
Therefore, if for every ERA-recognizable language $L$ there would exist a polynomial size characteristic set containing only region words, then that would contradict Lemma~\ref{lem:ERA-not-poly-recognizable}.  Thus, we conclude the following.
\begin{corollary}
\label{cor:no-poly-ch-set}
    There exist ERA-recognizable languages for which there cannot exist polynomial size characteristic sets for \leap\ containing only region words. 
\end{corollary}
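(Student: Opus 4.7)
The plan is a short proof by contradiction, essentially formalising the argument already sketched in the paragraph preceding the corollary. Suppose for the sake of contradiction that for every ERA-recognizable language $L$ there exists a characteristic set $S_L$ for \leap, containing only region words, whose size is polynomial in the size of a minimal DERA recognizing $L$ (and in $|\Sigma|$).

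First, I would recall the notion of polynomial learnability used in \cite{VWW08}: a class $\mathcal{C}$ of languages is polynomially learnable from positive and negative examples if there is a learning algorithm $\mathcal{L}$ together with an assignment $L \mapsto S_L$ of characteristic sets such that (i) $|S_L|$ is polynomial in the size of a minimal representation of $L$, and (ii) on any consistent input sample $S \supseteq S_L$, the algorithm $\mathcal{L}$ outputs a representation of $L$ in time polynomial in $|S|$. Given this definition, the contradiction is immediate: by Theorem~\ref{thm:complexity}, \leap\ runs in polynomial time in the size of its input whenever that input contains only region words. Taking $\mathcal{L} = \leap$ together with the hypothesised region-word characteristic sets $S_L$ therefore witnesses polynomial learnability of the class of ERA, contradicting Lemma~\ref{lem:ERA-not-poly-recognizable}.

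Hence at least one ERA-recognizable language $L$ must fail to admit any polynomial-size characteristic set for \leap\ consisting only of region words, which is exactly the statement of the corollary.

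The only genuine obstacle I anticipate is a bookkeeping one: making sure that the definition of polynomial learnability invoked by \cite{VWW08} really matches the two ingredients we are combining — polynomial-size characteristic sets on one hand, and the polynomial runtime guarantee of Theorem~\ref{thm:complexity} on the other. Once this alignment is checked (and since Theorem~\ref{thm:complexity} gives runtime polynomial in $|S|$ rather than only in $|S_L|$, monotonicity under supersets poses no issue), the argument reduces to a one-line contrapositive and no further technical work is needed.
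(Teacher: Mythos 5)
Your proof is correct and follows exactly the paper's own argument: combine the polynomial runtime of \leap\ on region-word samples (Theorem~\ref{thm:complexity}) with the non-polynomial-learnability of ERA (Lemma~\ref{lem:ERA-not-poly-recognizable}) to derive a contradiction. The paper gives precisely this contrapositive in the paragraph preceding the corollary, so no further comparison is needed.
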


However, for every ERA-recognizable language $L$, one can construct  characteristic sets  containing only region words, for which \leap\ returns a DERA recognizing $L$. This is formalized in the following theorem. 

\begin{restatable}[]{theorem}{corrleap}
\label{th:completeness}
    For every $K$-ERA recognizable language $L$, there exist characteristic sets $S$, such that, when provided as input to \leap, it returns a DERA $\Aa$ with $\lang(\Aa) = L$. 
\end{restatable}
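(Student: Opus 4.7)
The plan is to construct characteristic sets $S = (\splus, \sminus)$ of region words using a language-theoretic equivalence on region prefixes, without committing to a specific automaton for $L$. Since $L$ is $K$-ERA-recognizable and region words use only $K$-simple constraints, for every region word $r$ either $\sem{r} \subseteq L$ or $\sem{r} \cap L = \emptyset$; I will call $r$ \emph{positive} or \emph{negative} accordingly. I would then introduce the equivalence $\sim_L$ on region words defined by $u \sim_L v$ iff for every region word $w$, $u \cdot w$ is positive iff $v \cdot w$ is positive. This equivalence has finitely many classes, bounded by the number of states of a minimal DERA for $L$ (Lemma~\ref{thm:era-equals-dera}). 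For each class $C$, let $u_C$ be the $\order$-minimum element of $C$ (with $\order$ the total order from Section~\ref{order-symbolic-words}); write $P := \{u_C\}$, which plays the role of an access-word set.

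Next I would populate the sample. Into $\splus$ I would put every positive $u_C \in P$; for every $u_C$ and every symbolic letter $(\sigma, r)$ with $\sem{u_C \cdot (\sigma, r)} \neq \emptyset$, I would include an extension $u_C \cdot (\sigma, r) \cdot w$ in the appropriate set, where $w$ is an $\order$-minimum suffix witnessing positivity or negativity. Crucially, for every pair of distinct classes $C \neq C'$, I would pick the $\order$-minimum distinguishing suffix $w_{C,C'}$ --- that is, a region word such that exactly one of $u_C \cdot w_{C,C'}$, $u_{C'} \cdot w_{C,C'}$ is positive --- and add both to $\splus$ or $\sminus$ according to their status. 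This guarantees that $\PT(\splus)$ contains a node for each $u_C$ and for each relevant one-step extension, and that $\sminus$ encodes sufficient distinguishing information between classes.

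I would then trace \leap\ on this sample. Since \textsc{Blue} is processed in $\order$-order, I would prove by induction on this order the following invariant: (i) any blue node $q_u$ with $u \in P$ fails to merge with every red node, because a distinguishing suffix of $u$ against the class of each red representative is in $\sminus$, and is therefore promoted; (ii) any blue node $q_u$ with $u \notin P$ is successfully merged with the unique red node $q_{u_C}$, where $C$ is the $\sim_L$-class of $u$ --- since $u_C$ is $\order$-smaller than $u$ it is already red, the merge preserves consistency because $u \sim_L u_C$ means all continuations agree on membership in $L$ and hence on $S$, while merges with red representatives of other classes are blocked by the corresponding distinguishing suffixes. The output then has exactly one red state per $\sim_L$-class, with transitions implementing the class-level successor relation; it is deterministic because $\PT(\splus)$ is deterministic on region words (Lemma~\ref{lem:rws-dont-intersect}) and folding preserves this. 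A two-way language inclusion argument then yields $\lang(\Aa_{\text{out}}) = L$.

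The main obstacle will be establishing invariant (ii) rigorously, in particular showing that \emph{no} spurious merge with a red representative of another class slips through. The subtlety is that \leap\ performs \emph{folding} after each successful merge, cascading further identifications down the shared subtree; I must therefore argue that the distinguishing suffixes in $\sminus$ remain strong enough to expose any inconsistency introduced at arbitrary depth of folding. Framing the argument purely in terms of the equivalence $\sim_L$ --- rather than tracking a specific automaton layout --- is what makes the compositional step tractable, and this language-theoretic framing is, I expect, the conceptual heart of the completeness proof.
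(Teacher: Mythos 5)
Your overall route is the same as the paper's: the equivalence $\sim_L$ you define is exactly equality of the paper's tail languages $\tail{\cdot}$, your set $P$ of $\order$-minimal class representatives is the paper's set $\sp{L}$ of shortest prefixes, the one-step extensions of its elements form the paper's kernel $\ker{L}$, and the order-driven induction on \leap's iterations (promote representatives, merge everything else) mirrors the paper's invariant lemma. Two points, however, do not survive scrutiny. A minor one first: the number of $\sim_L$-classes is \emph{not} bounded by the state count of a minimal DERA for $L$. Because $u.w \in L$ holds vacuously when $\sem{u.w}=\emptyset$, the tail of $u$ depends on the reachable clock region as well as on the control state, and the correct bound (which the paper obtains via simple DERA) is exponential in the size of a minimal DERA. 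Finiteness still holds, so this does not break the theorem, but the stated bound is wrong.

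More seriously, your characteristic set is too weak. You attach the distinguishing suffix $w_{C,C'}$ to the \emph{class representatives} $u_C, u_{C'}$, whereas the merges that must be blocked are between a blue kernel node $q_v$, with $v = u_C.(\sigma,r)$, and a red representative $q_{u_{C'}}$ of a \emph{different} class. For \leap\ to reject that merge, the merged automaton must accept a word that is literally in $\sminus$; the relevant witness is therefore of the form $v.w \in \sminus$ together with $u_{C'}.w \in \splus$ (so that the $w$-path exists out of the red state and rerouting $v$ into it makes the automaton accept a sampled negative word), or symmetrically. Having only $u_{C''}.w \in \sminus$, where $C''$ is the class of $v$, blocks nothing: $v.w$ is indeed outside $L$, but it is not in your $\sminus$, and \leap\ only tests intersection against the finite sample. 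Your single ``status witness'' per kernel element cannot play this role either, since one suffix cannot simultaneously separate $v$ from every wrong red representative. This is precisely why Definition~\ref{def:char-set-sdera} quantifies over all pairs $u \in \sp{L}$, $v \in \ker{L}$ with $\tail{u} \neq \tail{v}$ and places the same suffix on both words. Finally, you correctly flag the folding cascade as the crux of your invariant (ii) but leave it unresolved; the paper discharges it with a nested induction on the number of occurrences of the newly added transition in a run (Lemma~\ref{lem:leap-invariants}, items 3 and 5), and without some such argument the proof remains incomplete.
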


The rest of this section is dedicated to prove the above theorem, in which we will
assume that the alphabet $\Sigma$ and the parameter $K$, the maximal constant that appears in the guards of a DERA recognizing a timed language $L$, are known. This is a reasonable assumption, and related research in active learning frameworks also assume this~\cite{GJL10,Waga23,ATVA2024}.
In the following, we will show a construction of characteristic sets, that \emph{only contain region words}, for any  ERA-recognizable language $L$ over $(\Sigma, K)$. The following lemma shows that this assumption is not at the cost of any generality.

\begin{lemma}
\label{lem:justify-region}
Let $w$ be a symbolic timed word $(\sigma_1,g_1)(\sigma_2,g_2)\dots (\sigma_n,g_n)$ over $(\Sigma, K)$.
Then there exists a finite set of region words $(w_i=(\sigma_1,r^i_1)(\sigma_2,r^i_2)\dots (\sigma_n,r^i_n))_{1 \leq i \leq n}$ over $(\Sigma, K)$ such that $\sem{w}=\bigcup_{1 \leq i \leq n} \sem{w_i}$.
\end{lemma}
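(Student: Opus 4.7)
The plan is to show that the guard $g_i$ at each position of $w$ can be decomposed as a finite disjunction of $K$-simple constraints, and then distribute these choices across positions to obtain the desired finite family of region words.

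First I would establish the key decomposition: for every $K$-constraint $g$ over $X_\Sigma$, there exists a finite set $R(g) \subseteq \SimC{\Sigma,K}$ of $K$-simple constraints such that $\sem{g} = \bigcup_{r \in R(g)} \sem{r}$. This follows from the observation, already noted after Lemma~\ref{lem:rws-dont-intersect}, that for any $K$-simple constraint $r$ and any $K$-elementary constraint $\psi$, either $\sem{r} \cap \sem{\psi} = \emptyset$ or $\sem{r} \subseteq \sem{\psi}$. I would first lift this property from elementary to atomic constraints by writing every $K$-atomic constraint as a finite disjunction of $K$-elementary constraints (for example, $x \in (c,d)$ unfolds into $x \in (c,c{+}1) \vee x = c{+}1 \vee \cdots \vee x \in (d{-}1,d)$). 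Then, since $g$ is a conjunction of atomic constraints, each $K$-simple constraint $r$ either satisfies $\sem{r} \subseteq \sem{g}$ or $\sem{r} \cap \sem{g} = \emptyset$. Because the non-empty $K$-simple constraints partition $\mathbb{R}_{\ge 0}^{X_\Sigma}$ (up to equivalence of semantics, by Lemma~\ref{lem:rws-dont-intersect}) into finitely many classes, $R(g)$ is finite and $\sem{g}$ is exactly the union of those simple regions it contains.

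Next I would apply this decomposition pointwise to the given symbolic timed word. For each position $i \in \{1,\dots,n\}$, let $R_i := R(g_i)$. Define, for every tuple $(r_1, \ldots, r_n) \in R_1 \times \cdots \times R_n$, the region word
\[ w_{(r_1,\ldots,r_n)} := (\sigma_1, r_1)(\sigma_2, r_2)\cdots(\sigma_n, r_n). \]
This collection is finite because each $R_i$ is finite, so the overall family has size $\prod_{i=1}^n |R_i|$; this is the intended finite set of region words (the statement's index $1 \le i \le n$ is a placeholder for this finite enumeration).

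It then remains to verify that $\sem{w} = \bigcup_{(r_1,\ldots,r_n)} \sem{w_{(r_1,\ldots,r_n)}}$. The inclusion $\supseteq$ is immediate: if $\mathsf{tw} \models w_{(r_1,\ldots,r_n)}$ with clocked word $(\sigma_1, v_1)\cdots(\sigma_n, v_n)$, then $v_i \models r_i$ and therefore $v_i \models g_i$, since $\sem{r_i} \subseteq \sem{g_i}$. For the inclusion $\subseteq$, let $\mathsf{tw} \models w$; then for each $i$ the valuation $v_i$ satisfies $g_i$, hence lies in $\sem{g_i} = \bigcup_{r \in R_i}\sem{r}$, so there is (a unique) $r_i \in R_i$ with $v_i \models r_i$. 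The tuple $(r_1,\ldots,r_n)$ thus witnesses $\mathsf{tw} \models w_{(r_1,\ldots,r_n)}$. No step here presents a real obstacle; the only mildly delicate point is the combinatorial expansion of atomic constraints into elementary ones, which is routine but worth stating explicitly since it is the reason the decomposition may incur an exponential blowup, matching the worst-case size increase already acknowledged in the discussion following Theorem~\ref{thm:complexity}.
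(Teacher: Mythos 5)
Your proof is correct and follows exactly the standard route the paper leaves implicit (the lemma is stated without proof in the text): decompose each guard $g_i$ into the finitely many $K$-simple constraints whose semantics it contains, take the product over positions, and check both inclusions using the fact that simple constraints partition the valuation space and are each either contained in or disjoint from any atomic constraint. Your remark that the index range $1 \le i \le n$ in the statement should really be a finite enumeration of size $\prod_i |R_i|$ is also a fair reading of what is a slight abuse of notation in the lemma as stated.
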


The construction we provide of characteristic sets from an ERA recognizable language $L$ is similar to the construction for regular languages for RPNI~\cite{oncina1992}. The proof of the fact that \leap\ indeed returns an automaton with the correct language when provided with a characteristic set is, however, novel and non-trivial. 
The proof of completeness for RPNI can also be proved alternatively using similar lines of arguments as ours.

Given a region word $u$ and a timed language $L$, we say $u \in L$ if $\sem{u} \subseteq L$. Then, note that, for every region word $u$, if $\sem{u} = \emptyset$, then for every timed language $L$, $u \in L$. We will call such a word, an \emph{empty} word.

\begin{definition}[tail]
    \label{def:tail}
    Let $u \in {\sf RW}(\Sigma,K)$ be a region word and $L$ be a timed language recognizable by a $K$-ERA over $\Sigma$. Then, the tail of $u$ is a set of region words defined as:
    $\tail{u} = \{w \in {\sf RW}(\Sigma,K) \mid u.w \in L\}$.
\end{definition}

\begin{lemma}
\label{lem:finitely-many-tails}
    For every $K$-ERA recognizable language $L$, the set $\{\tail{u} \mid u \in \RW{\Sigma, K}\}$ is finite, and in the worst case, is of exponential size w.r.t. the size of a minimal DERA recognizing $L$.
\end{lemma}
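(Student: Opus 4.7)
The plan is to associate with each region word $u$ a compact configuration derived from a fixed minimal $K$-DERA $\Aa = (Q, q_0, \Sigma, E, F)$ for $L$, and then show that this configuration alone determines $\tail{u}$, from which both finiteness and the exponential bound follow.

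First, I exploit the following determinism property: since $\Aa$ is deterministic and for any $K$-simple constraint $r$ and any $K$-constraint guard $g$ of $\Aa$ we have $\sem{r} \subseteq \sem{g}$ or $\sem{r} \cap \sem{g} = \emptyset$ (a direct extension of the observation below Lemma~\ref{lem:rws-dont-intersect}), every $tw \in \sem{u}$ traverses the \emph{same} sequence of states when run on $\Aa$. Let $\delta^*(u) \in Q \cup \{\bot\}$ denote the state reached after reading $u$, with $\bot$ when no run exists. Moreover, the clock valuation after reading $u$ (i.e., after the implicit reset of $x_{\sigma_n}$) always lies in a single simple constraint $r^*(u) \in \SimC{\Sigma, K}$, obtained from the last simple constraint $r_n$ of $u$ by replacing its $x_{\sigma_n}$-conjunct by $x_{\sigma_n} = 0$.

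Second, I prove the key claim: whenever $(\delta^*(u), r^*(u)) = (\delta^*(u'), r^*(u'))$, we have $\tail{u} = \tail{u'}$. For an arbitrary $w \in \RW{\Sigma, K}$, I show $\sem{u.w} \subseteq L \iff \sem{u'.w} \subseteq L$ via a time-abstract bisimulation argument: starting from the same state and the same clock region, any timed word extending the common prefix by $w$ traverses a unique sequence of (state, simple-constraint)-pairs in $\Aa$, identical for $u$ and $u'$, so both extensions terminate in the same state of $\Aa$ and are accepted together. The degenerate cases ($\sem{u} = \emptyset$ or $\delta^*(u) = \bot$) contribute only a constant number of additional tail classes.

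Third, I count the configurations: there are at most $(|Q|+1) \cdot |\SimC{\Sigma, K}|$ distinct pairs, and $|\SimC{\Sigma, K}|$ grows like $(K+2)^{|\Sigma|}$, which is exponential in the size of $\Aa$, yielding the upper bound. For the claim that this bound can be attained in the worst case, I would exhibit a family of languages over growing alphabets, each recognized by a fixed-size DERA but for which exponentially many region configurations in $\Aa$ are pairwise distinguished by a short extension $w$, and hence induce pairwise distinct tails. The hard part will be formalizing the bisimulation step, since event-recording clock values depend on the full history and one must carefully verify, by induction along $w$, both that within-region perturbations of the valuation do not change the transition taken at any subsequent step, and that they do not affect whether the extension is consistent.
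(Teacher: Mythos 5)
Your overall strategy---mapping each region word $u$ to a finite configuration of a fixed automaton and arguing that this configuration determines $\tail{u}$---has the same shape as the paper's proof. However, the abstraction you chose, the pair $(\delta^*(u), r^*(u))$ with $r^*(u)$ a $K$-simple constraint, is too coarse, and your key claim is false for it. The reason is that $K$-simple constraints, unlike Alur--Dill regions, do not record the ordering of the fractional parts of distinct clocks. Hence the set of valuations actually reachable by timed words in $\sem{u}$ can be a strict subset of $\sem{r^*(u)}$, and two words with the same $r^*$ can reach disjoint such subsets; this matters because of the vacuity convention that $\sem{u.w}=\emptyset$ implies $w \in \tail{u}$. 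Concretely, over $\Sigma=\{a,b,c\}$ with $K=1$, let $u_1$ be the region word of $(a,0)(b,0.3)(c,0.5)$ and $u_2$ that of $(b,0)(a,0.3)(c,0.5)$. Both end in the simple constraint $0<x_a<1 \wedge 0<x_b<1 \wedge x_c=0$, but every valuation reachable after $u_1$ satisfies $x_a>x_b$, while every one reachable after $u_2$ satisfies $x_b>x_a$. For $w=(a,\ x_a=1 \wedge 0<x_b<1 \wedge 0<x_c<1)$ one checks that $\sem{u_1.w}\neq\emptyset$ whereas $\sem{u_2.w}=\emptyset$; taking, say, $L$ to be the set of timed words of length exactly $3$ (recognized by a small DERA with $\top$ guards in which $u_1$ and $u_2$ reach the same state), we get $w\in\tail{u_2}$ vacuously but $w\notin\tail{u_1}$. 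So $(\delta^*,r^*)$ does not determine the tail, and no ``within-region perturbation'' induction along $w$ can close this gap: the perturbations that change consistency live inside a single simple constraint.

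The paper circumvents exactly this by not working with the minimal DERA. It passes to the \emph{simple DERA} of~\cite{GJL10}, whose states are fine enough that any two region words reaching the same state have the same consistency behaviour for all extensions (Lemma~19 of~\cite{GJL10}, essentially a region-automaton construction); the tail is then shown to be constant on its states, and the exponential bound is the size of the SDERA relative to a minimal DERA. Your counting step and your (optional) plan for a lower-bound family are fine, but the configuration you count must be refined from (state, simple constraint) to something at least as fine as (state, set of reachable full regions)---equivalently, a state of the SDERA---for the determination step to hold.
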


\begin{proof}
    Let $L$ be a $K$-ERA recognizable language and let $\Aa_{era}$ be an ERA with $\lang(\Aa_{era}) = L$. 
    Now, a \emph{simple DERA}~\cite{GJL10} (SDERA, for short) is a DERA, where all the guards are simple constraints and every path leading to an accepting state is satisfiable, further, w.l.o.g. we can assume that for every region word the SDERA contains a path on this word. For every ERA, there exists an SDERA that enjoys the following property (thanks to Lemma~19 of~\cite{GJL10}): for every pair of region words $u, v$, if both $u$ and $v$ lead to the same state then for every region word $w$, $\sem{u.w} \neq \emptyset$ iff $\sem{v.w} \neq \emptyset$.
    \footnote{Intuitively, this happens because this SDERA closely corresponds to the region automaton.} Let $\Aa$ be the minimal SDERA corresponding to $\Aa_{era}$. 
    
    Let $u, v$ be two region words that lead to the same state, say $q$, in $\Aa$. Let $w \in \tail{u}$. Now, if $\sem{u.w} = \emptyset$, then due to the (above-mentioned) property of $\Aa$, we have that $\sem{v.w} = \emptyset$, and hence $w \in \tail{v}$. Now, suppose $\sem{u.w} \neq \emptyset$, then $\sem{v.w} \neq \emptyset$. Since $\lang(\Aa) = L$ and $w \in \tail{u}$, we can deduce that 
    there exists a path in $\Aa$ from $q$ on the word $w$,  
    that leads to an accepting state in $\Aa$. Since $v$ also leads to the same state as $u$,  we get that $\sem{v.w} \subseteq L$, and hence $w \in \tail{v}$. This shows that $\tail{u} \subseteq \tail{v}$, we can similarly argue that $\tail{v} \subseteq \tail{u}$.
    Therefore, we get that, for every two region words that lead to the same state in $\Aa$ have the same tail language. This implies that the set of all tail languages is finite, and since the size of a minimal SDERA corresponding to $L$ can be of size exponential w.r.t. the size of a minimal DERA recognizing $L$, the size of this set can be of exponential size.
\end{proof}

For every $K$-ERA recognizable timed language $L$, we define 
$\mathsf{Prefix}(L) := \{u \in  \RW{\Sigma,K} \mid \exists w \in \RW{\Sigma,K}: u.w \in L\}$ and 
$\mathsf{Suffix}(L) := \{w \in  \RW{\Sigma,K}\mid \exists u \in  \RW{\Sigma,K}: u.w \in L\}$.
In the following, we will use the total order $\order$ defined between symbolic timed words that was mentioned in Section~\ref{order-symbolic-words}.

\begin{definition}
    \label{def:sp-sdera}
    Let $L$ be a language recognizable by a $K$-ERA. 
    The set of shortest prefixes of $L$ is: 
    $\sp{L} = \{u \in \RW{\Sigma,K} \mid \nexists v \in \RW{\Sigma,K}: v  \order u : \tail{u} = \tail{v}\}$.
    The kernel of $L$ is:
    $\ker{L} = \{(\varepsilon, \mathbf{0})\} \cup \{ u. (\sigma, r) \in \RW{\Sigma,K} \mid u \in \sp{L} \text{ and } u. (\sigma, r) \in \mathsf{Prefix}(L) \}$.
\end{definition}

We now define \emph{characteristic sets} corresponding to $K$-ERA recognizable languages.

\begin{definition}[characteristic set]
\label{def:char-set-sdera}
A sample set  $S = (\splus, \sminus)$ containing only region words, is \emph{characteristic} for a $K$-ERA recognizable language $L$ if -- 
    (i) $\forall v \in \ker{L}, \textit{ if }  
    v \in L \textit{ then } v \in \splus;
    \textit{ else } \exists w \in \mathsf{Suffix}(L) \text{ where } vw \in \splus \cap L$; and
    (ii)
    $\forall u \in \sp{L}, \forall v \in \ker{L}, 
    \textit{ if } \tail{u} \neq \tail{v} 
    \textit{ then } \exists w \in \mathsf{Suffix}(L)$ 
    such that either (i)
    $u.w \in \splus \cap L\ \& \ v.w \in \sminus\cap \overline{L} \text{ or (ii) }
    u.w \in \sminus\cap \overline{L} \ \& \ v.w \in \splus\cap L$.
\end{definition}

In the definition above, $\overline{L}$ denotes the complement of $L$. Now,
the following result directly follows from Lemma~\ref{lem:finitely-many-tails}.
\begin{corollary}
    For every $K$-ERA recognizable language $L$, there exist characteristic sets that are finite.
\end{corollary}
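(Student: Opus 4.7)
The plan is to argue that each ``ingredient'' contributing to a characteristic set is finite, and that for every condition in Definition~\ref{def:char-set-sdera} we need only one witness word to satisfy it. Thus the whole construction yields a finite sample.

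First I would bound the size of $\sp{L}$. By construction, for every two distinct words $u_1, u_2 \in \sp{L}$ we have $\tail{u_1} \neq \tail{u_2}$ (otherwise the $\order$-larger of the two would not be in $\sp{L}$). Hence $|\sp{L}|$ is bounded by the number of distinct tails, which is finite by Lemma~\ref{lem:finitely-many-tails}. Next, $\ker{L}$ consists of $(\varepsilon,\mathbf{0})$ together with one-letter extensions $u.(\sigma,r)$ of elements $u \in \sp{L}$ along a letter $\sigma \in \Sigma$ and a $K$-simple constraint $r \in \SimC{\Sigma,K}$. Since $\Sigma$, $\SimC{\Sigma,K}$ and $\sp{L}$ are all finite, so is $\ker{L}$.

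Now turn to the two conditions (i) and (ii) of Definition~\ref{def:char-set-sdera}. For (i), fix $v \in \ker{L}$. If $v \in L$, we just include $v$ in $\splus$, contributing one word. Otherwise, by definition of $\ker{L}$ we have $v \in \mathsf{Prefix}(L)$, so there exists at least one region word $w$ such that $v.w \in L$; pick any such $w$ (and note that $w \in \mathsf{Suffix}(L)$ automatically). This contributes at most one positive word $v.w$ per element of $\ker{L}$, hence finitely many words in total. For (ii), for each pair $(u,v) \in \sp{L} \times \ker{L}$ with $\tail{u} \neq \tail{v}$, by definition of tails there is a region word $w$ lying in exactly one of $\tail{u}, \tail{v}$. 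Because the words considered are region words and $L$ is $K$-ERA recognizable, either $\sem{u.w} \subseteq L$ or $\sem{u.w} \cap L = \emptyset$ (and analogously for $v.w$), so the witness $w$ puts exactly one of $u.w, v.w$ into $L$ and the other into $\overline{L}$. Again, we add a single pair of words per pair of elements, and the total number of pairs $(u,v)$ is finite.

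The one subtle point — which I would highlight as the ``main obstacle'' — is ensuring that the witnesses produced above really sit in $\mathsf{Suffix}(L)$ and give a well-defined split between positive and negative samples. This reduces to the region-word dichotomy $\sem{w} \subseteq L$ or $\sem{w} \cap L = \emptyset$, which follows because the minimal SDERA constructed in the proof of Lemma~\ref{lem:finitely-many-tails} cannot distinguish timed words within the same region word. Given this, summing contributions over $\ker{L}$ and over the (finitely many) distinguishing pairs $(u,v) \in \sp{L} \times \ker{L}$ yields a finite sample set $(\splus, \sminus)$ satisfying both clauses of Definition~\ref{def:char-set-sdera}, proving the corollary.
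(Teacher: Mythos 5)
Your proposal is correct and takes essentially the same route as the paper, which simply asserts that finiteness follows directly from Lemma~\ref{lem:finitely-many-tails}; your expansion (finitely many tails $\Rightarrow$ $\sp{L}$ finite $\Rightarrow$ $\ker{L}$ finite, plus one witness word per clause of Definition~\ref{def:char-set-sdera}) is precisely the intended argument, and your use of the region-word dichotomy $\sem{z}\subseteq L$ or $\sem{z}\cap L=\emptyset$ to justify the positive/negative split is the right supporting observation.
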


\begin{lemma}
    \label{lem:same-tail-same-extension}
    Let $u, v \in \RW{\Sigma,K}$, $(\sigma, r) \in \Sigma \times \Region$ and let $L$ be a timed language recognizable by a $K$-ERA. Then, $\tail{u} = \tail{v}$ implies $\tail{u.(\sigma, r)} = \tail{v. (\sigma, r)}$.
\end{lemma}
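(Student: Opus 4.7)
The proof is essentially a one-line chain of equivalences following from the definition of $\tail{\cdot}$ together with the associativity of word concatenation. My plan is to fix an arbitrary region word $w \in \RW{\Sigma, K}$ and show the biconditional $w \in \tail{u.(\sigma,r)}$ iff $w \in \tail{v.(\sigma,r)}$; since $w$ is arbitrary this yields the desired set equality.

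The key step is to rewrite the concatenation. By definition,
\[
w \in \tail{u.(\sigma,r)} \iff u.(\sigma,r).w \in L.
\]
Concatenation of region words is associative, so $u.(\sigma,r).w = u.\bigl((\sigma,r).w\bigr)$, and therefore $u.(\sigma,r).w \in L$ iff $(\sigma,r).w \in \tail{u}$. Using the hypothesis $\tail{u} = \tail{v}$, we get $(\sigma,r).w \in \tail{u}$ iff $(\sigma,r).w \in \tail{v}$, and then unrolling the same equivalence on the $v$ side yields $(\sigma,r).w \in \tail{v}$ iff $w \in \tail{v.(\sigma,r)}$. Chaining these biconditionals gives the result.

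The only point that might require a brief justification is that the membership predicate on region words, defined via $u \in L \iff \sem{u} \subseteq L$, respects associative concatenation. This is immediate from the definition of region words as sequences of symbolic letters and of $\sem{\cdot}$ as a pointwise constraint on timed words: concatenating the sequences $u.(\sigma,r)$ and $w$ yields the same sequence as concatenating $u$ and $(\sigma,r).w$, so the associated sets of compatible timed words coincide. Consequently, there is no real obstacle in the proof; the statement is a semantic congruence lemma that will later be used, in the completeness argument, to propagate tail-equivalence through extensions and justify the state-merging behavior of \leap\ on kernel words.
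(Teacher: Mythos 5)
Your proof is correct and follows exactly the same chain of biconditionals as the paper's own argument: unfold the definition of $\tail{\cdot}$, reassociate the concatenation, apply the hypothesis $\tail{u}=\tail{v}$, and fold back. The extra remark on why the semantic membership $u \in L \iff \sem{u} \subseteq L$ respects associativity is a harmless addition that the paper leaves implicit.
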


\begin{proof}
    Let $w \in \tail{u.(\sigma, r)}$. We then have the following:
    \begin{align*}
        & u.(\sigma, r).w \in L ~(\text{from Definition~\ref{def:tail}})\\
        \iff & (\sigma, r). w \in \tail{u} \\
        \iff & (\sigma, r). w \in \tail{v} ~(\text{since } \tail{u} = \tail{v}) \\
        \iff & v. (\sigma, r) . w \in L ~(\text{from Definition~\ref{def:tail}}) \\
        \iff & w \in \tail{v.(\sigma, r)}
    \end{align*}
    Therefore, we get that $\tail{u.(\sigma, r)}  = \tail{v.(\sigma, r)}$.
\end{proof}

Here we note few remarks on the definitions of $\sp{L}$, $\ker{L}$.

\begin{lemma}
    \label{lem:sp-subset-nl}
    $\sp{L} \subseteq \ker{L}$.
\end{lemma}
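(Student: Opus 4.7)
The plan is to proceed by a case split on whether the region word $u \in \sp{L}$ is empty or not. If $u = (\varepsilon, \mathbf{0})$, then $u \in \ker{L}$ is immediate from the first clause of Definition~\ref{def:sp-sdera}. Otherwise, I decompose $u = u'.(\sigma, r)$ with $u' \in \RW{\Sigma, K}$ and $(\sigma, r) \in \Sigma \times \SimC{\Sigma, K}$, and to conclude $u \in \ker{L}$ I would verify the two requirements of the second clause of the definition: (i) $u' \in \sp{L}$, and (ii) $u \in \mathsf{Prefix}(L)$.

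Condition~(i) is the main content of the proof, which I would establish by contradiction. Assume $u' \notin \sp{L}$; then there exists $v' \order u'$ with $\tail{v'} = \tail{u'}$. By Lemma~\ref{lem:same-tail-same-extension}, appending the common letter $(\sigma, r)$ yields $\tail{v'.(\sigma, r)} = \tail{u'.(\sigma, r)} = \tail{u}$. It then remains to show that $v'.(\sigma, r) \order u$, which directly contradicts $u \in \sp{L}$. This verification is a short case analysis on the definition of $\order$: if $|v'| < |u'|$, then the strict length inequality lifts to $|v'.(\sigma, r)| < |u|$ and the length clause of $\order$ applies; if $|v'| = |u'|$ with the first lexicographic difference at some position $j$, then the extensions $v'.(\sigma, r)$ and $u$ have equal length, still agree on positions $1, \ldots, j{-}1$, and still differ at position $j$ in the same direction, so the lexicographic clause of $\order$ applies.

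For condition~(ii), $u \in \mathsf{Prefix}(L)$, I would use the minimality of $u$ in $\sp{L}$ to rule out $u$ having a degenerate ``dead'' tail: if $u$ were not a prefix of $L$, its tail would coincide with that of any strictly smaller non-prefix region word, which would contradict $u \in \sp{L}$ except possibly in the corner case where $u$ is the $\order$-minimum non-prefix representative. This corner case can be ruled out using the fact that $(\varepsilon, \mathbf{0})$ already sits in $\ker{L}$ and the paper's standing assumption that constraints are consistent, so that an extension by $(\sigma, r)$ of an $\sp{L}$-element retains a non-trivial continuation into $L$.

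The main obstacle I anticipate is condition~(ii), because the paper's convention ``$u \in L$ iff $\sem{u} \subseteq L$'' treats inconsistent region words as vacuously belonging to every language, and this produces a subtle degenerate equivalence class for the empty tail that must be handled carefully. Condition~(i) is by contrast a clean consequence of Lemma~\ref{lem:same-tail-same-extension} and the stability of $\order$ under right-concatenation with a common letter.
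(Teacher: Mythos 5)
Your core argument is exactly the paper's: write $u = u'.(\sigma,r)$, suppose $u' \notin \sp{L}$, pick $v' \order u'$ with $\tail{v'}=\tail{u'}$, lift via Lemma~\ref{lem:same-tail-same-extension} and the right-compatibility of $\order$ to get $v'.(\sigma,r) \order u$ with the same tail, contradicting $u \in \sp{L}$. The explicit check that $\order$ is preserved under appending a common letter is a welcome addition that the paper leaves implicit.

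The one place you go beyond the paper --- verifying $u \in \mathsf{Prefix}(L)$ --- is also the one place your argument does not close. The paper's own proof silently skips this condition (it jumps from ``$w' \in \sp{L}$'' to ``$w \in \ker{L}$''), and you are right that it is the delicate point: all words outside $\mathsf{Prefix}(L)$ share the empty tail, so minimality handles every such word \emph{except} the $\order$-least one, which can legitimately sit in $\sp{L}$. But your dismissal of that corner case is not an argument: the membership of $(\varepsilon,\mathbf{0})$ in $\ker{L}$ says nothing about a longer word, and the standing assumption that each individual guard is satisfiable does not guarantee that an $\sp{L}$-element has a continuation into $L$. The resolution that actually works under the paper's conventions is different: since ``$u.w \in L$'' is declared true whenever $\sem{u.w}=\emptyset$, any word $u$ admitting an \emph{inconsistent} extension automatically has $\tail{u} \neq \emptyset$, hence lies in $\mathsf{Prefix}(L)$; such extensions exist whenever one can demand two incompatible values of a clock that is not reset in between (always possible for $|\Sigma|\ge 2$). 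So condition~(ii) is vacuous in the cases the paper cares about, rather than a consequence of minimality. If you want to keep your minimality-based route, you must either exhibit an inconsistent extension to kill the corner case, or concede that the $\order$-minimal empty-tail word is a genuine exception that the definitions would have to absorb.
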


\begin{proof}
    $(\varepsilon,\mathbf{0})$ is in both $\sp{L}$ and $\ker{L}$.
    Now, choose a word $w \in \sp{L}$ which is not $(\varepsilon, \mathbf{0})$. 
    The claim is that $w$ must be of the form $w' . (\sigma, r)$, where $w' \in \sp{L}$. 
    Suppose, this is not the case, that is, $w' \notin \sp{L}$. Then, there exists $w'' \order w'$ with $\tail{w''} = \tail{w'}$. 
    This implies, $\tail{w''.(\sigma, r)} = \tail{w'. (\sigma, r)}$ and also $w''.(\sigma, r) \order w'.(\sigma, r)$. However, this contradicts the fact that $w = w'.(\sigma, r) \in \sp{L}$. Therefore, $w' \in \sp{L}$ and hence $w \in \ker{L}$. 
\end{proof}

\begin{remark}
    There may exist two simple words $u \in \sp{L}$ and $v \in \ker{L}$ with $\tail{u} \neq \tail{v}$ for which the only possibility is that 
    there exists a simple word $w$ such that $\sem{u.w} = \emptyset$ and $v.w \notin L$. In this case, we would have to add $u.w$ to $\splus$ and $v.w$ to $\sminus$. Therefore, the set $\splus$ may contain empty words. The set $\sminus$, however, can never contain an empty word.
\end{remark}

In the subsequent results to be presented in this section, 
one iteration of \leap\ (c.f. Algorithm~\ref{alg:LEAP}) refers to either of the following: (i) a blue node has been merged with a red node and subsequently a few pairs (possibly none) of states have been folded -- this implies every automaton computed by \leap\ after each of its iterations, is deterministic (since we are working with region words) -- or (ii) a blue node has been promoted to a red node.

    \label{rem:two-new-edges-for-merge}
    In an iteration where a merge and several folds occur, there are two sets of transitions that get added: 
    (i) a new edge between two red states, and (ii) a (possibly empty) set of  edges from a set of red states to a set of blue states.
 
We now introduce a notation and establish some results that will help us to prove Lemma~\ref{lem:leap-invariants}.

Given a $K$-DERA $\Aa = (Q, q_{in}, \Sigma, \delta, F)$, we define $dfa(\Aa)$ (a similar notion has been used in literature, for example in~\cite{GJL10}) to be the DFA $(Q, q_{in}, \Sigma \times \SimC{\Sigma,K}, \delta', F)$, where the alphabet is the set of event-guard pairs appearing in $\Aa$ and $\delta' :  Q \times (\Sigma \times \SimC{\Sigma,K}) \to Q$ is defined as: $\delta'(q, (\sigma, r)) := \delta(q, \sigma, r)$ if $\delta(q, \sigma, r)$ is defined, and undefined otherwise. 
We will use $\lang(dfa(\Aa))$ to denote the set of all words in $(\Sigma \times \SimC{\Sigma,K})^*$ that are accepted by $dfa(\Aa)$. 

Now, note that, at every iteration of \leap, 
no path gets removed (it only gets rerouted) during an iteration. 
The implies the following result.

\begin{lemma}
    \label{lem:dfa-language-preserved-leap}
    For every iteration $i \ge 1$ of \leap, $\lang(dfa(\Aa_{i-1})) \subseteq \lang(dfa(\Aa_i))$ and therefore, we also have $\lang(\Aa_{i-1}) \subseteq \lang(\Aa_i)$.
\end{lemma}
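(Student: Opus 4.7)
The plan is to split on the two mutually exclusive operations that \leap\ performs in one iteration, and in the non-trivial case describe $\Aa_i$ as a \emph{quotient} of $\Aa_{i-1}$, from which language growth is a general fact about quotient automata.

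\medskip
\noindent\textbf{Case analysis.} If iteration $i$ promotes a blue node to red (lines 13--15 of Algorithm~\ref{alg:LEAP}), then $\Aa_i$ coincides with $\Aa_{i-1}$ as an automaton --- only the coloring is updated --- and there is nothing to prove. Suppose instead that iteration $i$ successfully merges a blue node $q_v$ into a red node $q_u$, followed by a (possibly empty) cascade of folds. I will define an equivalence relation ${\sim}$ on $Q_{i-1}$ and show that $\Aa_i$ is the quotient $\Aa_{i-1}/{\sim}$ equipped with the obvious induced transitions and acceptance condition.

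\medskip
\noindent\textbf{Building the equivalence relation.} Let ${\sim}$ be the smallest equivalence relation on $Q_{i-1}$ satisfying (i)~$q_v \sim q_u$, and (ii)~whenever $p_1 \sim p_2$ and both $(p_1, \sigma, r, p_1'), (p_2, \sigma, r, p_2') \in \delta_{i-1}$ (same event-guard pair, which is meaningful because the iterates are DERA when we handle region words), then $p_1' \sim p_2'$. Clause~(ii) is precisely what a fold computes: it identifies successors on identical symbolic letters that are forced together by a previous identification. Existence and uniqueness of ${\sim}$ follow from standard closure arguments, and the procedure's termination already ensures the closure is reached in finitely many steps. By inspection of the merge-and-fold code, $\Aa_i$ has one state per ${\sim}$-class, $[p_0]$ (the class of $q_\varepsilon$) as initial state, $[p] \in F_i$ iff some $p' \sim p$ belongs to $F_{i-1}$, and $[p]\xrightarrow{(\sigma,r)}[p']$ in $\delta_i$ whenever $p\xrightarrow{(\sigma,r)}p'$ in $\delta_{i-1}$.

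\medskip
\noindent\textbf{Pushing accepting runs forward.} Any accepting path $q_\varepsilon = p_0 \xrightarrow{(\sigma_1,r_1)} p_1 \cdots \xrightarrow{(\sigma_n,r_n)} p_n$ in $dfa(\Aa_{i-1})$ with $p_n \in F_{i-1}$ projects, class-by-class, to a path $[p_0] \xrightarrow{(\sigma_1,r_1)} [p_1]\cdots \xrightarrow{(\sigma_n,r_n)} [p_n]$ in $dfa(\Aa_i)$, where $[p_0]=[q_\varepsilon]$ is the initial state of $\Aa_i$ and $[p_n]\in F_i$. This yields $\lang(dfa(\Aa_{i-1})) \subseteq \lang(dfa(\Aa_i))$. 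For the timed-language inclusion, take $\tw \in \lang(\Aa_{i-1})$ with clocked word $(\sigma_1, v_1)\cdots(\sigma_n, v_n)$; each $v_k$ satisfies a unique simple constraint $r_k \in \SimC{\Sigma,K}$ (Lemma~\ref{lem:rws-dont-intersect}), and the accepting run of $\tw$ in $\Aa_{i-1}$ projects to an accepting word $(\sigma_1,r_1)\cdots(\sigma_n,r_n) \in \lang(dfa(\Aa_{i-1}))$. By the dfa-inclusion just shown, this word lies in $\lang(dfa(\Aa_i))$, giving a run of $\Aa_i$ on the same clocked word whose guards are still satisfied by the $v_k$'s, whence $\tw \in \lang(\Aa_i)$.

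\medskip
\noindent\textbf{Expected obstacle.} The only delicate point is faithfully matching the operational fold cascade in \leap\ to the closure clause~(ii) of ${\sim}$. Because the folding in Section~\ref{def:merge-fold} is described recursively and can interact non-trivially with previously added edges, one must check by induction on the recursion depth that every pair of states \leap\ identifies by folding is in the relation ${\sim}$, and conversely that no spurious identifications are produced beyond those forced by the closure rule. Once this correspondence is established, the rest of the proof is standard quotient reasoning.
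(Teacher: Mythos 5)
Your proposal is correct and rests on exactly the observation the paper uses to justify this lemma (which it states in one sentence, without a formal proof): during an iteration no path is removed but only rerouted, i.e., merge-and-fold induces a state-identification map under which every transition, the initial state, and every accepting state of $\Aa_{i-1}$ have images in $\Aa_i$, so accepting runs push forward. Your quotient formalization, together with the region-word decomposition of clocked words to transfer the $dfa$-level inclusion to the timed-language inclusion, is a faithful and rigorous elaboration of that same argument.
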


\begin{lemma}
    \label{lem:leap-sp-nl-same-state-same-tail}
    Let $\Aa_i$ be the automaton computed by \leap\ after $i$ iterations and let $u \in \sp{L}$ and $v \in \ker{L}$.
    Then, $\delta_i(q_{in}, u) = \delta_i(q_{in}, v)$ implies $\tail{u} = \tail{v}$.
\end{lemma}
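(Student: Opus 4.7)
I would proceed by contradiction: assume $\delta_i(q_{in}, u) = \delta_i(q_{in}, v) = q$ yet $\tail{u} \neq \tail{v}$, and derive that $\Aa_i$ is inconsistent with $\sminus$, contradicting the invariant maintained by \leap\ that every iteration-automaton is consistent with $\sminus$ (an easy induction on iterations grounded in the consistency check performed inside the {\sf merge} procedure, and also what underpins Theorem~\ref{th:leap-correctness}). Because the sample $S$ is characteristic, Definition~\ref{def:char-set-sdera}(ii) provides $w \in \mathsf{Suffix}(L)$ such that, without loss of generality, $u.w \in \splus \cap L$ and $v.w \in \sminus \cap \overline{L}$; the symmetric case is handled identically.

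The next step is to show that $u.w$ has an accepting path in $dfa(\Aa_i)$. By construction of the prefix tree $\Aa_0 = \PT(\splus)$, the word $u.w$ labels a path from $q_{in}$ to an accepting state of $\Aa_0$, so $u.w \in \lang(dfa(\Aa_0))$. Applying Lemma~\ref{lem:dfa-language-preserved-leap} iteratively yields $u.w \in \lang(dfa(\Aa_i))$. Write $q' := \delta_i(q_{in}, u.w)$; this state is accepting.

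Now I use determinism and the hypothesis $\delta_i(q_{in}, u) = \delta_i(q_{in}, v) = q$. Since all words involved are region words, and two region-word guards are either equal or disjoint by Lemma~\ref{lem:rws-dont-intersect}, the prefix tree together with every subsequent merge-and-fold preserves determinism of $\Aa_i$, so $\delta_i$ is a partial function. Therefore $\delta_i(q_{in}, v.w) = \delta_i(q, w) = \delta_i(q_{in}, u.w) = q'$, an accepting state, giving $v.w \in \lang(dfa(\Aa_i))$. Pick any timed word $\tau \in \sem{v.w}$ (such $\tau$ exists because $\sminus$ contains no empty region word, per the remark following Definition~\ref{def:char-set-sdera}): the clocked form of $\tau$ pointwise satisfies the simple constraints of $v.w$, so tracing the unique $v.w$-path witnessing $v.w \in \lang(dfa(\Aa_i))$ produces an accepting run of $\Aa_i$ on $\tau$. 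Thus $\tau \in \lang(\Aa_i) \cap \sem{v.w}$, contradicting consistency of $\Aa_i$ with the negative sample $v.w$.

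The main point requiring care is the clean bridging between $\lang(dfa(\Aa_i))$ and $\lang(\Aa_i)$: namely, that for a region word accepted by $dfa(\Aa_i)$ whose semantics is non-empty, \emph{every} timed word in that semantics is accepted by $\Aa_i$. This rests on determinism of $\Aa_i$ over region words and the observation that the guards along the chosen path in $dfa(\Aa_i)$ are precisely the simple constraints of the region word, which are satisfied by the clocked form of $\tau$ by the definition of $\sem{\cdot}$. Everything else in the argument is bookkeeping.
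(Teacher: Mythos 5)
Your proposal is correct and follows essentially the same route as the paper: invoke condition (ii) of the characteristic set to obtain a distinguishing suffix $w$, use Lemma~\ref{lem:dfa-language-preserved-leap} to propagate $u.w \in \lang(dfa(\Aa_0))$ up to $\lang(dfa(\Aa_i))$, reroute $v.w$ through the shared state to an accepting state, and contradict consistency with $\sminus$. Your explicit justification of the bridge from $\lang(dfa(\Aa_i))$ to $\lang(\Aa_i)$ via a witness $\tau \in \sem{v.w}$ is a welcome bit of extra care that the paper leaves implicit, but it does not change the argument.
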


\begin{proof}
    Assume $\tail{u} \neq \tail{v}$, then from the construction of characteristic sets (Definition~\ref{def:char-set-sdera}), we know that there exists a region word $w \in \mathsf{Suffix}(L)$ such that either (i) $u.w \in \splus$ and $v.w \in \sminus$ or (ii) $u.w \in \sminus$ and $v.w \in \splus$.
    Let us consider the first case. 
    Note that, firstly $\splus = \lang(dfa(\Aa_0))$ and (from Lemma~\ref{lem:dfa-language-preserved-leap}) for every iteration $j$ of \leap, $\lang(dfa(\Aa_j)) \subseteq \lang(dfa(A_{j+1}))$. Therefore, we know $\splus \subseteq \lang(dfa(\Aa_i))$ and hence $u.w \in \lang(dfa(\Aa_i))$. 
    Let $q_u$ be the state such that $\delta_i(q_{in}, u) = q_u$. Since, $u.w \in \lang(dfa(\Aa_i))$, we know that there exists a path from $q_u$ to an accepting state on the word $w$.
    Now, since $u$ and $v$ lead to the same state in $\Aa_i$, $v.w \in \lang(dfa(\Aa_i))$ and therefore $v.w \in \lang(\Aa_i)$. However, this contradicts the fact that $\lang(\Aa_i) \cap \sminus = \emptyset$. The other case can be argued similarly.
\end{proof}

Let $\Aa_i= (Q_i, q_{in}, \Sigma,  \delta_i, F_i)$ be the $K$-ERA computed by \leap\ after $i$ iterations.
We define the shortest prefix to a state $q$ in  $\Aa_i$ as the word $\ssp{i}{q} := \min\{w\in \RW{\Sigma,K} \mid \delta_i(q_{in}, w) = q \}$.

\begin{lemma}
\label{lem:leap-sp-dont-change}
Let $\Aa_i = (Q_i, q_{in}, \Sigma,  \delta_i, F_i)$ be the automaton computed by \leap\ after $i$ iterations. Then
    for every red state $q \in Q_i$, $\ssp{i}{q} = \ssp{i+1}{q}$.
\end{lemma}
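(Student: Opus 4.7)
The proof is by case analysis on iteration $i+1$. First, if iteration $i+1$ promotes a blue node to red, the transition structure of $\Aa_{i+1}$ coincides with that of $\Aa_i$ (only a color label changes), so the set of paths from $q_{in}$ to any state is unchanged and $\ssp{i+1}{q} = \ssp{i}{q}$ holds trivially.

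The interesting case is a merge of a blue $q_v$ into a red $q_u$, possibly followed by folding. Let $q_p$ be the unique (necessarily red) predecessor of $q_v$ in $\Aa_i$, with incoming edge labeled $(\sigma, g)$. I first establish two structural facts by straightforward induction on iterations of \leap: (i) in any intermediate automaton $\Aa_j$, every path from $q_{in}$ to a red state stays entirely within the red substructure, because the prefix-tree subparts hanging below blue nodes have no edges pointing back into red; and (ii) in a merge-and-fold iteration, the only edge added among red states is the rerouted edge $q_p \xrightarrow{(\sigma, g)} q_u$. The subtle point in (ii) is that folds either operate purely on blue/unprocessed subtrees, or collapse a newly-attached blue child of $q_u$ into an already-existing child of $q_u$ on the same label, which does not introduce a genuinely new red-to-red edge.

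Given these, any path from $q_{in}$ to a red $q$ in $\Aa_{i+1}$ that is not already present in $\Aa_i$ must traverse the new edge, and so decomposes as $\pi_1 \cdot (q_p, (\sigma, g), q_u) \cdot \pi_2$ with $\pi_1$ ending at $q_p$ and $\pi_2$ going from $q_u$ to $q$; both $\pi_1$ and $\pi_2$ are paths that already existed in $\Aa_i$. The crucial auxiliary invariant, proved jointly with the statement by induction on $i$, is that at every iteration the $\ssp$ values of red states are $\order$-smaller than those of blue states; this holds because \leap\ processes blue nodes in $\order$-increasing order, so a promoted blue state's $\ssp$ is, at promotion time, the smallest in \textsc{Blue}. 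In particular $\ssp{i}{q_u} \order \ssp{i}{q_v}$. Using this together with the fact that the $\order$-relation between two equal-length symbolic words is preserved under right-concatenation of a common suffix, a short length-plus-lexicographic argument shows that $\ssp{i}{q} \order \text{label of the new path}$, and hence $\ssp{i}{q}$ remains the $\order$-minimum path label to $q$ in $\Aa_{i+1}$.

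I anticipate the main obstacle to be (i) verifying the structural claim that folding never stealthily introduces a new red-to-red edge, which requires a careful inspection of how children of $q_v$ interact with already-existing red children of $q_u$ across recursive fold steps; and (ii) carrying the joint induction with the auxiliary red/blue $\ssp$ invariant, since the statement of this lemma and that invariant mutually depend on each other across iterations.
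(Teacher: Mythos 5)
Your route is genuinely different from the paper's. The paper's proof is a short contradiction argument: path preservation gives $\ssp{i+1}{q} \ordereq \ssp{i}{q}$; if the inequality were strict, witnessed by some $u$, then (because any word $\order$-below the shortest prefix of an already-red state has already had its fate decided) $\delta_i(q_{in},u)$ is some red state $q'$, and determinism of $\Aa_{i+1}$ --- which holds throughout Section~\ref{sec:leap-completeness-sdera} because all words are region words --- forces $q'=q$, contradicting minimality of $\ssp{i}{q}$. You instead characterize the new paths created by a merge and bound their labels from below via an order-preservation-under-concatenation argument; this avoids invoking determinism but buys that generality at the cost of a heavier structural analysis, and two steps of that analysis do not go through as written.

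First, the decomposition of a new path as $\pi_1\cdot\zeta\cdot\pi_2$ with \emph{both} $\pi_1$ and $\pi_2$ already present in $\Aa_i$ is false in general: merges routinely create cycles and even self-loops (see Figure~\ref{fig:running-ex-merge1}), so the rerouted edge $\zeta$ can occur several times on a single path to $q$. You need an induction on the number of occurrences of $\zeta$, exactly as the paper does in its proof of Lemma~\ref{lem:same-state-same-tail-ssp}; the ``short length-plus-lexicographic argument'' only handles the base case. Second, the auxiliary invariant ``every red state's $\ssp$ is $\order$-below every blue state's'' is both stronger than what your concatenation step needs (which is only $\ssp{i}{q_u}\ordereq\ssp{i}{q_v}$ for the particular red target $q_u$ and the popped blue node $q_v$) and not justified by the reason you give. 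The processing order of \textsc{Blue} is governed by the nodes' prefix-tree words, whereas a fold can re-attach a node deep in the tree directly to a red state, giving it a current shortest prefix far $\order$-below the word under which it sits in \textsc{Blue}; so ``blue nodes are popped in $\order$-increasing order'' does not by itself yield the invariant for the automata $\Aa_i$ after folds have occurred. Until the multi-occurrence induction is carried out and the needed instance of the red/blue comparison is actually established (rather than asserted), the argument is incomplete.
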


\begin{proof}
Let $q \in Q_i$ be a red state.
First notice that,
for every iteration $j$ of \leap, every path present in $\Aa_j$ remains in $\Aa_{j+1}$.
Hence, $\ssp{i+1}{q} \ordereq \ssp{i}{q}$. Now assume that $u := \ssp{i+1}{q} \order \ssp{i}{q}$. 
Since $q$ is already a red state in $\Aa_i$, $\qu$ must have been processed before the $i$-th iteration of \leap\ and has been merged, or been promoted to some $q'$ that is red in $Q_{i}$. Hence, $\delta_i(q_{in}, u) = q'$.
Since every path is preserved during an iteration of \leap, it must be that $\delta_{i+1}(q_{in}, u) = q'$.
Since we have assumed that $u = \ssp{i+1}{q}$, it must be that $\delta_{i+1}(q_{in}, u) = q$. Now, since $\Aa_{i+1}$ is deterministic, we get that $q = q'$.
This implies $\delta_i(q_0, u) = q$, which contradicts the fact that $u \order \ssp{i}{q}$.
\end{proof}

We now prove a set of invariants maintained by \leap. This lemma will help us prove Theorem~\ref{th:completeness}.

\begin{lemma}
\label{lem:leap-invariants}
    Let $\Aa_i = (Q_i, q_{in}, \Sigma, \delta_i, F_i)$ be the $K$-DERA computed by \leap\ after $i$ iterations. Then, the following results hold:
    \begin{enumerate}
        \item 
        \label{IH:sp_i_in_spL}
        for every red state $q \in Q_i$, $\ssp{i}{q} \in \sp{L}$,
        \item 
        \label{IH:same_state_same_tail}
        for every $u, v \in \RW{\Sigma,K}$, $\delta_i(q_{in}, u) = \delta_i(q_{in}, v) \implies \tail{u} = \tail{v}$,
        \item
        \label{IH:intermediary}
        for every red state $q$ and every word $y$ if $q \xlongrightarrow{y}q_f$ is a run in  $\A_i$ that does not visit any red state and if this run is not present in $\Aa_{i-1}$, then $y \in \tail{\ssp{i}{q}}$,
        \item
        \label{IH:L_i_in_L}
        $\lang(\Aa_i) \subseteq L$, and
        \item
        \label{IH:same_tail_merge}
        for every pair of words $u, v \in \RW{\Sigma,K}$ such that $q_u$ is a red state and $q_v$ is a non-red state in~$\A_i$, if $q_u = \delta_i(q_{in}, u) \neq \delta_i(q_{in}, v) = q_v$ and $\tail{u} = \tail{v}$, then the states $q_u$ and $q_v$ can be merged, \emph{i.e.}, $\lang(\Aa_i^{q_u, q_v}) \cap \sminus = \emptyset$.
    \end{enumerate}
\end{lemma}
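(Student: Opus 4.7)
All five invariants are mutually interdependent, so the plan is to prove them simultaneously by induction on the iteration count $i$. For the base case $i=0$ with $\Aa_0 = \PT(\splus)$, the only red state is $q_{in}$ with $\ssp{0}{q_{in}} = \varepsilon \in \sp{L}$ (invariant~\ref{IH:sp_i_in_spL}), the tree structure forces invariant~\ref{IH:same_state_same_tail}, invariant~\ref{IH:intermediary} is vacuous, invariant~\ref{IH:L_i_in_L} follows from $\lang(\Aa_0) = \splus \subseteq L$ via Definition~\ref{def:char-set-sdera}, and invariant~\ref{IH:same_tail_merge} is obtained by a direct contradiction argument from the characteristic set property. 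In what follows, assume all five invariants hold for $\Aa_{i-1}$ and consider the two possible actions of iteration $i$.

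\textbf{Promotion of $q_v$.} When iteration $i$ only promotes a blue node $q_v$ to red, the automaton is structurally unchanged, so invariants~\ref{IH:same_state_same_tail}, \ref{IH:intermediary}, \ref{IH:L_i_in_L} and~\ref{IH:same_tail_merge} transfer from $\Aa_{i-1}$. For invariant~\ref{IH:sp_i_in_spL} I argue by contradiction: if some $w \order \ssp{i}{q_v}$ had the same tail as $u := \ssp{i}{q_v}$, then the $\order$-ordered processing of Blue nodes would imply that $w$ already reaches some red state $q$ of $\Aa_i$; Lemma~\ref{lem:leap-sp-dont-change} combined with IH invariant~\ref{IH:same_state_same_tail} then yields $\tail{\ssp{i}{q}} = \tail{u}$, so IH invariant~\ref{IH:same_tail_merge} would have permitted merging $q_v$ into $q$, contradicting the promotion.

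\textbf{Merge of $q_v$ into $q_u$ followed by folds.} Invariant~\ref{IH:sp_i_in_spL} is preserved by Lemma~\ref{lem:leap-sp-dont-change}. The key step is: admissibility of the merge (i.e. the resulting $\Aa_i$ satisfies $\lang(\Aa_i) \cap \sminus = \emptyset$) forces $\tail{u} = \tail{v}$. Indeed, $u \in \sp{L}$ by IH invariant~\ref{IH:sp_i_in_spL}, and $v$ lies in $\ker{L}$ because it extends the red prefix of $q_v$'s predecessor by one region letter belonging to $\mathsf{Prefix}(L)$ (using IH invariant~\ref{IH:L_i_in_L}); were $\tail{u} \neq \tail{v}$, Definition~\ref{def:char-set-sdera} would yield a distinguishing suffix $w$, and using Lemma~\ref{lem:dfa-language-preserved-leap} the accepting path of $\Aa_{i-1}$ on $u.w$ (resp.\ $v.w$) would, after merging, become an accepting path of $\Aa_i$ on $v.w$ (resp.\ $u.w$), contradicting admissibility. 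With this, invariant~\ref{IH:same_state_same_tail} propagates through the successive folds via Lemma~\ref{lem:same-tail-same-extension}. Invariant~\ref{IH:intermediary} is established by decomposing any newly created run $q \xlongrightarrow{y} q_f$ around the redirected edge(s) of the merge-fold and applying IH invariants~\ref{IH:intermediary} and~\ref{IH:same_state_same_tail} to the pre- and post-edge segments. Invariant~\ref{IH:L_i_in_L} then follows by combining IH and invariants~\ref{IH:sp_i_in_spL} and~\ref{IH:intermediary} along any accepting run of $\Aa_i$. Finally, invariant~\ref{IH:same_tail_merge} is proved by the same characteristic-set argument as for the admissibility claim above, applied to a hypothetical merge of a same-tail pair.

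\textbf{Main obstacle.} The delicate point is invariant~\ref{IH:intermediary} in the merge case, since folding introduces a cascade of fresh edges whose underlying runs did not exist in $\Aa_{i-1}$. Ensuring that each such run's suffix after its last red state lies in $\tail{\ssp{i}{q}}$ relies essentially on the mutual induction: IH invariant~\ref{IH:same_state_same_tail} is needed to identify tails along the decomposition, while IH invariant~\ref{IH:same_tail_merge} certifies that each cascading identification remains consistent with $L$.
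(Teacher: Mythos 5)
Your overall architecture matches the paper's: a mutual induction over iterations, a case split between promotion and merge-plus-folds, and the use of Lemmas~\ref{lem:leap-sp-dont-change}, \ref{lem:dfa-language-preserved-leap} and the characteristic-set definition in essentially the same places. The base case and the promotion case are fine (modulo the fact that, for invariant~\ref{IH:sp_i_in_spL}, you need the characteristic set to guarantee that the shorter word $w$ with $\tail{w}=\tail{\tilde u}$ actually occurs as a prefix in $\splus$, so that it corresponds to a node of the tree that was processed earlier; you assert this via ``the $\order$-ordered processing'' but the witness comes from $w\in\ker{L}$ and Definition~\ref{def:char-set-sdera}).

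There are, however, two genuine gaps in the merge case. First, and most seriously, your treatment of invariant~\ref{IH:same_tail_merge} has the implication backwards. The characteristic-set argument you invoke shows that a merge of states with \emph{different} tails is inadmissible (a distinguishing suffix lands in $\sminus$). Invariant~\ref{IH:same_tail_merge} is the converse: states with \emph{equal} tails \emph{can} be merged. The characteristic set gives you nothing here; what is needed is a proof that $\lang(\Aa_i^{q_u,q_v})\subseteq L$, from which disjointness with $\sminus\subseteq\overline{L}$ follows. This containment is the technical heart of the paper's proof: the new edge $\zeta=(q_w,\sigma,r,q_u)$ can close a cycle (since $q_u$ is red and may be an ancestor of $q_w$), so an accepting run of the merged automaton may traverse $\zeta$ arbitrarily many times, and the folds create accepting runs with no counterpart in $\Aa_i$. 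The paper handles this with an inner induction on the number of occurrences of $\zeta$ in the run, with a three-way case analysis in the base case to deal with states that became accepting or reachable only through folding. Your one-sentence appeal to ``the same characteristic-set argument'' does not establish this. Second, the same loop phenomenon undermines your proof of invariant~\ref{IH:same_state_same_tail}: Lemma~\ref{lem:same-tail-same-extension} only extends tail-equality by one letter, whereas after the merge a single state is reached by words that wind through $\zeta$ several times; the paper again needs an inner induction on $\zeta$-occurrences (its Lemma~\ref{lem:same-state-same-tail-ssp}) to identify $\tail{t}$ with $\tail{\ssp{i}{q}}$ for every such word $t$. Your closing remark correctly identifies that the folds are the delicate point, but the machinery you propose does not yet close either of these two steps.
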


\begin{proof}
\begin{description}
    \item[(Base case)] 
        1. In $\Aa_0$, that is in the $\PT$, the only red state is $q_{in}$ and $\ssp{0}{q_{in}} = (\varepsilon, \mathbf{0}) \in \sp{L}$.
        \item 
        
        2. Since $\Aa_0$ is a tree, there are no two distinct words $u,v$ such that $\delta_0(q_{in}, u) = \delta_0(q_{in}, v)$, therefore, the result holds trivially.
        \item 
        
        3. The result holds trivially, since $A_0$ is the automaton in the first iteration.
        \item 
        
        4. For every $w \in \lang(\Aa_0)$, either $w$ is an empty word (in this case, $w \in L$), or $w \in \lang(dfa(\Aa_0))$. Since, $\lang(dfa(\Aa_0)) = \splus \subseteq L$, the result holds.
        \item 
        
        5. Suppose, $w \in \lang(\Aa_0^{q_u, q_v}) \cap \sminus$. 
        Without any loss of generality, let us assume that $q_u$ is the initial state and $q_v$ is blue. Since $q_u$ is the initial state $\Aa_0$, $u = (\varepsilon, \mathbf{0})$.
        Now, $w$ must be of the form $v^*.w_u$, where either (i) $w_u \in \splus$ or (ii) $v.w_u \in \splus$.
        In the first case, $w_u \in L$. 
        In the second case, $v.w_u \in L$, and since $u = (\varepsilon, \mathbf{0})$ and $\tail{u} = \tail{v}$, we can deduce that $(\varepsilon, \mathbf{0}).w_u \in L$ as well. 
        Since, $ \tail{(\varepsilon, \mathbf{0})} = L$, we get that in both cases, $w_u \in \tail{(\varepsilon,\mathbf{0})}$. 
        
        Now, since $\tail{(\varepsilon, \mathbf{0})} = \tail{v}$ we can argue inductively (by suffixing both sides with $v$) that $\tail{(\varepsilon, \mathbf{0})} = \tail{v^*}$.
        Finally, since $w_u \in \tail{(\varepsilon, \mathbf{0})}$, we can deduce that $w_u \in \tail{v^*}$ as well. Therefore, $v^*. w_u \in L$, this contradicts the fact that $v^*. w_u \in \sminus$.
        
    \item[(Induction hypothesis)] We assume that all the results hold for every iteration $j < i$.

    \item[(Induction step)] 
        1. If a merge (and subsequent folds) has occurred at the $i$-th step of \leap, since no state has been added to the red part of $\A_{i-1}$, from Lemma~\ref{lem:leap-sp-dont-change}, we get that $\ssp{i-1}{q} = \ssp{i}{q}$ for all red states $q$, and the statement follows from the outer Induction Hypothesis~(\ref{IH:sp_i_in_spL}).

        \smallskip
        Now assume that the $i$-th step of \leap\ was a promotion -- some $q_u$ has been promoted to red. Notice that for every red state $q$ other than $q_u$ in $\Aa_i$, since it was also a red state in $\Aa_{i-1}$, using Lemma~\ref{lem:leap-sp-dont-change}, and outer Induction Hypothesis~(\ref{IH:sp_i_in_spL}), $\ssp{i}{q} \in \sp{L}$. Now, towards a contradiction, let $(\tilde{u} :=)~ \ssp{i}{q_u} \notin \sp{L}$. Then, from the definition of $\sp{L}$ (Definition~\ref{def:sp-sdera}) it follows that there exists $w \order \tilde{u}~ (= \ssp{i}{q_u})$ such that $\tail{w} = \tail{\tilde{u}}$. Since $w \in \sp{L}$ and $\sp{L} \subseteq \ker{L}$ (Lemma~\ref{lem:sp-subset-nl}), it follows that $w \in \ker{L}$ as well. 
        Then, from the construction of characteristic sets, we know that $\splus$ contains a word $w.w'$, for some $w' \in \RW{\Sigma,K}$.
        Therefore, the $\PT$ (\emph{i.e.}, $\Aa_0$) contains a state $q_w$ where $\delta_0(q_{in}, w) = q_w$. Since $w \order \tilde{u}$ and $q_u$ is already red in $\Aa_i$, $q_w$ must have become red (either by promotion or a merge/fold) at an iteration $j < i$. Further, since $\ssp{i}{q_w} = w \order \tilde{u} = \ssp{i}{q_u}$, we know that $q_u \neq q_w$ in $\Aa_i$ and therefore, also in $\Aa_{i-1}$.
        Then, from the induction hypothesis~(\ref{IH:same_tail_merge}) we know that $\lang(\Aa_{i-1}^{q_u, q_w}) \cap \sminus = \emptyset$. This contradicts the fact that, \leap\ did not merge $q_u$ with $q_w$. 
        \item 
        
        2. This result follows directly from the following intermediate lemma.
        \medskip
        \begin{lemma}
            \label{lem:same-state-same-tail-ssp}
            Let $\Aa_i$ be the automaton obtained after $i$ iterations of \leap. Then, for every $t \in \RW{\Sigma,K}$ and for every red state $q \in Q_i$, if $\delta_i(q_{in}, t) = q$, then $\tail{t} = \tail{\ssp{i}{q}}$.
        \end{lemma}
        \begin{proof}
            Let $q_u$ (red) and $q_v$ (blue) be the two states in $\Aa_{i-1}$ that have been merged at the $i$-th iteration.
            Assume that in $\Aa_{i-1}$, there were the following: $\delta_{i-1}(q_{in}, w) = q_w$, $\delta_{i-1}(q_w, (\sigma, r)) = q_v$ and $\delta_{i-1}(q_{in}, u) = q_u$; then after the merge, $\Aa_i$ contains the following: $\delta_i(q_{in}, w) = q_w$,  $\delta_i(q_{in}, u) = q_u$ and the new transition $\zeta := (q_w, \sigma, r, q_u)$. 
            
            Let $\rho$ be the run of $\Aa_i$ on $u$ terminating at $q$. Since $q$ is a red state, $\rho$ traverses only through the red edges present in $\Aa_i$.
            Note that, (as remarked in  Page~\pageref{rem:two-new-edges-for-merge}) if $\rho$ does not contain the transition $\zeta$, then $\rho$ was already present in $\Aa_{i-1}$, \emph{i.e.}, $\delta_{i-1}(q_{in}, t) = q$ and then the result follows from induction hypothesis~(\ref{IH:same_state_same_tail}).

            Now let us consider the case when, $\rho$ has at least one occurrence of $\zeta$. We prove the result by induction on the number of occurrences ($k$) of $\zeta$ in $\rho$.
        Note that, if $\rho$ has $k$ occurrences of $\zeta$, without loss of any generality, $\rho$ can be written as:
        \[
        \rho =
        q_{in} \xlongrightarrow[]{t_0} 
        \blue{q_w \xrightarrow[r]{\sigma}
        q_u }\xlongrightarrow[]{t_1} 
        \blue{q_w \xrightarrow[r]{\sigma}
        q_u}
        \cdots 
        \blue{q_w \xrightarrow[r]{\sigma}
        q_u} \xlongrightarrow[]{t_k}q
        \]
        where, $t = t_0 . (\sigma, r) . t_1 \ldots t_{k-1} . (\sigma, r) . t_k$ and for every $0 \le i \le k$: (i) $t_i \in \RW{\Sigma,K}$, and (ii) every transition in $t_i$ is in $A_{i-1}$. 

        \textbf{Base case ($k = 1$):} In this case, $\rho$ is 
        $q_{in} \xlongrightarrow[]{t_0} 
        \blue{q_w \xrightarrow[r]{\sigma}
        q_u }\xlongrightarrow[]{t_1} q$. 
        Since $q_{in} \xrightarrow{t_0} q_w$ was also present in $\Aa_{i-1}$, using induction hypothesis~(\ref{IH:same_state_same_tail}) we get: 
        \setcounter{equation}{0}
        \begin{align}
            \tail{t_0} &= \tail{\ssp{i-1}{q_w}} \nonumber \\
            \implies \tail{t_0}&= \tail{\ssp{i}{q_w}}~(\text{using Lemma~\ref{lem:leap-sp-dont-change}}) 
        \end{align}
        Now, $\ssp{i}{q_w}.(\sigma, r) \in \ker{L}$ and $\ssp{i}{q_u} \in \sp{L}$, then using Lemma~\ref{lem:leap-sp-nl-same-state-same-tail} we get: 
        \begin{align}
            \tail{\ssp{i}{q_u}} = \tail{\ssp{i}{q_w}.(\sigma, r)}
        \end{align}
        Since $q$ is a red state, $q_u \xrightarrow{t_1} q$ was also present in $\Aa_{i-1}$ and, further, $\delta_{i-1}(q_{in}, \ssp{i-1}{q_u}) = q_u$, we can conclude using induction hypothesis~(\ref{IH:same_state_same_tail}): 
        \begin{align*}
            &\tail{\ssp{i-1}{q}} = \tail{\ssp{i-1}{q_u}. t_1} \\
            \implies &\tail{\ssp{i}{q}} = \tail{\ssp{i}{q_u}.t_1} ~(\text{using Lemma~\ref{lem:leap-sp-dont-change}}) \\
            \implies & \tail{\ssp{i}{q}} = \tail{\ssp{i}{q_w}.(\sigma, r).t_1} ~(\text{using equation~(2)}) \\
            \implies & \tail{\ssp{i}{q}} = \tail{t_0. (\sigma, r). t_1} ~(\text{using equation~(1)}) \\
            \implies & \tail{\ssp{i}{q}} = \tail{t}
        \end{align*}
        \textbf{Induction hypothesis.} We assume that the result holds when the number of occurrences of $\zeta$ is less than $k$.
        \medskip
        
        \textbf{Induction step.} Assume that $\rho$ contains $k$ occurrences of $\zeta$, then $\rho$ is:
        \[
        \rho :=
        q_{in} \xlongrightarrow[]{t_0} 
        \blue{q_w \xrightarrow[r]{\sigma}
        q_u }\xlongrightarrow[]{t_1} 
        \blue{q_w \xrightarrow[r]{\sigma}
        q_u}
        \cdots
        \blue{q_w \xrightarrow[r]{\sigma}
        q_u} \xlongrightarrow[]{t_{k}}q
        \]
        Consider $\rho'$ where the transitions $q_{in} \xrightarrow{t_0} q_w \xrightarrow[r]{\sigma} q_u$ are replaced with $q_{in} \xrightarrow{\ssp{i}{q_u}} q_u$:
        \[
        \rho' :=
        q_{in} \xrightarrow[]{\ssp{i}{q_u}}
        q_u\xlongrightarrow[]{t_1} 
        \blue{q_w \xrightarrow[r]{\sigma}
        q_u}
        \cdots
        \blue{q_w \xrightarrow[r]{\sigma}
        q_u} \xlongrightarrow[]{t_{k}}q
        \]
        Note that, since $q_u$ was also a red state in $\Aa_{i-1}$, from Lemma~\ref{lem:leap-sp-dont-change} we can deduce that $\ssp{i}{q_u}$ does not include the transition $\zeta$. 
        Thus, the number of occurrences of $\zeta$ in $\rho'$ is $k-1$. 
        Using the (inner) induction hypothesis we then get that:
        \begin{align}
            \tail{\ssp{i}{q_u}.t_1.(\sigma, r) \ldots (\sigma, r) . t_k} = \tail{\ssp{i}{q}}
        \end{align}
        Using the same argument as used in the base case, we get equations~(1) and~(2).
        Then,
        \begin{align*}
            &\tail{t_0} = \tail{\ssp{i}{q_w}} ~(\text{equation~(1)})  \\
           \implies &\tail{t_0.(\sigma, r)} = \tail{\ssp{i}{q_w}.(\sigma, r)} \\
           \implies & \tail{t_0. (\sigma, r)} = \tail{\ssp{i}{q_u}} ~(\text{using equation~(2)}) \\
           \implies & \tail{t_0. (\sigma, r).t_1.(\sigma,g)\ldots(\sigma, r) . t_k} = \tail{\ssp{i}{q_u}.t_1.(\sigma, r)\ldots (\sigma, r).t_k} \\
           \implies & \tail{t_0. (\sigma, r).t_1.(\sigma,g)\ldots(\sigma, r) . t_k} = \tail{\ssp{i}{q}} ~(\text{using equation~(3)})
        \end{align*}
        Therefore, we get that $\tail{t} = \tail{\ssp{i}{q}}$.
        \end{proof}
        Now, let $u$ and $v$ be two simple words such that $\delta_i(q_{in}, u) = \delta_i(q_{in}, v) = q$ (say).
        If $q$ is a red state, then from Lemma~\ref{lem:same-state-same-tail-ssp} we can deduce that $\tail{u} = \tail{\ssp{i}{q}} = \tail{v}$. 
        On the other hand, if $q$ is not a red state, then we know there exists a unique red state $q_1$ from which $q$ can be reached via following a path, on some simple word $t$, that does not visit any red states. Then, $\ssp{i}{q} = \ssp{i}{q_1}.t$. Since the two words $u$ and $v$ both lead to $q$, it must be the case that $t$ is a suffix of both $u$ and $v$, i.e. $u = u_0. t$ and $v = v_0 . t$, where $\delta_i(q_{in}, u_0) = q_1 = \delta_i(q_{in}, v_0)$. Then, again, from Lemma~\ref{lem:same-state-same-tail-ssp} we get that $\tail{u_0} = \tail{\ssp{i}{q_1}} = \tail{v_0}$ and adding $t$ to the suffix, we get that $\tail{u_0. t} = \tail{\ssp{i}{q_1}.t} = \tail{v_0. t}$, that is, $\tail{u} = \tail{\ssp{i}{q}} = \tail{v}$.  
        
        \item
        3. Since the path $q \xlongrightarrow{y} q_f$ does not visit any red states, and since the algorithm \leap\ does not modify such paths, 
        there was a path $q' \xlongrightarrow{y} q_f$ in $\Aa_{i-1}$ and $q,q'$ have been merged/folded during the $i$-th step. 
        From the induction hypothesis~(\ref{IH:L_i_in_L}), we get that $\ssp{i-1}{q'}.y \in L$ and hence $y \in \tail{\ssp{i-1}{q'}}$. Now, since $\delta_i(q_{in}, \ssp{i-1}{q'}) = q = \delta_i(q_{in}, \ssp{i}{q})$, from the induction step~(\ref{IH:same_state_same_tail}) we get that $\tail{\ssp{i-1}{q'}} = \tail{\ssp{i}{q}}$ and therefore $y \in \tail{\ssp{i}{q}}$.
        
        \item 
        
        4. Let $t \in L(\A_{i})$.
        If the $i$-th step of \leap\ was a promotion, then $L(\A_{i-1}) = L(\A_i)$ and the statement holds from outer induction hypothesis~(\ref{IH:L_i_in_L}).
        Now, let us consider the case when at the $i$-th iteration, 
        two states have been merged and subsequently a few pairs of states (possibly none) have been folded.  Consider the run $\rho := q_{in} \xlongrightarrow{t}q_f$ of $\A_i$ on the word $t$. There are now two possibilities:
            \item 
            
            (i). $\rho$ only visits red states. Now, either (i) $q_f$ was also an accepting state in $\Aa_{i-1}$, or (ii) $q_f$ was not an accepting state in $\Aa_{i-1}$, in which case, an accepting state $q_f'$ in $\Aa_{i-1}$ has been merged/folded with $q_f$ during the $i$-th iteration. 
            In the first case, we have that $\ssp{i-1}{q_f} \in L$ (since from the induction hypothesis~(\ref{IH:L_i_in_L}), we know that $\lang(\Aa_{i-1}) \subseteq L$). Now, since $\delta_i(q_{in}, \ssp{i-1}{q_f}) = q_f = \delta_i(q_{in}, t)$, we get that $\tail{\ssp{i-1}{q_f}} = \tail{t}$ and therefore, $t \in L$.
            In the second case, we have $\ssp{i-1}{q_f'} \in L$. Then, following similar arguments as in the first case, we can deduce that $\tail{\ssp{i-1}{q_f'}} = \tail{t}$. We can then conclude that  $t \in L$ as well.
            
            \item 
            
            (ii). $q_f$ is not a red state. Then, let $q$ be the last red state present in $\rho$, \emph{i.e.}, $\rho = q_{in} \xlongrightarrow{t_1}q\xlongrightarrow{t_2}q_f$, for some $t_1, t_2 \in (\Sigma\times \Region)^*$, where $q \xrightarrow{t_2} q_f$ does not visit any red states. Then from induction hypothesis~(\ref{IH:intermediary}), we get $t_2 \in \tail{\ssp{i}{q}}$, and further, from induction hypothesis~(\ref{IH:same_state_same_tail}), we get that $\tail{\ssp{i}{q}} = \tail{t_1}$. Therefore, $t_2 \in \tail{t_1}$ and hence, $t_1.t_2\in L$, \emph{i.e.}, $t\in L$.

        \item 
        
        5. Let $u$ and $v$ be two words as in the hypothesis. We will show that $\lang(\A_i^{q_u,q_v}) \subseteq L$.
        For ease of notation, we denote $\A_i^{q_u,q_v}$ by $\tilde{\A_i}$.
        Assume $v = w.(\sigma, r)$, and $q_w$ is the unique state that is reached after reading $w$ in $\A_i$. We know that the only new transition in $\tilde{\A_i}$ that was not present in $A_i$ is \blue{$\zeta := q_w \xrightarrow[r]{\sigma}q_u$}.
        
        Let $t\in \lang(\tilde{\A_i})$; we will show that $t \in L$. Consider the run $\rho$ of $t$ in $\tilde{\A_i}$. 
        We prove the result by induction on $k$, that is, the number of occurrences of $\zeta$ in~$\rho$.
        \medskip

        \textbf{Base case ($k = 0$):} 
        In this case, there are three possibilities:
        \item 
        
        (i) $\rho$ is present in $\A_i$ and $q_f$ is an accepting state: then the result holds by outer induction hypothesis~(\ref{IH:L_i_in_L}).

        \item
        (ii) $\rho$ is present in $\Aa_i$, and $q_f$ is not an accepting state: this means, during the merge and subsequent folds, $q_f$ has been merged/folded with an accepting state $q_f'$ of~$\Aa_i$. 
        Since, $q_u$ and $q_v$ have been merged at this step, it must be the case that $q_f'$ is a successor of either $q_u$ or $q_v$. 
        However, if $q_f$ is a successor of $q_v$, since $q_v$ is not red, $q_v$ has only one predecessor $q_w$ and then $\rho$ must visit $\zeta$, that is \blue{$q_w \xrightarrow[r]{\sigma} q_u$}, in $\tilde{\A_i}$ to reach $q_f$, which contradicts the hypothesis that $\rho$ never visits $\zeta$ in $\tilde{\A_i}$.
        Therefore, we know that in $\Aa_i$, there is a path $q_v \xlongrightarrow{y} q_f'$. 
        Then, $\ssp{i}{q_f'} \in L$ since we have shown that $\lang(\Aa_i) \subseteq L$.
        Since $q_v$ is not a red state in $\Aa_i$, $q_f'$ is not red either, and then we know that there is exactly one path from $q_v$ to $q_f'$.
        This implies, $\ssp{i}{q_f'} = \ssp{i}{q_v}.y$. 
        Since $q_f$ has been folded with $q_f'$ and $q_f'$ is a successor of $q_v$, we know that there is a path in $\Aa_i$ from $q_u$ to $q_f$ and moreover, the path must be on the same word $y$, \emph{i.e.} $\Aa_i$ contains the path $q_u \xlongrightarrow{y} q_f$. Then, we have the following:
        \begin{align*}
            & \tail{u.y} = \tail{\ssp{i}{q_f}} ~(\text{using induction step~(\ref{IH:same_state_same_tail})})\\
            \implies & \tail{v.y} = \tail{\ssp{i}{q_f}} ~(\text{since } \tail{u} = \tail{v}) \\
            \implies & \tail{\ssp{i}{q_v}.y} = \tail{\ssp{i}{q_f}} ~(\text{since } \delta_i(q_{in}, v) = q_v = \delta_i(q_{in}, \ssp{i}{q_v})) \\
            \implies & \tail{\ssp{i}{q_f'}} = \tail{\ssp{i}{q_f}} ~(\text{since } \ssp{i}{q_f'} = \ssp{i}{q_v}.y)\\
            \implies & \ssp{i}{q_f} \in L ~(\text{since } \ssp{i}{q_f'} \in L)
        \end{align*}
        Now, since $\rho$ is also a path in $\Aa_i$, we have $\delta_i(q_{in}, t) = q_f$. Then, using induction step~(\ref{IH:same_state_same_tail}) we get $\tail{t} = \tail{\ssp{i}{q_f}}$. Therefore, $t \in L$, since $\ssp{i}{q_f} \in L$.

        \item 
        (iii) $\rho$ is not a path in $\Aa_i$: note that, $q_f$ cannot be red, since otherwise $\rho$ would also be present in $\Aa_i$. Then, 
        $\rho$ can be written in the following from: $q_{in} \xlongrightarrow[\in \A_i]{t_0} q \xlongrightarrow [\notin\A_i] {t_1} q_f$, where $q$ is the last red state in $\rho$. 
        Notice that, since $\rho$ never visits $\zeta$, the first part of the run is in $\A_{i}$.
        Since $q \xlongrightarrow{t_1} q_f$ was not present in $\Aa_i$, and this path does not visit any red states in $\Aa_i$, there must exist a state $q'$ from which there is a path $q' \xlongrightarrow{t_1} q_f$; further, during the merge of $q_u$ and $q_v$, the two states $q$ and $q'$ have been merged/folded.
        Now, similar to the argument in case (ii), we can argue that $q'$ must be a successor of $q_v$, i.e. there exists a path $q_v \xlongrightarrow{y} q'$ in $\Aa_i$.
        Since $q$ and $q'$ have been merged/folded, it must be the case that there is also a path $q_u \xlongrightarrow{y} q$ in $\Aa_i$. We then have the following:
        \begin{align*}
            & \tail{u.y} = \tail{t_0} ~(\text{since } \delta_i(q_{in}, u.y) = q = \delta_i(q_{in}, t_0)) \\
            \implies & \tail{v.y} = \tail{t_0} ~(\text{since } \tail{u} = \tail{v}) \\
            \implies & t = t_0.t_1 \in L ~(\text{since } v.t.t_1 \in \lang(\Aa_i) \subseteq L)
        \end{align*}
        Hence, in all the three cases we get that $t \in L$.
        
        \medskip
        
        \textbf{Induction hypothesis:} We assume that the result holds when the number of occurrences of $\zeta$ is less than $k$.
        \medskip
        
        \textbf{Induction step:} Suppose there are $k$ occurrences of $\zeta$ in $\rho$. Then $\rho$ can be written in the following form:
        \[
        \rho =
        q_{in} \xlongrightarrow[\in \A_i]{t_0} 
        \blue{q_w \xrightarrow[r]{\sigma}
        q_u }\xlongrightarrow[\in \A_i]{t_1} 
        \blue{q_w \xrightarrow[r]{\sigma}
        q_u}
        \cdots
        \blue{q_w \xrightarrow[r]{\sigma}
        q_u} \xlongrightarrow{t_{k}}q_f,
        \]
        where, for every $0 \le i < k$, $t_i \in \RW{\Sigma,K}$, and that every transition in $t_i$ is in $A_i$.
        Consider another word $t'$ and the following run $\rho'$ on $t'$:
        \[
        \rho' =
        q_{in} \xlongrightarrow[]{u} 
        q_u \xlongrightarrow[\in \A_i]{t_1} 
        \blue{q_w \xrightarrow[r]{\sigma}
        q_u}
        \cdots
        \blue{q_w \xrightarrow[r]{\sigma}
        q_u} \xlongrightarrow{t_{k}}q_f
        \]
Since there are $k-1$ occurrences of $\zeta$ in $\rho'$, from inner induction hypothesis, we get that 
$t'  = u.t_1.(\sigma, r). t_2.\ldots (\sigma, r).t_{k}\in L$. Then, using the fact that $\tail{u} = \tail{v}$, and writing $w.(\sigma, r)$ for $v$, we get, $w.(\sigma, r).t_1.(\sigma, r). t_2.\ldots (\sigma, r).t_{k}\in L$.
Finally, since both $t_0$ and $w$ goes to $q_w$ in $\A_i$, using outer induction hypothesis~(\ref{IH:same_state_same_tail}), we get $\tail{t_0} = \tail{w}$. Hence, we conclude $t_0.(\sigma, r).t'\in L$.
    We have proved that $\lang(\A_i^{q_u,q_v}) \subseteq L$. As a consequence, we get $\lang(\Aa_i^{q_u, q_v}) \cap \sminus = \emptyset$.
\end{description}
\end{proof}

Now we can prove Theorem~\ref{th:completeness}.

\begin{proof}[Proof (of Theorem~\ref{th:completeness})]
    The first result follows from Lemma~\ref{lem:finitely-many-tails}. Here we prove the second.
    
    From Lemma~\ref{lem:leap-invariants} (\ref{IH:L_i_in_L}), we get that $\lang(\Aa_n) \subseteq L$. For the reverse direction, let us consider a simple word $w \in L$.
    There exists a word $t \in \sp{L}$ such that $\tail{t} = \tail{w}$. 
    Since $\sp{L} \subseteq \ker{L}$ (Lemma~\ref{lem:sp-subset-nl}), we have that $t \in \ker{L}$. Moreover, since $w \in L$, $t \in L$ as well, then from the construction of the characteristic sample (Definition~\ref{def:char-set-sdera}) we know that $t \in \splus$. 
    Then, again from Lemma~\ref{lem:dfa-language-preserved-leap}, we know that $t \in \lang(\Aa_n)$. 
    If there is a path in $\Aa_n$ on the word $w$, then from Lemma~\ref{lem:leap-invariants} (\ref{IH:same_tail_merge}), we can infer that $\delta_n(q_{in}, w) = \delta_n(q_{in}, t) \in F_n$, and therefore $w \in \lang(\Aa_n)$.
    Otherwise, there is no run of $\Aa_n$ on the word $w$. 
    In that case, $w$ is of the form $w'.(\sigma, r).w''$ such that there is a run $q_{in} \xlongrightarrow{w'} q$ and there is no transition outgoing from $q$ of the form $(q, \sigma, r, *)$.
    From Definition~\ref{def:sp-sdera}, there exists a word $t' \in \sp{L}$ such that $\tail{t'} = \tail{w'}$. Now, since $w = w'.(\sigma, r).w'' \in L$, we get that $t'.(\sigma,g).w'' \in L$. This implies, $t'. (\sigma, r) \in \ker{L}$.
    From Definition~\ref{def:char-set-sdera}, we know that $\splus$ contains a simple word $t = t'.(\sigma, r).t''$. Therefore, there must exist a run of $\Aa_n$ on $t$. 
    Since $\Aa_n$ has runs on both $w'$ and $t'$ and, moreover, since $\tail{t'} = \tail{w'}$, from Lemma~\ref{lem:leap-invariants}(5) we know that $\delta_n(q_{in}, w') = \delta_n(q_{in}, t')$.
    This contradicts the fact that there are no outgoing transitions $(q,\sigma,g,*)$.
\end{proof}

The characteristic sets that we construct (as stated in Theorem~\ref{th:completeness}) consist of only region words, and the size of these sets can be exponential in the worst case w.r.t. the size of a minimal $K$-DERA recognizing $L$, as justified in Corollary~\ref{cor:no-poly-ch-set}.
Due to the large number of samples present in these characteristic sets, it is not practical to use \leap\ on such sets, but it is still interesting (from a theoretical point of view) to prove that \leap\  indeed terminates and returns the correct language. 
On the other hand, notice that Corollary~\ref{cor:no-poly-ch-set} does not rule out the possibility of the existence of characteristic sets of polynomial size  containing general symbolic timed words. We plan on exploring this possibility in the future. Indeed, this possibility is substantiated by the empirical evidence provided in Section~\ref{sec:experiments}, where we will demonstrate that, in practice, one can often construct reasonably small sample sets for ERA-recognizable languages, containing general symbolic words, that are correct for \leap.

\section{Empirical evaluation}
\label{sec:experiments}

\begin{table}[t]
    \caption{$|S| = |\splus| + |\sminus|$; 
   $\#$queries  
    for $\mathsf{tLsep}$ is the number of membership and inclusion queries; for $\mathsf{LearnTA}$, this is the number of membership and equivalence queries. Times are reported in seconds; `TO'  denotes a timeout of 900s.} 
    \centering
    \begin{tabular}{|c||c|c|c||c|c|c||c|c|c||c|c|c||}
         \hline
     & \multicolumn{3}{c||}{\leap (zones)} & \multicolumn{3}{c||}{\leap (regions)\footnotemark{}} & \multicolumn{3}{c||}{$\mathsf{tLsep}$~\cite{ATVA2024}} & \multicolumn{3}{c||}{$\mathsf{LearnTA}$~\cite{Waga23}} \\
    \hline
     Model & $|S|$ & $|Q|$ & time & $|S|$ & $|Q|$ & time & $\#$queries & $|Q|$ & time & $\#$queries & $|Q|$ & time \\
    \hline
    \hline
    Figure~\ref{fig:motivation-2} & 8 & 3 & 0.10 & 75 & 3 & 87.29 & 339 & 6 & 6.43 & 131 & 3 & 0.01\\
    \hline
    Figure~\ref{fig:dera}  & 7 & 3 & 0.13 & 56 & 2 & 49.21 & 106 & 2 & 1.66 & 268 & 3& 0.01\\
    \hline
     Figure~\ref{fig:fig6} & 8 & 3 & 0.16 & 24 & 3 & 4.53 & 31 & 3 & 0.22 & 45 & 3 & 0.01\\
    \hline
    \hline
     Unbalanced-1 & 16 & 4 & 0.13 & -- & -- & TO & 438 & 9 & 10.64 & 2400 & 8 & 0.04  \\
    \hline
     Unbalanced-2 & 16 & 4 & 0.12 & -- & -- & TO & 1122 & 14 & 47.12 & 13671 & 12 & 0.17 \\
     \hline
     Unbalanced-3 & 16 & 4 & 0.12 & -- & -- & TO & 2124 & 18 & 343.34 & 46220 & 16 & 1.120 \\
     \hline
     PC & 176 & 19 & 147.85 & -- & -- & TO & -- & -- & TO & 1613401 & 38 & 607.45 \\
     \hline
    \hline
    $L_2$ & 10 & 4 & 0.20 & -- & -- & TO & 246 & 6 & 1.51 & 513 & 5 & 0.01 \\
    \hline
    $L_4$ & 16 & 6 & 0.90 & -- & -- & TO & 778 & 18 & 17.60 & 3739 & 17 & 0.15\\
    \hline
     $L_{8}$ & 28 & 10 & 12.39 & --  & -- & TO & -- & -- & TO & 141635 & 257 &  100.57 \\
     \hline
    \end{tabular}
    \label{tab:experiments}
\end{table}

We have implemented a prototype of $\leap$ in \textsc{Python}. The steps shown in Figures~\ref{fig:motivation-2} and~\ref{fig:running-ex-illustration} are outputs from this implementation. Here, we empirically compare \leap's performance with the implementations of other learning algorithms applicable to the class of ERA. 
\footnotetext{the samples for \leap(region) are constructed from the samples of \leap(zones) by \emph{systematically} breaking each symbolic timed word into all its included region words}

Existing passive learning approaches differ significantly from ours. Works such as \cite{VWW10,PMM17,CLRT22} learn from plain timed words rather than {\em symbolic timed words} as we do; \cite{PMM17,CLRT22} use only positive data, and \cite{VWW10,CLRT22} infer automata from a weaker class. 
A merging based algorithm was proposed in~\cite{GJP06}, however, it requires examples that meet certain completeness criteria, which our method does not,  and moreover,  implementation of their algorithm is not available.
These differences make a direct comparison of these algorithms with $\leap$ inappropriate. 
On the other hand, as remarked in the introduction of this article, $\leap$ can be used interactively, with the requirement engineer providing additional symbolic timed words representing unmet requirements, which $\leap$ then incorporates. This resembles an active learning approach where the user acts as the teacher and $\leap$ as the learner. Recent works propose active learning frameworks for ERA~\cite{ATVA2024} and Deterministic Timed Automata (DTA)~\cite{DTA-hscc,Waga23}. To the best of our knowledge, the implementation in~\cite{DTA-hscc} is not available.
Therefore, we compare our algorithm with following three methods: (1) a version of \leap\ restricted to region words (reported in the column ``\leap(regions)'' in Table~\ref{tab:experiments}), highlighting the advantage of working with general symbolic (zone) words (the column ``\leap(zones)'' in Table~\ref{tab:experiments}); (2)~{\sf tLsep}~\cite{ATVA2024}, a recent active learning tool for ERA; and (3)~{\sf LearnTA}~\cite{Waga23}, an active learning tool for deterministic TA. We use the number of samples (all the reported algorithms operate on symbolic words) as the metric for comparison: for active learning, it is the number of queries, while for passive learning, it is $|\splus| + |\sminus|$.

\begin{figure}[t]
    \centering
    \begin{tikzpicture}[scale = 1]
        \everymath{\scriptstyle}
        \begin{scope}[every node/.style={circle, draw, inner sep=2pt,
            minimum size = 3mm, outer sep=3pt, thick}] 
            \node [double] (0) at (0,0) {$q_0$}; 
            \node [] (1) at (2.1,0) {$q_1$}; 
            \node [double] (2) at (5.2,0) {$q_2$}; 
        \end{scope}
        \begin{scope}[->, thick]
            \draw (-0.7,0) to (0); 
            \draw [rounded corners] (0) to (-0.3,1) to (0.3, 1) to (0);
            \draw (0) to (1);
            \draw [rounded corners] (1) to (1.8,1) to (2.4, 1) to (1);
            \draw [bend left] (1) to (2);
            \draw (1) to (2);
            \draw [bend left] (2) to (1);
        \end{scope}
        \node at (0,1.2) {$a,~x_a > 0$};
        \node at (1.1,0.2) {$a,~x_a = 0$};
        \node at (2,1.2) {$a,~0 < x_a < 1$};
        \node at (3.5,0.2) {$a,~x_a = 0$};
        \node at (3.5,0.8) {$a,~x_a \geq 1$};
        \node at (3.5,-0.8) {$a,~x_a \geq 0$};
    \end{tikzpicture}
    \caption{Example ERA}
    \label{fig:fig6}
\end{figure}

Table~\ref{tab:experiments} summarizes the experimental results. The first two rows in the table represent timed languages recognized by the ERA in Figures~\ref{fig:motivation-2} and~\ref{fig:dera}, respectively. The third model corresponds to the language of the automaton in Figure~\ref{fig:fig6}, while \textsf{Unbalanced-$k$} and \textsf{PC} are all DTA models (ERA, after clock renaming)  used as benchmarks for \textsf{LearnTA}. 
The last three examples are instances of a family of languages $L_n$ that showcases the compactness of nondeterministic ERA over equivalent DERA, which
is formally defined as follows. Let $\Sigma = \{a, b\}$; then  $L_n$ is a timed language over $\Sigma_{\#} := \Sigma \cup \{{\#}\}$ that accepts timed words satisfying the conditions: (i) they begin with the symbol $\#$, (ii) $\#$ does not reappear, and (iii) the $n$-th last letter is $a$ which occurrs exactly 1 time unit after reading~$\#$. 

The experiments
were performed in a machine with 64GB of RAM running Fedora Linux~41. 
All the times that are reported in the table are reported in seconds, computed using the command-line tool \verb|time|. 
The tool $\mathsf{LearnTA}$ takes significantly less running time compared to the other tools, possibly because the former is written in \textsc{C++} while the remaining ones are written in \textsc{Python}. 

Note that, strangely, even though $\mathsf{tLsep}$~\cite{ATVA2024} promises to compute ERA with the minimum number of control states, for some examples in Table~\ref{tab:experiments} we see that this is not the case. This could perhaps be because, in the implementation of $\mathsf{tLsep}$~\cite{ATVA2024} the authors implement a heuristic which is more efficient but does not guarantee minimality anymore, or possibly $\mathsf{LearnTA}$ can learn more compact models since they can reset the clocks to arbitrary values and also can output automata from a larger class. 

\medskip
We now illustrate the main advantages of \leap\ as witnessed in our experiments. 

\subparagraph*{1. Small number of samples.}
As we can observe in Table~\ref{tab:experiments}, ``\leap(zones)'' requires significantly fewer symbolic words as input compared to the number of queries needed by $\mathsf{tLsep}$ and $\mathsf{LearnTA}$. This difference arises because the symbolic zone  words used as input for $\leap$ are based on specific intuitions from the informal requirements, that a requirements engineer (RE) can provide, delivering rich information about the target language. In contrast, active learning algorithms like $\mathsf{tLsep}$ and $\mathsf{LearnTA}$ must derive all aspects of the target language through {\em systematic} membership and equivalence queries.
While we do not claim that $\leap$ is universally superior to active learning algorithms, it is better suited for helping an RE craft an ERA from scenarios, as it avoids the need for the RE to respond to numerous membership queries, as required in active learning setups.

\subparagraph*{2. Insensitivity to the maximal constant.} 
Another advantage of being able to handle symbolic timed words directly is demonstrated by the two \textsf{Unbalanced} examples -- the two models here are identical, except the maximum constant being $1$ and $3$, respectively. Since symbolic timed words are not dependent on this maximum constant, the sample size required by \leap\ remains the same, while the performances of the active learning algorithms take a considerable hit, due to their sensitivity to this constant.

\subparagraph*{3. Advantages of zone words.} 
In Section~\ref{sec:leap-completeness-sdera}, we showed that for every ERA-recognizable language, there exists characteristic sets 
for which \leap\ returns the correct automaton, however, these sets contain only region words and can be prohibitively large. 
Even though handling region words is computationally easier than handling general symbolic timed words (Theorem~\ref{thm:complexity}), the smaller values in the column ``\leap(zones)'' compared to the ones in ``\leap(regions)'' showcase the advantages of working with symbolic timed words. Splitting them into region words can very quickly result in an enormous blowup in the number of samples, as evidenced by the models {\sf Unbalanced-}$k$ and $L_n$'s in the table. 
Symbolic timed words can usually represent large sets of region words at once.
The example \textsf{PC} is seemingly the largest (in terms of number of states) practical benchmark as mentioned in~\cite{Waga23}. Due to the conciseness of zone words, \leap\ manages to learn the automaton for this example, showing its ability to scale.

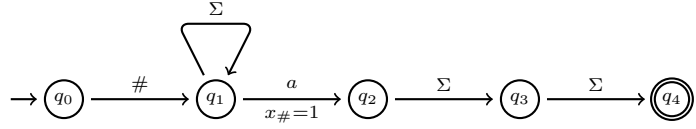
\begin{figure}[t]
    \centering
        \begin{tikzpicture}
        \everymath{\scriptstyle}
        \begin{scope}[every node/.style={circle, draw, inner sep=2pt,
            minimum size = 3mm, outer sep=3pt, thick}] 
            \node [] (0) at (0,0) {$q_0$}; 
            \node [] (1) at (2,0) {$q_1$};
            \node [] (2) at (4,0) {$q_2$};
            \node [] (3) at (6,0) {$q_3$};
            \node [double] (4) at (8,0) {$q_4$};
        \end{scope}
        \begin{scope}[->, thick]
            \draw (-0.7,0) to (0); 
            \draw (0) to (1);
            \draw [rounded corners] (1) to (1.5,1) to (2.5, 1) to (1);
            \draw (1) to (2);
            \draw (2) to (3);
            \draw (3) to (4);
        \end{scope}
        \node at (1,0.2) {$\#$};
        \node at (3,0.2) {$a$};
        \node at (3,-0.2) {$ x_{\#} = 1$};
        \node at (2,1.2) {$\Sigma$};
        \node at (5,0.2) {$\Sigma$};
        \node at (7,0.2) {$\Sigma$};
        
    \end{tikzpicture}
    \caption{An ERA recognizing the language $L_3$.}
    \label{fig:L3}
\end{figure}

\subparagraph*{4. Advantages of maintaining nondeterminism.} 
As noted earlier, each step in \leap\ produces a possibly non-deterministic ERA, allowing it to learn smaller models for a given language compared to existing algorithms in the literature, which target deterministic models~\cite{GJL10,Waga23,ATVA2024}. 
With standard techniques, one can show that any DERA recognizing $L_n$ defined earlier requires at least $O(2^n)$ states (also witnessed in the outputs of \textsf{tLSep} and \textsf{LearnTA}), while there exists non-deterministic ERA $A_n$ recognizing $L_n$ 
with only $n+2$ states (e.g., $A_3$ in Figure~\ref{fig:L3} recognizes $L_3$). 
To learn $L_n$, we construct a sample set $S_n$ of size linear in $n$ as defined below, containing symbolic words, which when provided to \leap, yields $A_n$. 
    \begin{align*}
    S^n_+ = \{&(\#, \top).(a,\top).(a, x_{\#} \le 1).(a, \top)^{n-1}, (\#, \top).(b, \top).(a,x_{\#} \le 1).(b, \top)^{n-1}\}; \\
    S^n_- = \{&\varepsilon, (\#,\top)(a,\top)^{n-1}, (\#,\top)(b,\top)^{n-1}, (a,\top)(b,\top)^{n-1},\\ &\{(a,\top)^i\mid 1 \le i \le {n+1}\},\\ &\{(b,\top)^i\mid 1 \le i \le {n-1}\},\\ &\{(\#,\top)(a,\top)(b,\top)^i \mid 1 \le i \le {n-2}\}\}.
    \end{align*}
Table~\ref{tab:experiments} compares the size of $S_n$ with the number of queries required by \textsf{tLSep} and \textsf{LearnTA} for various values of $n$. 

\medskip
Note that, the values reported for the tool $\mathsf{LearnTA}$ on the benchmarks taken from~\cite{Waga23} are different from the ones reported in~\cite{Waga23}. This is because, these are modelled as DTAs in~\cite{Waga23}, whereas here we model these as ERAs. The ERA models are larger in size and hence $\mathsf{LearnTA}$ takes more time.

\section{Discussion and future works}
\label{sec:conclusion}

In this work, we introduced a merging-based passive learning algorithm for ERA-definable languages. We proved that determining if a merge is possible is an $\np$-complete problem, and we have provided an SMT-based solution for this problem. As shown in our experiments, \leap\ successfully learns ERA from small sets of symbolic timed words as positive and negative samples. We established the completeness of \leap\ using a characteristic sample set, that is guaranteed to exist for all ERA-recognizable languages.

As demonstrated, \leap\ is well-suited to scenarios where a user guides the learning process, making symbolic timed words a natural input format. However, \leap\ can also be readily applied to timed words (obtained from logs for instance). With a known maximal constant $K$ and a fixed granularity for atomic constraints, a timed word can be converted to a region word by constructing its corresponding clocked word and replacing each valuation with its unique $K$-simple constraint. \leap\ can then be applied to these transformed samples.

\smallskip
\noindent\textbf{In future} we aim to define characteristic sets that include symbolic words. As noted in Section~\ref{sec:leap-completeness-sdera}, polynomial size characteristic sets containing symbolic words may still exist for ERA recognizable languages.
Secondly, we would like to explore the possibility of generating the examples using LLMs instead of handcrafting.
This is because, handcrafting many symbolic timed words (for instance, $176$ in the PC example), some being relatively long, may not be practical for a requirement engineer. However, this opens up an interesting potential application of Large Language Models (LLM), where, instead of handcrafting the examples, the requirement engineer may specify the requirement in natural language and ask the LLM to produce examples and only verify the validity of each of the examples before providing them to \leap. 
While asking to turn a natural language specification into a fully-fledged ERA is out of reach of LLMs, it appears that generating positive and negative examples in the form of symbolic timed words is feasible even with off-the-shelf LLMs. 
Further, an incremental approach to learning ERA, where the sample is presented one word at a time instead of all at once in the beginning, similar to the work in~\cite{Dupont96} for regular languages, would be an interesting direction to explore. 

In~\cite{DBLP:conf/tacas/BalachanderFR23}, 
a framework combines reactive synthesis from LTL specifications with example-based learning to produce a Mealy machine that satisfies an LTL formula and aligns with specified execution traces, independent of environmental behavior. Our merging algorithm for timed languages suggests a natural extension of this approach to real-time settings, where the  input can be a real-time formula (say, in MITL~\cite{AFH96}), along with symbolic words as \emph{hints}  that our algorithm could process.

\bibliographystyle{plain}
\bibliography{main}

\begin{thebibliography}{10}

\bibitem{SyGus}
Rajeev Alur, Rastislav Bod{\'{\i}}k, Garvit Juniwal, Milo M.~K. Martin, Mukund
  Raghothaman, Sanjit~A. Seshia, Rishabh Singh, Armando Solar{-}Lezama, Emina
  Torlak, and Abhishek Udupa.
\newblock Syntax-guided synthesis.
\newblock In {\em Formal Methods in Computer-Aided Design, {FMCAD} 2013,
  Portland, OR, USA, October 20-23, 2013}, pages 1--8. {IEEE}, 2013.

\bibitem{AD94}
Rajeev Alur and David~L. Dill.
\newblock A theory of timed automata.
\newblock {\em Theor. Comput. Sci.}, 126(2):183--235, 1994.

\bibitem{AFH96}
Rajeev Alur, Tom{\'{a}}s Feder, and Thomas~A. Henzinger.
\newblock The benefits of relaxing punctuality.
\newblock {\em J. {ACM}}, 43(1):116--146, 1996.

\bibitem{AFH99}
Rajeev Alur, Limor Fix, and Thomas~A. Henzinger.
\newblock Event-clock automata: {A} determinizable class of timed automata.
\newblock {\em Theor. Comput. Sci.}, 211(1-2):253--273, 1999.

\bibitem{AWZZZ21}
Jie An, Lingtai Wang, Bohua Zhan, Naijun Zhan, and Miaomiao Zhang.
\newblock Learning real-time automata.
\newblock {\em Sci. China Inf. Sci.}, 64(9), 2021.

\bibitem{DBLP:conf/tacas/BalachanderFR23}
Mrudula Balachander, Emmanuel Filiot, and Jean{-}Fran{\c{c}}ois Raskin.
\newblock {LTL} reactive synthesis with a few hints.
\newblock In {\em {TACAS} {(2)}}, volume 13994 of {\em Lecture Notes in
  Computer Science}, pages 309--328. Springer, 2023.

\bibitem{BGPSV24}
V{\'{e}}ronique Bruy{\`{e}}re, Bharat Garhewal, Guillermo~A. P{\'{e}}rez,
  Ga{\"{e}}tan Staquet, and Frits~W. Vaandrager.
\newblock Active learning of mealy machines with timers.
\newblock {\em CoRR}, abs/2403.02019, 2024.

\bibitem{Cook71}
Stephen~A. Cook.
\newblock The complexity of theorem-proving procedures.
\newblock In {\em {STOC}}, pages 151--158. {ACM}, 1971.

\bibitem{CLRT22}
L{\'{e}}na{\"{\i}}g Cornanguer, Christine Largou{\"{e}}t, Laurence Roz{\'{e}},
  and Alexandre Termier.
\newblock {TAG:} learning timed automata from logs.
\newblock In {\em {AAAI}}, pages 3949--3958. {AAAI} Press, 2022.

\bibitem{Dupont96}
Pierre Dupont.
\newblock Incremental regular inference.
\newblock In {\em {ICGI}}, volume 1147 of {\em Lecture Notes in Computer
  Science}, pages 222--237. Springer, 1996.

\bibitem{DBLP:journals/iandc/Gold67}
E.~Mark Gold.
\newblock Language identification in the limit.
\newblock {\em Inf. Control.}, 10(5):447--474, 1967.

\bibitem{DBLP:journals/iandc/Gold78}
E.~Mark Gold.
\newblock Complexity of automaton identification from given data.
\newblock {\em Inf. Control.}, 37(3):302--320, 1978.

\bibitem{GJL10}
Olga Grinchtein, Bengt Jonsson, and Martin Leucker.
\newblock Learning of event-recording automata.
\newblock {\em Theor. Comput. Sci.}, 411(47):4029--4054, 2010.

\bibitem{GJP06}
Olga Grinchtein, Bengt Jonsson, and Paul Pettersson.
\newblock Inference of event-recording automata using timed decision trees.
\newblock In {\em {CONCUR}}, volume 4137 of {\em Lecture Notes in Computer
  Science}, pages 435--449. Springer, 2006.

\bibitem{HJM20}
L{\'{e}}o Henry, Thierry J{\'{e}}ron, and Nicolas Markey.
\newblock Active learning of timed automata with unobservable resets.
\newblock In {\em {FORMATS}}, volume 12288 of {\em Lecture Notes in Computer
  Science}, pages 144--160. Springer, 2020.

\bibitem{holbrook1990scenario}
H~Holbrook~III.
\newblock A scenario-based methodology for conducting requirements elicitation.
\newblock {\em ACM SIGSOFT Software Engineering Notes}, 15(1):95--104, 1990.

\bibitem{uppaal}
Kim~Guldstrand Larsen, Paul Pettersson, and Wang Yi.
\newblock {UPPAAL} in a nutshell.
\newblock {\em Int. J. Softw. Tools Technol. Transf.}, 1(1-2):134--152, 1997.

\bibitem{LADSL11}
Shang{-}Wei Lin, {\'{E}}tienne Andr{\'{e}}, Jin~Song Dong, Jun Sun, and Yang
  Liu.
\newblock An efficient algorithm for learning event-recording automata.
\newblock In {\em {ATVA}}, volume 6996 of {\em Lecture Notes in Computer
  Science}, pages 463--472. Springer, 2011.

\bibitem{ATVA2024}
Anirban Majumdar, Sayan Mukherjee, and Jean{-}Fran{\c{c}}ois Raskin.
\newblock Greybox learning of languages recognizable by event-recording
  automata.
\newblock In {\em {ATVA}}, volume 15054 of {\em Lecture Notes in Computer
  Science}, pages 235--256. Springer, 2024.

\bibitem{oncina1992}
Jos{\'e} Oncina and Pedro Garcia.
\newblock Inferring regular languages in polynomial updated time.
\newblock In {\em Pattern recognition and image analysis: selected papers from
  the IVth Spanish Symposium}, pages 49--61. World Scientific, 1992.

\bibitem{PMM17}
Fabrizio Pastore, Daniela Micucci, and Leonardo Mariani.
\newblock Timed k-tail: Automatic inference of timed automata.
\newblock In {\em {ICST}}, pages 401--411. {IEEE} Computer Society, 2017.

\bibitem{TALL19}
Martin Tappler, Bernhard~K. Aichernig, Kim~Guldstrand Larsen, and Florian
  Lorber.
\newblock Time to learn - learning timed automata from tests.
\newblock In {\em {FORMATS}}, volume 11750 of {\em Lecture Notes in Computer
  Science}, pages 216--235. Springer, 2019.

\bibitem{TAL22}
Martin Tappler, Bernhard~K. Aichernig, and Florian Lorber.
\newblock Timed automata learning via {SMT} solving.
\newblock In {\em {NFM}}, volume 13260 of {\em Lecture Notes in Computer
  Science}, pages 489--507. Springer, 2022.

\bibitem{DTA-hscc}
Yu~Teng, Miaomiao Zhang, and Jie An.
\newblock Learning deterministic multi-clock timed automata.
\newblock In {\em {HSCC}}, pages 6:1--6:11. {ACM}, 2024.

\bibitem{DBLP:journals/cacm/Vaandrager17}
Frits~W. Vaandrager.
\newblock Model learning.
\newblock {\em Commun. {ACM}}, 60(2):86--95, 2017.

\bibitem{VEB23}
Frits~W. Vaandrager, Masoud Ebrahimi, and Roderick Bloem.
\newblock Learning mealy machines with one timer.
\newblock {\em Inf. Comput.}, 295(Part {B}):105013, 2023.

\bibitem{Lsharp}
Frits~W. Vaandrager, Bharat Garhewal, Jurriaan Rot, and Thorsten Wi{\ss}mann.
\newblock A new approach for active automata learning based on apartness.
\newblock In {\em {TACAS} {(1)}}, volume 13243 of {\em Lecture Notes in
  Computer Science}, pages 223--243. Springer, 2022.

\bibitem{VWW08}
Sicco Verwer, Mathijs de~Weerdt, and Cees Witteveen.
\newblock Polynomial distinguishability of timed automata.
\newblock In {\em {ICGI}}, volume 5278 of {\em Lecture Notes in Computer
  Science}, pages 238--251. Springer, 2008.

\bibitem{VWW09}
Sicco Verwer, Mathijs de~Weerdt, and Cees Witteveen.
\newblock One-clock deterministic timed automata are efficiently identifiable
  in the limit.
\newblock In {\em {LATA}}, volume 5457 of {\em Lecture Notes in Computer
  Science}, pages 740--751. Springer, 2009.

\bibitem{VWW10}
Sicco Verwer, Mathijs de~Weerdt, and Cees Witteveen.
\newblock A likelihood-ratio test for identifying probabilistic deterministic
  real-time automata from positive data.
\newblock In {\em {ICGI}}, volume 6339 of {\em Lecture Notes in Computer
  Science}, pages 203--216. Springer, 2010.

\bibitem{Waga23}
Masaki Waga.
\newblock Active learning of deterministic timed automata with myhill-nerode
  style characterization.
\newblock In {\em {CAV} {(1)}}, volume 13964 of {\em Lecture Notes in Computer
  Science}, pages 3--26. Springer, 2023.

\bibitem{octa}
Runqing Xu, Jie An, and Bohua Zhan.
\newblock Active learning of one-clock timed automata using constraint solving.
\newblock In {\em {ATVA}}, volume 13505 of {\em Lecture Notes in Computer
  Science}, pages 249--265. Springer, 2022.

\end{thebibliography}

\end{document}